\documentclass[12pt]{amsart}
\usepackage{amsmath,amsthm,amssymb,bm,bbm,dsfont}
\usepackage{amsaddr}
\usepackage{upgreek}
\usepackage[left=2cm,right=2cm,top=2cm,bottom=2cm]{geometry}
\usepackage{setspace}
\pagestyle{plain}
\usepackage{graphicx}
\usepackage{verbatim}
\usepackage{float}
\restylefloat{table}
\usepackage{placeins}
\usepackage{array}
\usepackage{booktabs}
\usepackage{threeparttable}
\usepackage[update,prepend]{epstopdf}
\usepackage{multirow}
\usepackage{amsfonts,amssymb,dsfont}
\usepackage{mathrsfs}  
\usepackage[usenames,dvipsnames]{color}
\usepackage[hidelinks]{hyperref}
\hypersetup{
	unicode=false,          
	pdftoolbar=true,        
	pdfmenubar=true,        
	pdffitwindow=false,     
	pdfstartview={FitH},    
	pdftitle={My title},    
	pdfauthor={Author},     
	pdfsubject={Subject},   
	pdfcreator={Creator},   
	pdfproducer={Producer}, 
	pdfkeywords={keyword1} {key2} {key3}, 
	pdfnewwindow=true,      
	colorlinks=true,        
	linkcolor=Red,          
	citecolor=ForestGreen,  
	filecolor=Magenta,      
	urlcolor=BlueViolet,    
}
\usepackage{doi}
\usepackage{url}
\usepackage{caption, subcaption}
\usepackage{enumitem}

\makeatletter
\ifx\@NODS\undefined%

\let\mathbb=\mathds
\else%
\fi
\makeatother

\usepackage[backend=bibtex,
hyperref=true,
url=true,
isbn=true,
style=ieee,
sorting=none,
block=none]{biblatex}
\bibliography{reference.bib}




\newcommand{\be}{{\mathbf e}}

\def\0{{\mathbf{0}}}
\def\1{{\mathbf{1}}}
\def\2{{\mathbf{2}}}
\def\3{{\mathbf{3}}}
\def\4{{\mathbf{4}}}
\def\5{{\mathbf{5}}}
\def\6{{\mathbf{6}}}

\def\7{{\mathbf{7}}}
\def\8{{\mathbf{8}}}
\def\9{{\mathbf{9}}}


\def\be{\begin{equation}}
\def\ee{\end{equation}}
\def\bea{\begin{eqnarray}}
\def\eea{\end{eqnarray}}





%


%

\hyphenation{op-tical net-works semi-conduc-tor}


\theoremstyle{plain}
\newtheorem{theo}{Theorem}[section]

\newtheorem{prop}{Proposition}[section]
\newtheorem{lemm}{Lemma}[section]
\newtheorem{coro}{Corollary}[section]

\theoremstyle{definition}
\newtheorem{defn}{Definition}[section]

\theoremstyle{remark}
\newtheorem{remark}{Remark}[section]

\numberwithin{equation}{section}

\newcommand\xqed[1]{%
	\leavevmode\unskip\penalty9999 \hbox{}\nobreak\hfill
	\quad\hbox{#1}}
\newcommand\Endremark{\xqed{$\Diamond$}}

\DeclareMathOperator{\Tr}{Tr}
\DeclareMathOperator{\tr}{\overline{tr}}

\begin{document}

\let\origmaketitle\maketitle
\def\maketitle{
	\begingroup
	\def\uppercasenonmath##1{} 
	\let\MakeUppercase\relax 
	\origmaketitle
	\endgroup
}

\title{{Characterisations of Matrix and Operator-Valued $\Phi$-Entropies, and Operator Efron-Stein Inequalities}}
\author{{Hao-Chung Cheng$^{1,2}$ and Min-Hsiu Hsieh$^2$}}
\address{\small  $^{1}$Graduate Institute Communication Engineering, National Taiwan University, Taiwan (R.O.C.) \\ $^{2}$Centre for Quantum Computation and Intelligent Systems, \\Faculty of Engineering and Information Technology, University of Technology Sydney, Australia} 
\email{\href{mailto:F99942118@ntu.edu.tw}{F99942118@ntu.edu.tw}$^1$}
\email{\href{mailto:Min-Hsiu.Hsieh@uts.edu.au}{Min-Hsiu.Hsieh@uts.edu.au}$^2$}

\begin{abstract}
We derive new characterisations of the matrix $\mathrm{\Phi}$-entropy functionals introduced in [\href{http://dx.doi.org/10.1214/ejp.v19-2964}{\textit{Electron.~J.~Probab., \textbf{19}(20): 1--30, 2014}}]. Notably, all known equivalent characterisations of the classical $\Phi$-entropies have their matrix correspondences. Next, we propose an operator-valued generalisation of the matrix  $\Phi$-entropy functionals, and prove their subadditivity under L\"owner partial ordering. Our results demonstrate that the subadditivity of operator-valued $\Phi$-entropies is equivalent to the convexity of various related functions. This result can be used to demonstrate an interesting result in quantum information theory: the matrix $\Phi$-entropy of a quantum ensemble is \emph{monotone} under unital quantum channels.
Finally, we derive the operator Efron-Stein inequality to bound the operator-valued variance of a random matrix.
\end{abstract}

\maketitle

\section{Introduction} 

The introduction of $\mathrm{\Phi}$-entropy functionals can be traced back to the early days of information theory \cite{Csi63,Csi67} and convex analysis \cite{AS66,BR82a,BR82b,BR82c}, where the notion of $\phi$-divergence is defined.  Formally, given a non-negative real random variable $Z$ and a smooth convex function $\Phi$,
the $\Phi$-entropy functional refers to
\[
H_\mathrm{\Phi}(Z) = \mathbb{E} \mathrm{\Phi}(Z) - \mathrm{\Phi}(\mathbb{E} Z). 
\]
By Jensen's inequality, it is not hard to see that the quantity $H_{\Phi}(Z)$ is non-negative. Hence, the $\Phi$-entropy functional can be used as an entropic measure to characterise the uncertainty of the random variable $Z$.

The investigation of general properties of classical $\mathrm{\Phi}$-entropies has enjoyed great success in physics, probability theory, information theory and computer science. Of these, the \textit{subadditivity} (or the \emph{tensorisation}) property \cite{Han78, BL97,Led97} has led to the derivations of the logarithmic Sobolev \cite{Gro75}, $\mathrm{\Phi}$-Sobolev \cite{LO00} and Poincar\'{e} inequalities \cite{ABC+00}, which in turn, is a crucial step toward the powerful \emph{entropy method} in concentration inequalities \cite{BBM+05,Mas07,BLM13}
and analysis of Markov semigroups \cite{BGL13}. 



Let $Z=f(X_1,\cdots,X_n)$ be a random variable defined on $n$ independent random variables $(X_1,\cdots,X_n)$. We say $H_\mathrm{\Phi}(Z)$ is \textit{subadditive} if 
\[
H_\mathrm{\Phi}(Z)\leq  \sum_{i=1}^n \mathbb{E} \left[
\mathbb{E}_i \mathrm{\Phi}(Z) -\mathrm{\Phi}(\mathbb{E}_iZ)
\right],
\]
where 
$\mathbb{E}_i$ denotes the conditional expectation with respect to $X_i$. 
L.~Gross first observed that the ordinary entropy functional $H_{u\log u}(Z)$ is subadditive in his seminal paper \cite{Gro75}. Later on, equivalent characterisations of the subadditive entropy class (see Theorem \ref{thm_classical}) are established \cite{LO00, Cha03, Cha06}, which prove to be useful in other contexts such as stochastic processes \cite{Cha03, Cha06}.

Parallel to the classical $\Phi$-entropies, Chen and Tropp \cite{CT14} introduced the notion of matrix $\mathrm{\Phi}$-entropy functionals. Namely, for a positive semi-definite random matrix $\bm{Z}$, the matrix
$\mathrm{\Phi}$-entropy functional is defined as
\[ \label{eq:tentropy}
H_\mathrm{\Phi}(\bm{Z})\triangleq \tr\left[\mathbb{E}\mathrm{\Phi}(\bm{Z})-\mathrm{\Phi}(\mathbb{E}\bm{Z})\right],
\]
where $\tr$ is the normalised trace.
The class of subadditive matrix $\Phi$-entropy functionals is  characterised in terms of the second derivative of their representing functions. Unlike its classical counterpart, only a few connections between the matrix $\mathrm{\Phi}$-entropy functionals and  other convex forms of the same functions have been established  \cite{HZ14, PV14} prior to this current work.  

In this paper, we establish equivalent characterisations of the matrix $\mathrm{\Phi}$-entropy functionals defined in \cite{CT14}. Our results show that matrix $\mathrm{\Phi}$-entropy functionals satisfy all known equivalent statements that classical $\mathrm{\Phi}$-entropy functions satisfy \cite{Cha03, Cha06, BLM13}. Our results provide additional justification to its original definition of the matrix $\mathrm{\Phi}$-entropy functionals (see Table \ref{table:trace}). The equivalences between matrix $\mathrm{\Phi}$-entropy functionals and other convex forms of the function $\mathrm{\Phi}$ advance our understanding of the class of entropy functions. Moreover, it allows to unify the study of matrix concentration inequalities and matrix $\Phi$-Sobolev inequalities \cite{CH1,CH2}.  

Furthermore, we consider the following operator-valued generalisation of matrix $\Phi$-entropy functionals:
\[
\bm{H}_\Phi(\bm{Z}) \triangleq \mathbb{E} \Phi(\bm{Z}) - \Phi(\mathbb{E}\bm{Z}).
\]
A  special case of this operator-valued $\Phi$-entropy functional is the operator-valued variance $\textbf{Var}(\bm{Z})$ defined in \cite{Tro15} and \cite{PMT14}, where $\Phi$ is the square function.
The equivalent conditions for the subadditivity under L\"owner partial ordering are  derived (Theorem~\ref{theo:operator}). In particular, we show that subadditivity of the operator-valued $\Phi$-entropies is equivalent to the convexity:
\[
\text{\textmd{Subadditvity} of } \bm{H}_{\Phi}(\bm{Z}) \; \Leftrightarrow\; \bm{H}_{\Phi}(\bm{Z}) \text{ is convex in } \bm{Z}.
\]
Our result directly yields the \emph{Operator Efron-Stein inequality}, which recovers the well-known Efron-Stein inequality \cite{ES81,Ste86} when random matrices reduce to real random variables.

\subsection{Our Results} \label{result}

\begin{table}[!h]
	\caption{Comparison between the equivalent characterisations of $\mathrm{\Phi}$-entropy functional class \textbf{(C1)} (Definition \ref{defn:C1}) and matrix $\mathrm{\Phi}$-entropy functional class \textbf{(C2)} (Definition \ref{defn:entropy})
	}\label{table:trace}
	\resizebox{\textwidth}{!}{
	\centering
		\begin{tabular}{c|l|l}
			\hline
			& \hspace*{\fill}\textbf{Classical $\mathrm{\Phi}$-Entropy Functional Class (C1)}\hspace*{\fill} & \hspace*{\fill}\textbf{Matrix $\mathrm{\Phi}$-Entropy Functional Class (C2)}\hspace*{\fill}\\
			\hline
			(a) & $\mathrm{\Phi}$ is affine or $\mathrm{\Phi}''>0$ and $1/\mathrm{\Phi}''$ is concave & $\mathrm{\Phi}$ is affine or $\mathsf{D}\mathrm{\Phi}'$ is invertible and $(\mathsf{D}\mathrm{\Phi}')^{-1}$ is concave\\
			(b) & convexity of $(u,v)\mapsto \mathrm{\Phi}(u+v) -\mathrm{\Phi}(u) - \mathrm{\Phi}'(u)v$ & convexity of $(\bm{u},\bm{v})\mapsto 
			\text{Tr}\left[\mathrm{\Phi}(\bm{u}+\bm{v}) -\mathrm{\Phi}(\bm{u}) - \mathsf{D}\mathrm{\Phi}[\bm{u}](\bm{v})\right]$ \\
			(c) & convexity of $(u,v)\mapsto (\mathrm{\Phi}'(u+v) - \mathrm{\Phi}'(u))v$ & convexity of $(\bm{u},\bm{v})\mapsto \text{Tr}\left[(\mathsf{D} \mathrm{\Phi}[\bm{u}+\bm{v}](\bm{v}) - \mathsf{D}\mathrm{\Phi}[\bm{u}](\bm{v})\right]$\\
			(d) & convexity of $(u,v)\mapsto \mathrm{\Phi}''(u) v^2$ & convexity of $(\bm{u},\bm{v})\mapsto \text{Tr}\left[\mathsf{D}^2\mathrm{\Phi}[\bm{u}](\bm{v},\bm{v})\right]$ \\
(e) & $\mathrm{\Phi}$ is affine or $\mathrm{\Phi}''>0$ and $\mathrm{\Phi}'''' \mathrm{\Phi}''\geq 2\mathrm{\Phi}'''^2$ & Equation \eqref{eq:(f)} \\
(f) & convexity of $(u,v)\mapsto t\mathrm{\Phi}(u) +(1-t)\mathrm{\Phi}(v)$ & convexity of $(\bm{u},\bm{v})\mapsto$ $\text{Tr} [t\mathrm{\Phi}(\bm{u}) +(1-t)\mathrm{\Phi}(\bm{v})$\\
&  $- \mathrm{\Phi}(t u+(1-t) v)$ for any $0\leq t\leq 1$ &  $- \mathrm{\Phi}(t\bm{u}+(1-t)\bm{v})]$ for any $0\leq t\leq 1$\\
(g) & $\mathbb{E}_1 H_\mathrm{\Phi}(Z|X_1) \geq H_\mathrm{\Phi}(\mathbb{E}_1 Z)$ & $\mathbb{E}_1 H_\mathrm{\Phi}(\bm{Z}|\bm{X}_1) \geq H_\mathrm{\Phi}(\mathbb{E}_1 \bm{Z})$\\
(h) & $H_\mathrm{\Phi}(Z)$ is a convex function of $Z$& $H_\mathrm{\Phi}(\bm{Z})$ is a convex function of $\bm{Z}$\\
(i) & $H_\mathrm{\Phi}(Z)=\sup_{T>0}\left\{\mathbb{E}\left[\Upsilon_1(T)\cdot Z+\Upsilon_2(T)\right] \right\}$ &  $H_\mathrm{\Phi}(\bm{Z})=\sup_{\bm{T}\succ\bm{0}}\left\{\tr\mathbb{E}\left[\bm{\Upsilon}_1(\bm{T})\cdot \bm{Z}+\bm{\Upsilon}_2(\bm{T})\right] \right\}$ \\
(j) & $H_\mathrm{\Phi}(Z)\leq \sum_{i=1}^n \mathbb{E} H_\mathrm{\Phi}^{(i)} (Z)$ & $H_\mathrm{\Phi}(\bm{Z})\leq \sum_{i=1}^n \mathbb{E} H_\mathrm{\Phi}^{(i)} (\bm{Z})$\\
\hline
\end{tabular}
	}
	\vspace*{-4pt}
\end{table}

\begin{table}[!h]
	\center
	\caption{Equivalent statements of the operator-valued $\mathrm{\Phi}$-entropy class \textbf{(C3)} (Definition \ref{defn:C3})
	}\label{table:operator}
	\resizebox{0.8\textwidth}{!}{
		\begin{tabular}{c|l}
			\hline
			& \hspace*{\fill}\textbf{Operator-Valued $\mathrm{\Phi}$-Entropy Class (C3)}\hspace*{\fill}\\
			\hline
			(a) & The second-order Fr\'echet derivative $\mathsf{D}^2\mathrm{\Phi}[\bm{u}](\bm{v},\bm{v})$ is jointly convex in $(\bm{u},\bm{v})$\\
			(b) & $\mathrm{\Phi}(\bm{u}+\bm{v}) -\mathrm{\Phi}(\bm{u}) - \mathsf{D}\mathrm{\Phi}[\bm{u}](\bm{v})$  is jointly convex in $(\bm{u},\bm{v})$\\
			(c) & $(\mathsf{D} \mathrm{\Phi}[\bm{u}+\bm{v}](\bm{v}) - \mathsf{D}\mathrm{\Phi}[\bm{u}](\bm{v})$  is jointly convex in $(\bm{u},\bm{v})$\\
			(d) & $\mathsf{D}^2\mathrm{\Phi}[\bm{u}](\bm{v},\bm{v})$  is jointly convex in $(\bm{u},\bm{v})$\\
			(e) & $t\mathrm{\Phi}(\bm{u}) +(1-t)\mathrm{\Phi}(\bm{v})- \mathrm{\Phi}(t \bm{u}+(1-t) \bm{v})$  is jointly convex in $(\bm{u},\bm{v})$
			for any $0\leq t\leq 1$\\
			(f) & $\mathbb{E}_1 \bm{H}_\mathrm{\Phi}(\bm{Z}|\bm{X}_1) \succeq \bm{H}_\mathrm{\Phi}(\mathbb{E}_1 \bm{Z})$\\
			(g) & $\bm{H}_\mathrm{\Phi}(\bm{Z})$ is a convex function of $\bm{Z}$\\
			(h) &  $\bm{H}_\mathrm{\Phi}(\bm{Z})=\sup_{\bm{T}\succ\bm{0}}\left\{\mathbb{E}\left[\bm{\Upsilon}_1(\bm{T})\cdot \bm{Z}+\bm{\Upsilon}_2(\bm{T})\right] \right\}$ \\
			(i) & $\bm{H}_\mathrm{\Phi}(\bm{Z})\preceq \sum_{i=1}^n \mathbb{E} \bm{H}_\mathrm{\Phi}^{(i)} (\bm{Z})$\\
			\hline
		\end{tabular}
	}
	\vspace*{-4pt}
\end{table}

We summarize our results here.
First, we derive equivalent characterisations for the matrix $\mathrm{\Phi}$-entropy functionals in Table~\ref{table:trace} (see Theorem \ref{theo:trace}). Notably, all known equivalent characterisations for the classical $\mathrm{\Phi}$-entropies can be generalised to their matrix correspondences.    
	We emphasise that additional characterisations of the $\Phi$-entropies prove to be  useful in many instances. The characterisations (b)-(d) in $\textbf{(C1)}$ are explored by Chafa\"i \cite{Cha06} to derive several entropic inequalities for M/M/$\infty$ queueing processes that are not diffusions.  With the characterisations (b)-(d), the difficulty of lacking the diffusion property can be circumvented and replaced by convexity. 
	Moreover, as shown in Corollary \ref{coro:unital}, item (f) in Table 2 can be used to demonstrate an interesting result in quantum information theory: the matrix $\Phi$-entropy functional of a quantum ensemble (i.e.~a set of quantum states with some prior distribution) is \emph{monotone} under any unital quantum channel. This property motivates us to study the dynamical evolution of a quantum ensemble and its mixing time, a fundamentally important problem in quantum computation (see our follow-up work \cite{CH2} for further details).

Second, we define and derive equivalent characterisations for operator-valued $\mathrm{\Phi}$-entropies in Table~\ref{table:operator} (see Theorem \ref{theo:operator}). Note that the only known statement in Table~\ref{table:trace} that is missing in Table~\ref{table:operator} is condition (e). In other words, we are not able to generalise (e) in Table~\ref{table:trace} to the non-commutative case.   
Finally, we employ the subadditivity of operator-valued $\Phi$-entropies to show the operator Efron-Stein inequality in Theorem \ref{theo:Efron}. 

\subsection{Prior Work} \label{prior}
For the history of the equivalent characterisations in the class $\textbf{(C1)}$, we refer to an excellent textbook \cite{BLM13} and the papers \cite{Cha03, Cha06}. 

The original definition of the matrix $\Phi$-entropy class; namely (a) in $\textbf{(C2)}$, is proposed by Chen and Tropp in 2014 \cite{CT14}. In the same paper, they also establish the subadditivity property (j) through (i) and (g): (a)$\Rightarrow$(i)$\Rightarrow$(g)$\Rightarrow$(j) in Table~\ref{table:trace}.  Shortly after,  the equivalent relation between (a) and the joint convexity of the matrix Br\'egman divergence (b) is proved in \cite{PV14}.  The equivalent relation between (a) and (d) is almost immediately implied by the result in \cite{HZ14} (see the detailed discussion in the proof of Theorem~\ref{theo:trace}).
The convexity of $H_{\Phi}(\bm{Z})$, (h), is noted in \cite{HZ14}. Here, we provide a transparent evidence---the joint convexity of (f).

We organise the paper in the following way. We collect necessary information of the Matrix Algebra in Section~\ref{Preliminaries}. The equivalent characterisations of matrix $\Phi$-entropy functionals are provided in Section \ref{sec_III}. We define the operator-valued $\Phi$-entropies and derive their equivalent statements in Section \ref{operator}.
Section \ref{Efron} shows an application of the subadditivity---the operator Efron-Stein inequality. The proofs of main results are collected in Sections \ref{proof:sub} and \ref{proof:sub_oper}, respectively.  Finally, we conclude the paper.

%
%

\section{Preliminaries} \label{Preliminaries}

We first introduce basic notation.

The set $\mathbb{M}^\textnormal{sa}$ refers to the subspace of self-adjoint operators on some separable Hilbert space. We denote by $\mathbb{M}^+$ (resp.~$\mathbb{M}^{++}$) the set  of positive semi-definite (resp.~positive-definite) operators in $\mathbb{M}^\textnormal{sa}$.
If the dimension $d$ of a Hilbert space  needs special attention, then we highlight it in subscripts, e.g.~$\mathbb{M}_d$ denotes the Banach space of $d\times d$ complex matrices. The trace function $\Tr:\mathbb{C}^{d\times d}\rightarrow \mathbb{C}$ is defined as the summation of eigenvalues. The normalised trace function $\tr$ for every $d\times d$ matrices $\bm{M}$ is denoted by
$\tr \left[\bm{M} \right] \triangleq  \frac1d \Tr\left[ \bm{M} \right]$.
For $p\in [1,\infty)$, the Schatten $p$-norm of an operator $\bm{M}$ is denoted as
$\|\bm{M}\| _{p} \triangleq  \left( \sum _{i } |\lambda_i(\bm{M})|^p \right)^{1/p}$,
where $\{\lambda_i(\bm{M})\}$ are the singular values of $\bm{M}$.
The Hilbert-Schmidt inner product is defined as $\langle \bm{A},\bm{B}\rangle \triangleq \Tr \bm{A}^\dagger \bm{B}$.
For $\bm{A},\bm{B}\in\mathbb{M}^\textnormal{sa}$, $\bm{A}\succeq \bm{B}$ means that $\bm{A}-\bm{B}$ is positive semi-definite. Similarly, $\bm{A} \succ \bm{B}$ means $\bm{A} - \bm{B}$ is positive-definite.
Throughout this paper, italic capital letters (e.g.~$\bm{X}$) are used to denote operators.


Denote a probability space $(\mathrm{\Omega},\mathrm{\Sigma},\mathbb{P})$. A random matrix $\bm{Z}$ defined on the probability space $(\mathrm{\Omega},\mathrm{\Sigma},\mathbb{P})$ means that it is a matrix-valued random variable defined on $\mathrm{\Omega}$. We denote the expectation of $\bm{Z}$ with respect to $\mathbb{P}$ by
\[
\mathbb{E}_{\mathbb{P}}[\bm{Z}] \triangleq \int_\Omega \bm{Z} \, \mathrm{d} \mu = \int_{x\in\Omega} \bm{Z}(x) \, \mathbb{P}(\mathrm{d}x),
\]
where the integral is the Bochner integral \cite{Die77, Mik78}.
We note that the results derived in this paper is universal for all probability spaces. Hence we will omit the subscript $\mathbb{P}$ of the expectation.
If we consider a sample space $\mathrm{\Omega}_1\times \mathrm{\Omega}_2$ with joint distribution $\mathbb{P}$. Then we denote the conditional expectation of $\bm{Z}$ with respect to the first space  $\mathrm{\Omega}_1$ by $\mathbb{E}_i[ \bm{Z} ] \triangleq \int_{x_1\in\Omega_1} \bm{Z}(x1, X_2 \cdot) \, \mathbb{P}_1(\mathrm{d}x_1)$, where $\mathbb{P}_1(x_1) = \int_{x_2\in\Omega_2} \mathbb{P}(x_1,\mathrm{d}x_2)$ is the marginal distribution on $\Omega_1$.

Let $\mathcal{U},\mathcal{W}$ be real Banach spaces.
The \emph{Fr\'{e}chet derivative} of a function $\mathcal{L}:\mathcal{U} \rightarrow \mathcal{W}$ at a point $\bm{X}\in\mathcal{U}$, if it exists\footnote{We assume the functions considered in the paper are Fr\'{e}chet differentiable. The readers can refer to, e.g.~\cite{Pel85,Bic12}, for conditions for when a function is Fr\'{e}chet differentiable. }, is a unique linear mapping $\mathsf{D}\mathcal{L}[\bm{X}]:\mathcal{U}\rightarrow\mathcal{W}$ such that
\[
\|\mathcal{L}(\bm{X}+\bm{E}) - \mathcal{L}(\bm{X}) - \mathsf{D}\mathcal{L}[\bm{X}](\bm{E})\|_{\mathcal{W}} = o(\|\bm{E}\|_\mathcal{U}),
\]
where $\|\cdot\|_{\mathcal{U}(\mathcal{W})}$ is a norm in $\mathcal{U}$ (resp.~$\mathcal{W}$). The notation $\mathsf{D}\mathcal{L}[\bm{X}](\bm{E})$ then is interpreted as ``the Fr\'{e}chet derivative of $\mathcal{L}$ at $\bm{X}$ in the direction $\bm{E}$''. 
The \emph{partial Fr\'{e}chet derivative} of multivariate functions can be defined as follows.
Let $\mathcal{U},\mathcal{V}$ and $\mathcal{W}$ be real Banach spaces, $\mathcal{L}:\mathcal{U}\times \mathcal{V} \rightarrow \mathcal{W}$. 
For a fixed $\bm{v}_0 \in\mathcal{V}$, $\mathcal{L}(\bm{u},\bm{v}_0)$ is a function of $\bm{u}$ whose derivative at $\bm{u}_0$, if it exists, is called the partial Fr\'{e}chet derivative of $\mathcal{L}$ with respect to $\bm{u}$, and is denoted by $\mathsf{D}_{\bm{u}}\mathcal{L} [\bm{u}_0, \bm{v}_0]$. The partial Fr\'{e}chet derivative $\mathsf{D}_{\bm{v}}\mathcal{L} [\bm{u}_0, \bm{v}_0]$ is defined similarly.
Similarly, the $m$-th Fr\'{e}chet derivative $\mathsf{D}^m \mathcal{L}[\bm{X}]$ is a unique multi-linear map from $\mathcal{U}^m\triangleq  \mathcal{U}\times\cdots\times\mathcal{U}$ ($m$ times) to $\mathcal{W}$ that satisfies
\begin{align*}
\|\mathsf{D}^{m-1}\mathcal{L}[\bm{X}+\bm{E}_m] (\bm{E}_1,\ldots,\bm{E}_{m-1}) &-
\mathsf{D}^{m-1}\mathcal{L}[\bm{X}] (\bm{E}_1,\ldots,\bm{E}_{m-1}) \\
&-  \mathsf{D}^m \mathcal{L} [\bm{X}](\bm{E}_1,\ldots,\bm{E}_{m}) \|_{\mathcal{W}}= o(\|\bm{E}_m \|_{\mathcal{U}})
\end{align*}
for each $\bm{E}_i \in \mathcal{U}, i=1,\ldots, m$.
The Fr\'{e}chet derivative enjoys several properties as in standard derivatives. We provide those facts in Appendix \ref{lemmas}.

A function $f:I\rightarrow \mathbb{R}$ is called \emph{operator convex} if for each $\bm{A},\bm{B}\in \mathbb{M}^\textnormal{sa}(I)$ and $0\leq t \leq 1$,
\[
f(t\bm{A}) + f((1-t)\bm{B}) \preceq f( t\bm{A} + (1-t)\bm{B}).
\]
Similarly, a function $f:I\rightarrow \mathbb{R}$ is called \emph{operator monotone} if for each $\bm{A}, \bm{B} \in\mathbb{M}^\textnormal{sa}(I)$,
\[
\bm{A}\preceq \bm{B} \; \Rightarrow \; f(\bm{A})\preceq f(\bm{B}).
\]

\subsection{Classical $\mathrm{\Phi}$-Entropy Functionals} \label{classical}

Let \textbf{(C1)} denote the class of functions $\mathrm{\Phi}:[0,\infty)\to \mathbb{R}$ that are continuous, convex on $[0,\infty)$, twice differentiable on $(0,\infty)$, and either $\mathrm{\Phi}$ is affine or $\mathrm{\Phi}''$ is strictly positive and $1/\mathrm{\Phi}''$ is concave.

\begin{defn}[Classical $\mathrm{\Phi}$-Entropies] \label{defn:C1} 
	Let $\Phi: [0,\infty)\to\mathbb{R}$ be a convex function.
	For every non-negative integrable random variable $Z$ so that $\mathbb{E}|Z|<\infty$ and $\mathbb{E}|\mathrm{\Phi}(Z)|<\infty$, the classical $\Phi$-entropy $H_{\Phi}(Z)$ is defined as
	\[
	H_\mathrm{\Phi}(Z) = \mathbb{E} \mathrm{\Phi}(Z) - \mathrm{\Phi}(\mathbb{E} Z). 
	\] 
	In particular, we are interested in $Z=f(X_1,\cdots,X_n)$, where $X_1,\cdots,X_n$ are independent random variables, and $f\geq 0$ is a measurable function.
\end{defn}

We say $H_\mathrm{\Phi}(Z)$ is \textit{subadditive} \cite{Led97} if 
\[
H_\mathrm{\Phi}(Z)\leq  \sum_{i=1}^n \mathbb{E} \left[H_\mathrm{\Phi}^{(i)}(Z)\right],
\]
where $H_\mathrm{\Phi}^{(i)}(Z) = \mathbb{E}_i \mathrm{\Phi}(Z) -\mathrm{\Phi}(\mathbb{E}_iZ)$ is the conditional $\Phi$-entropy, and $\mathbb{E}_i$ denotes conditional expectation conditioned on the $n-1$ random variables $X_{-i}\triangleq (X_1\cdots,X_{i-1},X_{i+1},\cdots,X_n)$. Sometimes we also denote $H_\mathrm{\Phi}^{(i)}(Z)$ by $H_\mathrm{\Phi}(Z|X_{-i})$.

It is a well-known result that, for any function $\mathrm{\Phi}\in \textbf{(C1)}$, $H_\mathrm{\Phi}(Z)$ is subadditive \cite[Corollary 3]{LO00} (see also \cite[Section 3]{BBM+05}). 

The following theorem establishes equivalent characterisations of classical $\Phi$-entropies. 
\begin{theo}[{\cite[Theorem 4.4]{Cha06}}] \label{thm_classical}
	The following statements are equivalent.
	\begin{enumerate}
		\item[(a)] $\mathrm{\Phi}\in\textnormal{\textbf{(C1)}}$: $\mathrm{\Phi}$ is affine or $\mathrm{\Phi}''>0$ and $1/\mathrm{\Phi}''$ is concave;
		\item[(b)] Br\`egman divergence $(u,v)\mapsto \mathrm{\Phi}(u+v) -\mathrm{\Phi}(u) - \mathrm{\Phi}'(u)v$ is convex;
		\item[(c)] $(u,v)\mapsto (\mathrm{\Phi}'(u+v) - \mathrm{\Phi}'(u))v$ is convex;
		\item[(d)] $(u,v)\mapsto \mathrm{\Phi}''(u) v^2$ is convex;
		\item[(e)] $\mathrm{\Phi}$ is affine or $\mathrm{\Phi}''>0$ and $\mathrm{\Phi}'''' \mathrm{\Phi}''\geq 2\mathrm{\Phi}'''^2$;
		\item[(f)] $(u,v)\mapsto t\mathrm{\Phi}(u) +(1-t)\mathrm{\Phi}(v) - \mathrm{\Phi}(t u+(1-t) v)$ is convex for any $0\leq t\leq1$;
		\item[(g)] $\mathbb{E}_1 H_\mathrm{\Phi}({Z}|{X}_1) \geq H_\mathrm{\Phi}(\mathbb{E}_1 {Z})$;
		\item[(h)] $\{H_\mathrm{\Phi}(Z)\}_{\mathrm{\Phi}\in\textnormal{\textbf{(C1)}}}$ forms a convex set;
		\item[(i)] $H_\mathrm{\Phi}(Z)=\sup_{T>0}\left(\mathbb{E}\left[(\mathrm{\Phi}'(T)-\mathrm{\Phi}'(\mathbb{E}\,T))(Z-T)\right]+H_\mathrm{\Phi}(T) \right)$;
		\item[(j)] $H_\mathrm{\Phi}(Z)\leq \sum_{i=1}^n \mathbb{E} H_\mathrm{\Phi}^{(i)} (Z)$.
	\end{enumerate}
\end{theo}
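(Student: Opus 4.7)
The plan is to establish the ten-way equivalence in three stages: first the purely analytic characterisations (a), (d), (e); then the convexity statements (b), (c), (f) that bridge the analytic and probabilistic descriptions; and finally the probabilistic conditions (g), (h), (i), (j). I would arrange the implications so that the ``hardest'' direction (subadditivity (j)) sits at the end of the chain, and so that convexity of $1/\mathrm{\Phi}''$ is used only as an input to the analytic stage.

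For the analytic stage, I would first verify (a) $\Leftrightarrow$ (e) through the elementary identity
\[
\left(\frac{1}{\mathrm{\Phi}''}\right)'' = \frac{2(\mathrm{\Phi}''')^2 - \mathrm{\Phi}''''\mathrm{\Phi}''}{(\mathrm{\Phi}'')^3},
\]
so that concavity of $1/\mathrm{\Phi}''$ on $(0,\infty)$ is exactly $\mathrm{\Phi}''''\mathrm{\Phi}'' \geq 2(\mathrm{\Phi}''')^2$. For (d) $\Leftrightarrow$ (e), I would compute the $2\times 2$ Hessian of $(u,v)\mapsto \mathrm{\Phi}''(u)v^2$; its determinant is exactly $2\mathrm{\Phi}''(u)\mathrm{\Phi}''''(u) v^2 - 4(\mathrm{\Phi}'''(u))^2 v^2$, and positive semi-definiteness recovers (e).

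For the convexity chain, I would use the Taylor identities
\[
\mathrm{\Phi}(u+v) - \mathrm{\Phi}(u) - \mathrm{\Phi}'(u)v = \int_0^1 (1-t)\,\mathrm{\Phi}''(u+tv)\,v^2 \, \mathrm{d}t,
\]
\[
\bigl(\mathrm{\Phi}'(u+v) - \mathrm{\Phi}'(u)\bigr) v = \int_0^1 \mathrm{\Phi}''(u+tv)\,v^2\,\mathrm{d}t,
\]
so that (d) transfers to (b) and (c) via the affine substitution $(u,v)\mapsto(u+tv,v)$, which preserves joint convexity under integration. The reverse implications proceed by differentiating in $v$ and letting $v\to 0$, recovering $\mathrm{\Phi}''(u)v^2$ up to scaling. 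For (d) $\Leftrightarrow$ (f), I would represent the Jensen gap $t\mathrm{\Phi}(u)+(1-t)\mathrm{\Phi}(v) - \mathrm{\Phi}(tu+(1-t)v)$ as a double integral of $\mathrm{\Phi}''$ evaluated along convex combinations of $u$ and $v$ (via two applications of the mean-value theorem), and read off joint convexity from (d) pointwise in the integrand; conversely, the limit $t\to 0$ with rescaling recovers (d) up to a positive factor.

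For the probabilistic stage, (f) $\Rightarrow$ (g) follows by conditioning on $X_{-1}$ and applying the two-point convexity to $(u,v) = (Z, \mathbb{E}_1 Z)$, which collapses the $v$-variable since $\mathbb{E}_1 Z$ depends only on $X_{-1}$. Then (g) $\Rightarrow$ (j) is an induction over $i=1,\ldots,n$ using the tower property of conditional expectations, yielding the telescoping bound. The variational formula (i) follows because the map $T\mapsto H_\mathrm{\Phi}(T)+\mathbb{E}[(\mathrm{\Phi}'(T)-\mathrm{\Phi}'(\mathbb{E} T))(Z-T)]$ is concave in $T$ (its $T$-derivative involves $\mathrm{\Phi}''(T)$ times linear factors), with supremum attained at $T=Z$; and (h) is then immediate since $H_\mathrm{\Phi}(Z)$ is a supremum of affine functionals of $Z$. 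The main obstacle I anticipate is the $(d) \Leftrightarrow (f)$ step, where switching between the infinitesimal Hessian condition and the finite two-point Jensen convexity requires a careful scaling argument, together with handling the boundary values $t\in\{0,1\}$ and the endpoint $u,v\to 0$ where $\mathrm{\Phi}''$ may blow up. A secondary technical point is the measurability and integrability hypotheses needed to carry the two-point convexity in (f) to the conditional Jensen inequality in (g) without extra smoothness assumptions on $\mathrm{\Phi}$ beyond what is built into the class \textbf{(C1)}.
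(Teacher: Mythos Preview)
The paper does not prove Theorem~\ref{thm_classical} itself; it is quoted as a known result from \cite[Theorem 4.4]{Cha06}. The paper's own contribution is the matrix analogue, Theorem~\ref{theo:trace}, whose proof in Section~\ref{proof:sub} follows a structure very close to what you sketch: the integral identities \eqref{eq:A_int}--\eqref{eq:B_diff} give $(d)\Leftrightarrow(b)\Leftrightarrow(c)$ exactly as you propose, $(a)\Leftrightarrow(e)$ is done by differentiating $(\mathsf{D}\Psi)^{-1}$, and $(d)\Leftrightarrow(f)$ is obtained by computing the Hessian of $F_t$. So on the analytic side your plan matches the paper's argument in the $d=1$ case.

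There is, however, a genuine gap in your chain of implications: it does not close. You establish $\{(a),(b),(c),(d),(e),(f)\}$ as one equivalence class and then run $(f)\Rightarrow(g)\Rightarrow(j)$ and $(d)\Rightarrow(i)\Rightarrow(h)$, but nowhere do you return from any of $(g),(h),(i),(j)$ to the analytic group. The paper closes the loop via $(j)\Rightarrow(g)$ (take $n=2$ and rearrange), $(g)\Leftrightarrow(h)$ (let $Z=sX+(1-s)Y$ with $s$ Bernoulli$(t)$ to get convexity of $H_\Phi$ from the conditional Jensen inequality, and conversely apply Jensen to the convex $H_\Phi$), and $(h)\Leftrightarrow(f)$ (a two-point distribution on $(X,Y)$ shows that convexity of $H_\Phi$ is exactly joint convexity of $F_t$). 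Without at least one of these reverse arrows your ten statements are not shown to be equivalent. A smaller point: your justification of (i) as ``the map is concave in $T$'' is not quite the mechanism used; the paper interpolates $T_s=(1-s)Z+sT$ and shows $F'(s)\le 0$ using precisely the joint convexity condition (d), so (i) is obtained \emph{from} (d), not as a free-standing concavity statement.
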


\section{Equivalent Characterizations of Matrix $\mathrm{\Phi}$-Entropy Functionals}\label{sec_III}

In this section, we first introduce matrix $\Phi$-entropy functionals, and present the main result (Theorem~\ref{theo:trace}) of this section; namely, new characterisations of the matrix $\Phi$-entropy functionals.

Chen and Tropp introduce the class of matrix $\mathrm{\Phi}$-entropies, and prove its subadditivity in 2014 \cite{CT14}. In this section, we will show that all equivalent characterisations of classical $\Phi$-entropies in Theorem~\ref{thm_classical} have a one-to-one correspondence for the class of matrix $\mathrm{\Phi}$-entropies.

Let $d$ be a natural number. The class $\mathrm{\Phi}_d$ contains each function $\mathrm{\Phi}:(0,\infty)\rightarrow \mathbb{R}$ that is either affine or satisfies the following three conditions:
\begin{enumerate}
	\item $\mathrm{\Phi}$ is convex and continuous at zero.
	\item $\mathrm{\Phi}$ is twice continuously differentiable.
	\item Define $\mathrm{\Psi}(t)=\mathrm{\Phi}'(t)$ for $t>0$. The Fr\'{e}chet derivative $\mathsf{D}\mathrm{\Psi}$ of the standard matrix  function ${\mathrm{\Psi}}:\mathbb{M}_d^{++}\rightarrow \mathbb{M}_d^{\textnormal{sa}}$ is an invertible linear map on $\mathbb{M}_d^{++}$, and the map $\bm{A}\mapsto (\mathsf{D}\mathrm{\Psi}[\bm{A}])^{-1}$ is concave with respect to the L\"{o}wner partial ordering on positive definite matrices.
\end{enumerate}
Define $\textbf{(C2)} \triangleq  \mathrm{\Phi}_\infty \equiv \bigcap_{d=1}^\infty \mathrm{\Phi}_d$.

\begin{defn}[Matrix $\mathrm{\Phi}$-Entropy Functional \cite{CT14}] \label{defn:entropy}
	Let $\Phi:[0,\infty)\to\mathbb{R} $ be a convex function.
	Consider a random matrix $\bm{Z}\in\mathbb{M}_d^{+}$ with $\mathbb{E}\|\bm{Z}\|_\infty<\infty$ and $\mathbb{E}\|\mathrm{\Phi}(\bm{Z})\|_\infty<\infty$.
	The matrix $\mathrm{\Phi}$-entropy $H_\mathrm{\Phi}(\bm{Z})$ is defined as
	\[ 
	H_\mathrm{\Phi}(\bm{Z})\triangleq \tr\left[\mathbb{E}\mathrm{\Phi}(\bm{Z})-\mathrm{\Phi}(\mathbb{E}\,\bm{Z})\right].
	\]
	The corresponding conditional matrix $\Phi$-entropy can be defined under the $\sigma$-algebra.
\end{defn}

\begin{theo}[Subadditivity of Matrix $\mathrm{\Phi}$-Entropy Functional {\cite[Theorem 2.5]{CT14}}] \label{theo:sub} Let $\Phi\in$ \textnormal{\textbf{(C2)}}, and assume $\bm{Z}$ is a measurable function of $(\bm{X}_1,\ldots,\bm{X}_n)$.
	\begin{eqnarray}\label{eq:entropy}
	H_\mathrm{\Phi}(\bm{Z})\leq \sum_{i=1}^n \mathbb{E} \Big[H^{(i)}_\mathrm{\Phi}(\bm{Z}) \Big],
	\end{eqnarray}
	where $H_\mathrm{\Phi}^{(i)}(\bm{Z}) = \mathbb{E}_i \mathrm{\Phi}(\bm{Z}) -\mathrm{\Phi}(\mathbb{E}_i\bm{Z})$ is the conditional entropy, and $\mathbb{E}_i$ denotes conditional expectation conditioned on the $n-1$ random matrices $\bm{X}_{-i}\triangleq (\bm{X}_1,\ldots,\bm{X}_{i-1},\bm{X}_{i+1},\ldots,\bm{X}_n)$.
\end{theo}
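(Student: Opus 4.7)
My plan is to follow the classical route (a) $\Rightarrow$ (i) $\Rightarrow$ (g) $\Rightarrow$ (j) originally used by Chen and Tropp. The hypothesis $\Phi \in \textbf{(C2)}$---that $(\mathsf{D}\Psi[\bm{A}])^{-1}$ is concave in the L\"owner order, where $\Psi = \Phi'$---will enter at the very first step through a variational representation of $H_\Phi$. Specifically, I would establish that there exist matrix-valued maps $\bm{\Upsilon}_1, \bm{\Upsilon}_2$, depending on $\Phi$ and on the distribution of the auxiliary matrix $\bm{T}$, for which
$$H_\Phi(\bm{Z}) \;=\; \sup_{\bm{T} \succ \bm{0}} \tr \mathbb{E}\bigl[\bm{\Upsilon}_1(\bm{T}) \bm{Z} + \bm{\Upsilon}_2(\bm{T})\bigr],$$
which is item (i) of Table~\ref{table:trace}. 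Guided by the classical formula in Theorem~\ref{thm_classical}(i), the natural ansatz is $\bm{\Upsilon}_1(\bm{T}) = \Psi(\bm{T}) - \Psi(\mathbb{E}\bm{T})$ with $\bm{\Upsilon}_2$ chosen so that equality holds at $\bm{T} = \bm{Z}$. The nontrivial direction---that no $\bm{T}$ exceeds this value---reduces to a matrix Bregman-divergence inequality whose validity is exactly the concavity condition of $\textbf{(C2)}$.

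Because the right-hand side of the variational formula is a supremum over functionals affine in the law of $\bm{Z}$, the map $\bm{Z} \mapsto H_\Phi(\bm{Z})$ is convex in the joint distribution. Applying Jensen's inequality to the conditional expectation $\mathbb{E}_1$ then yields the one-coordinate contraction
$$H_\Phi(\mathbb{E}_1 \bm{Z}) \;\leq\; \mathbb{E}_1\bigl[H_\Phi(\bm{Z} \mid \bm{X}_1)\bigr],$$
which is item (g). Subadditivity (j) follows by induction on $n$. Using the identity
$$H_\Phi(\bm{Z}) \;=\; \mathbb{E}\bigl[H_\Phi^{(1)}(\bm{Z})\bigr] + H_\Phi(\mathbb{E}_1 \bm{Z}),$$
which is a direct consequence of the definition together with the tower property, the first term contributes the $i=1$ summand of \eqref{eq:entropy}, while $\mathbb{E}_1 \bm{Z}$ is a function of only $(\bm{X}_2, \ldots, \bm{X}_n)$ so that the inductive hypothesis bounds $H_\Phi(\mathbb{E}_1 \bm{Z})$ by $\sum_{i=2}^n \mathbb{E}[H_\Phi^{(i)}(\mathbb{E}_1 \bm{Z})]$. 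A coordinate-by-coordinate application of the one-coordinate contraction upgrades each term to $\mathbb{E}[H_\Phi^{(i)}(\bm{Z})]$, closing the induction.

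The main obstacle is Step~1. In the scalar setting the variational identity is obtained by taking a derivative in $T$ of a Bregman functional and setting it to zero; in the matrix setting one must instead work with Fr\'echet derivatives and the normalised trace, and verifying that the supremum is attained precisely at $\bm{T} = \bm{Z}$ and never exceeded elsewhere uses the concavity of $(\mathsf{D}\Psi[\bm{A}])^{-1}$ in an essential way. This concavity controls how the second-order Bregman remainder transforms under convex combinations of inputs, which is exactly what is needed to reverse a Jensen-type inequality in the correct direction. Once the variational formula is secured, Steps~2 and 3 are essentially bookkeeping that transplants the classical argument into the non-commutative trace framework.
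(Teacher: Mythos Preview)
Your proposal is correct and follows essentially the same route as the paper (and Chen--Tropp): the chain (a)\,$\Rightarrow$\,(i)\,$\Rightarrow$\,(g)\,$\Rightarrow$\,(j), with the supremum representation yielding convexity, then a conditional Jensen inequality, then subadditivity. Your organization of the final step differs cosmetically---you run a formal induction on $n$ applied to $\mathbb{E}_1\bm{Z}$ and then ``upgrade'' each $\mathbb{E}[H_\Phi^{(i)}(\mathbb{E}_1\bm{Z})]$ to $\mathbb{E}[H_\Phi^{(i)}(\bm{Z})]$ via (g), whereas the paper (in the operator-valued analogue, Section~\ref{proof2}) iteratively peels off one coordinate at a time from $H_\Phi(\bm{Z}\mid\bm{X}_1,\ldots,\bm{X}_k)$---but both schemes rest on the same conditional Jensen inequality and are equivalent.
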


The following theorem is the main result of this section. We show that all the equivalent conditions in Theorem~\ref{thm_classical} also hold for the class of matrix $\mathrm{\Phi}$-entropy functionals. 
Hence, we have a much comprehensive understanding on the class of matrix $\mathrm{\Phi}$-entropy functionals 

\begin{theo}\label{theo:trace}
	The following statements are equivalent. 
	\begin{enumerate}
		\item[(a)] $\Phi\in (\textnormal{\textbf{C2}})$: $\mathrm{\Phi}$ is affine or $\mathsf{D}\mathrm{\Psi}$ is invertible and $\bm{A}\mapsto (\mathsf{D}\mathrm{\Psi}[\bm{A}])^{-1}$ is operator concave;
		\item[(b)] Matrix Br\`egman divergence: $(\bm{A},\bm{B})\mapsto \Tr[\mathrm{\Phi}(\bm{A}+\bm{B}) -\mathrm{\Phi}(\bm{A}) - \mathsf{D}\mathrm{\Phi}[\bm{A}](\bm{B})]$ is convex;
		\item[(c)] $(\bm{A},\bm{B})\mapsto \Tr[\mathsf{D}\mathrm{\Phi}[\bm{A}+\bm{B}](\bm{B}) - \mathsf{D}\mathrm{\Phi}[\bm{A}](\bm{B})]$ is convex;
		\item[(d)] $(\bm{A},\bm{B})\mapsto \Tr[\mathsf{D}^2\mathrm{\Phi}[\bm{A}] (\bm{B},\bm{B})]$ is convex;
		\item[(e)] $\mathrm{\Phi}$ is affine or $\mathrm{\Phi}''>0$ and 
		\begin{align} \label{eq:(f)}
		&\Tr\left[ \bm{h} \cdot \left(\mathsf{D}\mathrm{\Psi}[\bm{A}]\right)^{-1} \circ \mathsf{D}^3\mathrm{\Psi}[\bm{A}]\left(\bm{k},\bm{k}, \left( \mathsf{D}\mathrm{\Psi}[\bm{A}]\right)^{-1}(\bm{h}) \right) \right]\\
		&\geq 2\Tr \left[ \bm{h} \cdot \left(\mathsf{D}\mathrm{\Psi}[\bm{A}]\right)^{-1} \circ \mathsf{D}^2\mathrm{\Psi}[\bm{A}]\left(\bm{k}, 
		\left(\mathsf{D}\mathrm{\Psi}[\bm{A}]\right)^{-1}
		\left(  \mathsf{D}^2\mathrm{\Psi}[\bm{A}]\left(\bm{k}, \left(\mathsf{D}\mathrm{\Psi}[\bm{A}]\right)^{-1}(\bm{h}) \right)    \right)
		\right)
		\right],\notag
		\end{align}
		for each $\bm{A}\succeq \bm{0}$ and $\bm{h},\bm{k}\in\mathbb{M}_d^\textnormal{sa}$;
		\item[(f)] $(\bm{A},\bm{B})\mapsto$ $\Tr [t\mathrm{\Phi}(\bm{A}) +(1-t)\mathrm{\Phi}(\bm{B})- \mathrm{\Phi}(t \bm{A}+(1-t) \bm{B})]$ is convex for any $0\leq t\leq 1$;
		\item[(g)] $\mathbb{E}_1 H_\mathrm{\Phi}(\bm{Z}|\bm{X}_1) \geq H_\mathrm{\Phi}(\mathbb{E}_1 \bm{Z})$;
		\item[(h)] $\{H_\mathrm{\Phi}(\bm{Z})\}_{\mathrm{\Phi}\in\textnormal{\textbf{(C2)}}}$ forms a convex set of convex functions;
		\item[(i)] $H_\mathrm{\Phi}(\bm{Z})=\sup_{\bm{T}\succ\bm{0}}\left\{\tr\mathbb{E}\left[(\mathrm{\Phi}'(\bm{T})-\mathrm{\Phi}'(\mathbb{E}\bm{T}))(\bm{Z}-\bm{T})\right]+H_\mathrm{\Phi}(\bm{T}) \right\}$;
		\item[(j)] $H_\mathrm{\Phi}(\bm{Z})\leq \sum_{i=1}^n \mathbb{E} H_\mathrm{\Phi}^{(i)} (Z)$.
	\end{enumerate}
\end{theo}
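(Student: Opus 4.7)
The plan is to establish the equivalences by completing a cycle of implications, with the conditions naturally splitting into two blocks: analytic conditions (a)--(f), which are deterministic statements about $\Phi$ and its Fr\'echet derivatives, and probabilistic conditions (g)--(j), involving random matrices. The analytic block feeds into the probabilistic one because pointwise joint convexity of the trace expressions in (b), (c), (d), (f) yields, by Jensen's inequality, the corresponding convexity statements for expectations of random matrices.

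For the analytic block, I would invoke Hansen-Zhang \cite{HZ14} for (a)$\Leftrightarrow$(d) and Pitrik-Virosztek \cite{PV14} for (a)$\Leftrightarrow$(b). The equivalences (b)$\Leftrightarrow$(c)$\Leftrightarrow$(d) would then follow from the integral representations
\[
\Phi(\bm{A}+\bm{B}) - \Phi(\bm{A}) - \mathsf{D}\Phi[\bm{A}](\bm{B}) = \int_0^1 (1-s)\,\mathsf{D}^2\Phi[\bm{A}+s\bm{B}](\bm{B},\bm{B})\,\mathrm{d}s
\]
and its one-derivative analogue for (c), so that joint convexity of (d) implies that of (b) and (c) by averaging, while the reverse direction is obtained by taking a second Fr\'echet derivative in $\bm{B}$. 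For (f), I would recognize that $t\Phi(\bm{A})+(1-t)\Phi(\bm{B})-\Phi(t\bm{A}+(1-t)\bm{B})$ is essentially an averaged Bregman divergence, so its joint convexity is equivalent to (b) via appropriate scaling and limits as $t\to 0$ or $t\to 1$. Condition (e) is the matrix Fr\'echet analogue of the classical inequality $\Phi''\Phi''''\geq 2(\Phi''')^2$; it arises by expanding the operator concavity in (a) to second Fr\'echet order using the identity
\[
\mathsf{D}^2(\mathcal{L}^{-1})[\bm{A}](\bm{k},\bm{k}) = 2\mathcal{L}^{-1}\mathsf{D}\mathcal{L}[\bm{A}](\bm{k})\mathcal{L}^{-1}\mathsf{D}\mathcal{L}[\bm{A}](\bm{k})\mathcal{L}^{-1} - \mathcal{L}^{-1}\mathsf{D}^2\mathcal{L}[\bm{A}](\bm{k},\bm{k})\mathcal{L}^{-1}
\]
with $\mathcal{L}=\mathsf{D}\Psi$, and then pairing with a test direction $\bm{h}$ to produce the trace formulation \eqref{eq:(f)}.

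For the probabilistic block, the route (a)$\Rightarrow$(i)$\Rightarrow$(g)$\Rightarrow$(j) is already in \cite{CT14}, so I would close the cycle by proving (j)$\Rightarrow$(f). The idea is to specialize the subadditivity in (j) to a random matrix $\bm{Z}=f(\bm{X}_1,\bm{X}_2)$ that is a function of two independent two-atom random matrices $\bm{X}_1\in\{\bm{A}_1,\bm{B}_1\}$ and $\bm{X}_2\in\{\bm{A}_2,\bm{B}_2\}$, compute $H_\Phi(\bm{Z})$ and its conditional versions explicitly, and recognize that the resulting inequality reduces to the midpoint joint convexity of the trace expression in (f). The convexity statement (h) then follows from (i) because the right-hand side of (i) realizes $H_\Phi(\bm{Z})$ as a pointwise supremum of affine functions of $\bm{Z}$; conversely, (h) yields (i) through a Legendre-type duality argument.

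The main obstacle I anticipate is the passage between (a) and (e). The operator concavity in (a) is a tensorial condition on the second Fr\'echet derivative of the superoperator $(\mathsf{D}\Psi)^{-1}$, whereas (e) is a scalar trace inequality parameterized by two arbitrary self-adjoint matrices $\bm{h},\bm{k}\in\mathbb{M}_d^{\textnormal{sa}}$. Going from the scalar formulation back to the operator inequality requires exploiting the freedom to choose $\bm{h}$ as a rank-one projection isolating each spectral component, and then reassembling the family of trace identities indexed by $(\bm{h},\bm{k})$ into the underlying L\"owner inequality. Once this is handled, the remaining links are mostly bookkeeping via the Fr\'echet-calculus facts collected in the appendix.
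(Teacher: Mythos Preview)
Your proposal is broadly correct and shares the paper's scaffolding: the citations to \cite{CT14} for $(a)\Rightarrow(i)\Rightarrow(g)\Rightarrow(j)$, to \cite{PV14} for $(a)\Leftrightarrow(b)$, to \cite{HZ14} for $(a)\Leftrightarrow(d)$, the integral representations for $(b)\Leftrightarrow(c)\Leftrightarrow(d)$, and the inverse-derivative formula for $(a)\Leftrightarrow(e)$ all match the paper exactly.

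The routing differs in three places. First, for $(f)$: the paper proves $(d)\Leftrightarrow(f)$ directly by computing the second Fr\'echet derivative of $F_t(\bm{X},\bm{Y})=\Tr[t\Phi(\bm{X})+(1-t)\Phi(\bm{Y})-\Phi(t\bm{X}+(1-t)\bm{Y})]$ via partial derivatives and showing its nonnegativity is \emph{exactly} the joint convexity in $(d)$. Your ``averaged Bregman divergence'' link $(f)\leftrightarrow(b)$ is heuristically plausible but would need the same Fr\'echet computation to make rigorous, so the paper's route is more direct. Second, to close the cycle the paper takes the shorter path $(j)\Rightarrow(g)$ (just set $n=2$ and rearrange), then links $(g)\Leftrightarrow(h)$ using a single Bernoulli variable and $(f)\Leftrightarrow(h)$ using a two-point coupled pair $(\bm{X},\bm{Y})$; each step is a one-line computation. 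Your proposed $(j)\Rightarrow(f)$ via two independent two-atom variables would force you to track four matrix values simultaneously and still recover the statement for every $t\in[0,1]$, which is more laborious. Third, your $(h)\Rightarrow(i)$ via Legendre duality is unnecessary: once the analytic block is tied to $(a)$ and $(a)\Rightarrow(i)$ is in hand from \cite{CT14}, $(i)$ is already inside the equivalence class, so no separate converse to $(i)$ is needed.

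On the obstacle you flag for $(a)\Leftrightarrow(e)$: the paper sidesteps the rank-one reassembly issue by passing to the isometric identification $\mathsf{T}_{\bm{A}}\triangleq\mathsf{D}\Psi[\bm{A}]\simeq\mathbb{C}^{d^2\times d^2}$ and $\widehat{\bm{h}}\simeq\mathbb{C}^{d^2\times 1}$, so that operator concavity of $\bm{A}\mapsto\mathsf{T}_{\bm{A}}^{-1}$ becomes ordinary concavity of $\bm{A}\mapsto\widehat{\bm{h}}^\dagger\mathsf{T}_{\bm{A}}^{-1}\widehat{\bm{h}}$ for all $\widehat{\bm{h}}$, which is already the scalar second-derivative test. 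No spectral decomposition of $\bm{h}$ is needed.
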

We note that the statements $(a) \Rightarrow (i) \Rightarrow (g) \Rightarrow (j)$ was proved by Chen and Tropp in \cite{CT14}.
The equivalence of $(a) \Leftrightarrow (b)$ was shown in \cite[Theorem 2]{PV14}. Hansen and Chang established an equivalence of item $(a)$ and the convexity of the following map:
\begin{align} \label{eq:Hanzen}
(\bm{A},\bm{X}) \mapsto \langle \bm{X}, \mathsf{D}\Phi'[\bm{A}](\bm{X}).
\end{align}
From Lemma \ref{lemm:trace_Petz}, it is not hard to observe that Eq.~\eqref{eq:Hanzen} is equivalent to item (d), i.e.~
\begin{align*} 
\Tr ( \mathsf{D}^2 \mathrm{\Phi} [ \bm{A} ] ( \bm{X , X} ) ) 
&= \left\langle \bm{X}, \mathsf{D}\mathrm{\Phi}' [ \bm{A} ] ( \bm{X} ) \right\rangle.
\end{align*}  
We provide the detailed proof of the remaining equivalence statements in Section \ref{proof:sub}. 


\section{Operator-Valued $\Phi$-Entropies} \label{operator}
In this section, we extend the notion of matrix $\Phi$-entropy functionals (i.e.~real-valued) to operator-valued $\Phi$-entropies.


\begin{defn}[Operator-Valued Entropy Class] \label{defn:C3}
	Let $d$ be a natural number. The class $\mathrm{\Phi}_d$ contains each function $\mathrm{\Phi}:[0,\infty)\rightarrow \mathbb{R}$ such that its second-order Fr\'{e}chet derivative exists and the following map satisfies the joint convexity (under the L\"{o}wner partial ordering) condition:
	\begin{eqnarray} \label{eq:entropy_class}
	(\bm{A},\bm{X})\mapsto\mathsf{D}^2\mathrm{\Phi}[\bm{A}](\bm{X},\bm{X}), \quad \forall \bm{A}\in\mathbb{M}_d^\textnormal{sa} \quad \bm{X}\in\mathbb{M}_d.
	\end{eqnarray}
	We denote the class of operator-valued $\mathrm{\Phi}$-entropies $\mathrm{\Phi}_\infty \triangleq \bigcap_{d=1}^\infty \mathrm{\Phi}_d$ by \textbf{(C3)}.
\end{defn}

\begin{defn}[Operator-Valued $\Phi$-Entropies] \label{defn:operator_entropy}
	Let $\Phi:[0,\infty) \to \mathbb{R}$ be a convex function.
	Consider a random matrix $\bm{Z}$ taking values in $\mathbb{M}^{+}$, with $\mathbb{E}\|\bm{Z}\|_\infty<\infty$ and $\mathbb{E}\|\mathrm{\Phi}(\bm{Z})\|_\infty<\infty$.
	That is, the random matrix $\bm{Z}$ and $\Phi(\bm{Z})$ are Bochner integrable \cite{Die77,Mik78} (Hence $\mathbb{E}\bm{Z}$ and $\mathbb{E}\Phi(\bm{Z})$ exist and are well-defined).
	The operator-valued $\mathrm{\Phi}$-entropy $\bm{H}_\mathrm{\Phi}$ is defined as
	\begin{eqnarray} \label{eq:operator_entropy}
	\bm{H}_\mathrm{\Phi}(\bm{Z})\triangleq\mathbb{E}\mathrm{\Phi}(\bm{Z})-\mathrm{\Phi}(\mathbb{E}\bm{Z}).
	\end{eqnarray}
 	The corresponding conditional terms can be defined under the $\sigma$ algebra.
\end{defn}

It is worth mentioning that the matrix $\mathrm{\Phi}$-entropy functional \cite{CT14} in Section \ref{sec_III} is non-negative for every convex function $\Phi$ due to the fact that the trace function $\tr\mathrm{\Phi}$ is also convex \cite{von55} (or see e.g.~\cite[Sec.~2.2]{Car09}).
However, according to the operator Jensen's inequality \cite[Theorem 3.2]{FZ07}, only the operator convex function $\Phi$ ensures the operator-valued $\mathrm{\Phi}$-entropy non-negative. 

In the following, we show that the the entropy class \textbf{(C3)} is not an empty set. 
\begin{prop} \label{prop:square}
	The square function $\Phi(u)=u^2$ belongs to \textnormal{\textbf{(C3)}}.
\end{prop}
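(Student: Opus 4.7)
The plan is to verify Definition \ref{defn:C3} by a direct computation of the second Fr\'echet derivative, reducing the joint convexity requirement to the well-known operator convexity of the square function.

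First I would compute the first and second Fr\'echet derivatives of $\Phi(\bm{A}) = \bm{A}^2$ on the self-adjoint matrices. Since
\[
\Phi(\bm{A}+\bm{X}) - \Phi(\bm{A}) = \bm{A}\bm{X} + \bm{X}\bm{A} + \bm{X}^2,
\]
the first Fr\'echet derivative is $\mathsf{D}\Phi[\bm{A}](\bm{X}) = \bm{A}\bm{X} + \bm{X}\bm{A}$, which is linear in $\bm{A}$ (for fixed $\bm{X}$). Differentiating this once more in $\bm{A}$ along direction $\bm{X}$ gives
\[
\mathsf{D}^2\Phi[\bm{A}](\bm{X},\bm{X}) \;=\; 2\bm{X}^2.
\]
Crucially, this expression is \emph{independent of} $\bm{A}$.

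Therefore, joint convexity of $(\bm{A},\bm{X})\mapsto \mathsf{D}^2\Phi[\bm{A}](\bm{X},\bm{X})$ reduces to the convexity (under the L\"owner order) of $\bm{X}\mapsto \bm{X}^2$ on self-adjoint matrices. For this I would invoke the standard identity
\[
t\bm{X}_1^2 + (1-t)\bm{X}_2^2 - (t\bm{X}_1 + (1-t)\bm{X}_2)^2 \;=\; t(1-t)(\bm{X}_1 - \bm{X}_2)^2 \;\succeq\; \bm{0},
\]
valid for all self-adjoint $\bm{X}_1,\bm{X}_2$ and $t\in[0,1]$, since $(\bm{X}_1-\bm{X}_2)^2$ is positive semi-definite whenever $\bm{X}_1-\bm{X}_2$ is self-adjoint. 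Multiplying by $2$ and noting that the $\bm{A}$-coordinate plays no role, this yields
\[
\mathsf{D}^2\Phi\!\left[t\bm{A}_1 + (1-t)\bm{A}_2\right]\!\left(t\bm{X}_1 + (1-t)\bm{X}_2, \,t\bm{X}_1 + (1-t)\bm{X}_2\right) \;\preceq\; t\,\mathsf{D}^2\Phi[\bm{A}_1](\bm{X}_1,\bm{X}_1) + (1-t)\,\mathsf{D}^2\Phi[\bm{A}_2](\bm{X}_2,\bm{X}_2),
\]
which is exactly the joint convexity condition \eqref{eq:entropy_class} required for membership in $\textbf{(C3)}$.

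Since no step is subtle, there is no real obstacle here; the proof is essentially a one-line verification once the derivative $\mathsf{D}^2\Phi[\bm{A}](\bm{X},\bm{X}) = 2\bm{X}^2$ is computed. The only matter of care is the interpretation of $\bm{X}$: the second derivative of a standard matrix function on $\mathbb{M}_d^{\textnormal{sa}}$ is naturally defined for self-adjoint directions $\bm{X}$, and it is in that setting that the Löwner-order convexity statement is meaningful and follows from the square expansion above.
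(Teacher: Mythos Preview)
Your proposal is correct and follows essentially the same approach as the paper: compute $\mathsf{D}^2\Phi[\bm{A}](\bm{X},\bm{X})=2\bm{X}^2$ and reduce the joint convexity to the operator convexity of the square function. The only difference is that you explicitly verify operator convexity via the identity $t\bm{X}_1^2+(1-t)\bm{X}_2^2-(t\bm{X}_1+(1-t)\bm{X}_2)^2=t(1-t)(\bm{X}_1-\bm{X}_2)^2\succeq\bm{0}$, whereas the paper simply cites this as a known fact.
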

\begin{proof}
	It suffices to verify the joint convexity of the map:
	\begin{align*}
	\mathsf{D}^2 \mathrm{\Phi}[\bm{A}]\left( \bm{X}, \bm{X} \right) = 2\bm{X}^2 \quad \text{for all } \bm{A},\bm{X}\in\mathbb{M}^+,
	\end{align*}
	where we use the identity of second-order Fr\'{e}chet derivative (see e.g.~\cite[Example X.4.6]{Bha97}) of the square function.
	Since the square function is operator convex, hence
	$\mathrm{\Phi}(u)=u^2$ belongs to the operator-valued $\mathrm{\Phi}$-entropy class \textbf{(C3)}. 
\end{proof}

\subsection{Subaddtivity of operator-valued $\mathrm{\Phi}$-entropies} \label{sub}

Denote by $\bm{X}\triangleq(\bm{X}_1,\ldots,\bm{X}_n)$ a series of independent random variables taking values in a Polish space, and let $\bm{X}_{-1}$
\[
\bm{X}_{-i}\triangleq(\bm{X}_1,\ldots,\bm{X}_{i-1},\bm{X}_{i+1},\ldots,\bm{X}_n).
\]
Let a positive semi-definite matrix $\bm{Z}$ that depends on the series of random variables $\bm{X}$:
\[
\bm{Z}\triangleq\bm{Z}(\bm{X}_1,\ldots,\bm{X}_n)\in\mathbb{M}^+.
\]
Throughout this paper, we assume the random matrix $\bm{Z}$ satisfies the integrability conditions: $|\bm{Z}|\triangleq \sqrt{\bm{Z}^2}$ and $|\Phi(\bm{Z})|$ is Bochner integrable for $\Phi\in\textbf{(C3)}$.

\begin{theo}
	[Subadditivity of Operator-Valued $\mathrm{\Phi}$-Entropy] \label{theo:sub_oper}
	Fix a function $\mathrm{\Phi}\in\textnormal{\textbf{(C3)}}$. Under the prevailing assumptions,
	\begin{eqnarray} \label{eq:sub}
	\bm{H}_\mathrm{\Phi}(\bm{Z}) \preceq \sum_{i=1}^n \mathbb{E}\left[\bm{H}_\mathrm{\Phi}^{(i)}(\bm{Z})\right],
	\end{eqnarray}
	where $\bm{H}_\Phi^{(i)} (\bm{Z}) = \bm{H}_\Phi(\bm{Z}|\bm{X}_{-i})\triangleq \mathbb{E}_i \Phi(\bm{Z}) - \Phi(\mathbb{E}_i\bm{Z})$.
\end{theo}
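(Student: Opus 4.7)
The plan is to induct on $n$, with the core step being the operator convexity of $\bm{H}_\Phi(\bm{Z})$ treated as a functional of the random-matrix argument $\bm{Z}$ (item (g) of Table~\ref{table:operator}). To establish this convexity from the hypothesis $\Phi\in\textbf{(C3)}$, I would use the Taylor-type integral representation
\[
\Phi(\bm{B}) - \Phi(\bm{A}) - \mathsf{D}\Phi[\bm{A}](\bm{B}-\bm{A}) = \int_0^1 (1-s)\,\mathsf{D}^2\Phi[\bm{A}+s(\bm{B}-\bm{A})](\bm{B}-\bm{A},\bm{B}-\bm{A})\,ds.
\]
The joint operator convexity of $(\bm{A},\bm{X})\mapsto\mathsf{D}^2\Phi[\bm{A}](\bm{X},\bm{X})$ is inherited, via this representation, by the symmetrized Bregman quantity $(\bm{u},\bm{v})\mapsto t\Phi(\bm{u})+(1-t)\Phi(\bm{v})-\Phi(t\bm{u}+(1-t)\bm{v})$, which is item (e) of Table~\ref{table:operator}. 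Applying the inequality from item (e) pointwise and taking expectations then rearranges to the convexity of $\bm{H}_\Phi(\bm{Z})$ in $\bm{Z}$.

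The induction on $n$ then runs as follows. The base $n=1$ holds with equality, since $\mathbb{E}_1 = \mathbb{E}$ forces $\bm{H}_\Phi^{(1)}(\bm{Z}) = \bm{H}_\Phi(\bm{Z})$. For the inductive step, start from the tautological decomposition
\[
\bm{H}_\Phi(\bm{Z}) = \mathbb{E}\bigl[\bm{H}_\Phi^{(1)}(\bm{Z})\bigr] + \bm{H}_\Phi(\mathbb{E}_1\bm{Z}),
\]
which reduces the $n$-variable problem to that for $\mathbb{E}_1\bm{Z}$, a function of only $\bm{X}_2,\ldots,\bm{X}_n$. The inductive hypothesis bounds $\bm{H}_\Phi(\mathbb{E}_1\bm{Z})$ by $\sum_{i=2}^n \mathbb{E}[\bm{H}_\Phi^{(i)}(\mathbb{E}_1\bm{Z})]$. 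Fixing every variable except $\bm{X}_1$ and $\bm{X}_i$ and viewing $\bm{Z}$ as a family of random matrices (in $\bm{X}_i$) indexed by $\bm{X}_1$, the operator convexity of $\bm{H}_\Phi$ applied to the averaging over $\bm{X}_1$ yields the operator Jensen-type bound $\bm{H}_\Phi^{(i)}(\mathbb{E}_1\bm{Z}) \preceq \mathbb{E}_1\bigl[\bm{H}_\Phi^{(i)}(\bm{Z})\bigr]$ for each $i\geq 2$. Summing over $i$ and substituting into the tautological decomposition finishes the induction.

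The main obstacle is the first step: propagating joint operator convexity of $\mathsf{D}^2\Phi$ through to operator convexity of $\bm{H}_\Phi$. In the scalar case (Theorem~\ref{thm_classical}), the corresponding chain of implications passes through the fourth-derivative condition $\Phi''''\Phi''\geq 2(\Phi''')^2$, but this has no known operator-valued analogue---indeed, as noted in Section~\ref{result}, item (e) of Table~\ref{table:trace} is the one entry without a counterpart in Table~\ref{table:operator}. The argument must therefore bypass that intermediate characterization and work directly with the Fréchet-derivative integral representation above. This is delicate because the L\"owner ordering is not preserved under many operations that would trivially preserve the pointwise ordering of real numbers, so each step in the convexity chain must be re-derived operator-theoretically rather than transferred from the commutative proof.
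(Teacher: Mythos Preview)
Your proposal is correct, and the inductive skeleton---the tautological decomposition $\bm{H}_\Phi(\bm{Z}) = \mathbb{E}[\bm{H}_\Phi^{(1)}(\bm{Z})] + \bm{H}_\Phi(\mathbb{E}_1\bm{Z})$ followed by the conditional-Jensen bound $\bm{H}_\Phi^{(i)}(\mathbb{E}_1\bm{Z}) \preceq \mathbb{E}_1\bm{H}_\Phi^{(i)}(\bm{Z})$---matches the paper's proof in Section~\ref{proof2} exactly. The only genuine difference is in how you reach that Jensen step. The paper's primary route (Sections~\ref{Rep}--\ref{Jensen}) is via the \emph{supremum representation}: from $\Phi\in\textbf{(C3)}$ one proves the dual formula $\bm{H}_\Phi(\bm{Z})=\sup_{\bm{T}\succ\bm{0}}\mathbb{E}[\bm{\Upsilon}_1(\bm{T},\bm{Z})+\bm{\Upsilon}_2(\bm{T})]$ by an interpolation argument, and then Lemma~\ref{lemm:Jensen} follows because a supremum of affine functions is convex. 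You instead go through the Bregman integral representation, i.e.\ the chain (a)$\Rightarrow$(b)$\Rightarrow$(e)$\Rightarrow$(g) in Table~\ref{table:operator}, to obtain operator convexity of $\bm{H}_\Phi$ directly. The paper explicitly acknowledges this alternative right after Lemma~\ref{lemm:Jensen} (``the conditional Jensen inequality can also be achieved by item (d) in Theorem~\ref{theo:operator}''), so you are on safe ground. Your route is arguably more elementary since it avoids the interpolation calculation in Theorem~\ref{theo:representation}; the paper's route has the advantage of yielding the variational formula (h) as a by-product. Your closing worry about bypassing the fourth-derivative condition is well placed but already resolved by your own argument: the integral representation lets the joint operator convexity of $\mathsf{D}^2\Phi$ pass to $\bm{F}_t$ by composition with affine maps, with no appeal to the missing item (e) of Table~\ref{table:trace}.
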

The proof is given in Section \ref{proof:sub_oper}.

\subsection{Equivalent characterisations of operator-valued $\mathrm{\Phi}$-entropies} \label{equivalent}

In this section, we derive alternative characterisations of the class \textbf{(C3)} in Theorem \ref{theo:operator}. As an application of the entropy class, we show that if the function $\Phi$ belongs to \textbf{(C3)}, then the operator-valued $\Phi$-entropy is monotone under any {unital completely positive map}. 

\begin{theo}\label{theo:operator}
	The following statements are equivalent. 
	\begin{enumerate}
		\item[(a)] $\Phi\in \textnormal{\textbf{(C3)}}$: convexity of $(\bm{A},\bm{B})\mapsto 
		\mathsf{D}^2\mathrm{\Phi}[\bm{A}](\bm{B},\bm{B})$;
		\item[(b)] Operator-valued Br\`egman divergence: $(\bm{A},\bm{B})\mapsto \mathrm{\Phi}(\bm{A}+\bm{B}) -\mathrm{\Phi}(\bm{A}) - \mathsf{D}\mathrm{\Phi}[\bm{A}](\bm{B})$ is convex;
		\item[(c)] $(\bm{A},\bm{B})\mapsto \mathsf{D}\mathrm{\Phi}[\bm{A}+\bm{B}](\bm{B}) - \mathsf{D}\mathrm{\Phi}[\bm{A}](\bm{B})$ is convex;
		\item[(d)] $(\bm{A},\bm{B})\mapsto \mathsf{D}^2\mathrm{\Phi}[\bm{A}] (\bm{B},\bm{B})$ is convex;
		\item[(e)] Convexity of $(\bm{A},\bm{B})\mapsto$ $t\mathrm{\Phi}(\bm{A}) +(1-t)\mathrm{\Phi}(\bm{B})- \mathrm{\Phi}(t \bm{A}+(1-t) \bm{B})$ for any $0\leq t\leq 1$;
		\item[(f)] $\mathbb{E}_1 \bm{H}_\mathrm{\Phi}(\bm{Z}|\bm{X}_1) \succeq \bm{H}_\mathrm{\Phi}(\mathbb{E}_1 \bm{Z})$;
		\item[(g)] $\{\bm{H}_\mathrm{\Phi}(\bm{Z})\}_{\mathrm{\Phi}\in\textnormal{\textbf{(C3)}}}$ forms a convex set of convex functions;
		\item[(h)] $\bm{H}_\mathrm{\Phi}(\bm{Z})=\sup_{\bm{T}\succ\bm{0}}\left\{\mathbb{E}\left[\mathsf{D}\Phi[\bm{T}](\bm{Z}-\bm{T})-\mathsf{D}\mathrm{\Phi}[\mathbb{E}\bm{T}](\bm{Z}-\bm{T})\right]+\bm{H}_\mathrm{\Phi}(\bm{T}) \right\}$;
		\item[(i)] $\bm{H}_\mathrm{\Phi}(\bm{Z})\preceq \sum_{i=1}^n \mathbb{E} \bm{H}_\mathrm{\Phi}^{(i)} (\bm{Z})$.
	\end{enumerate}
\end{theo}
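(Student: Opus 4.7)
The plan is to prove Theorem~\ref{theo:operator} by establishing a cyclic chain of implications among the nine conditions, in the spirit of the proof of Theorem~\ref{theo:trace} but working under the L\"owner partial ordering rather than under the normalized trace. A convenient cycle is (a)$\Leftrightarrow$(d) (in fact these are identical as stated) $\Rightarrow$(b)$\Rightarrow$(c)$\Rightarrow$(e)$\Rightarrow$(f)$\Rightarrow$(i), coupled with the side loop (b)$\Rightarrow$(h)$\Rightarrow$(g)$\Rightarrow$(f), and closed by (i)$\Rightarrow$(a).

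For the local equivalences among (a)/(d), (b), (c), (e), I would use the Fr\'echet integral identities
\[
\mathrm{\Phi}(\bm{A}+\bm{B}) - \mathrm{\Phi}(\bm{A}) - \mathsf{D}\mathrm{\Phi}[\bm{A}](\bm{B}) = \int_0^1 (1-t)\,\mathsf{D}^2\mathrm{\Phi}[\bm{A}+t\bm{B}](\bm{B},\bm{B})\,dt,
\]
\[
\mathsf{D}\mathrm{\Phi}[\bm{A}+\bm{B}](\bm{B}) - \mathsf{D}\mathrm{\Phi}[\bm{A}](\bm{B}) = \int_0^1 \mathsf{D}^2\mathrm{\Phi}[\bm{A}+s\bm{B}](\bm{B},\bm{B})\,ds,
\]
which together yield (d)$\Leftrightarrow$(b)$\Leftrightarrow$(c), since joint convexity in $(\bm{A},\bm{B})$ under L\"owner ordering is preserved by integration in an auxiliary scalar parameter. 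For (b)$\Leftrightarrow$(e), I would compare the Bregman form (``chord minus tangent'') with the convex-combination form (``chord minus midpoint evaluation''): each is recoverable from the other via the identity $t\bm{A}+(1-t)\bm{B} = \bm{A}+(1-t)(\bm{B}-\bm{A})$, combined with continuity of $\mathsf{D}\mathrm{\Phi}$ in the operator-norm topology.

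From (e) to the probabilistic forms, integrating the joint operator convexity of $t\mathrm{\Phi}(\bm{A})+(1-t)\mathrm{\Phi}(\bm{B})-\mathrm{\Phi}(t\bm{A}+(1-t)\bm{B})$ against the joint law of $(\bm{Z},\mathbb{E}_1\bm{Z})$ yields the monotonicity (f), and iterating (f) over the coordinates $i=1,\ldots,n$ via the tower property of conditional expectation gives subadditivity (i), exactly as in the proof of Theorem~\ref{theo:sub_oper}. For the side loop, the variational representation (h) follows from operator-Bregman convexity (b): the candidate
\[
\mathcal{R}(\bm{Z},\bm{T}) \triangleq \mathbb{E}\big[\mathsf{D}\mathrm{\Phi}[\bm{T}](\bm{Z}-\bm{T}) - \mathsf{D}\mathrm{\Phi}[\mathbb{E}\bm{T}](\bm{Z}-\bm{T})\big] + \bm{H}_\mathrm{\Phi}(\bm{T})
\]
satisfies $\mathcal{R}(\bm{Z},\bm{Z}) = \bm{H}_\mathrm{\Phi}(\bm{Z})$, while $\mathcal{R}(\bm{Z},\bm{T}) \preceq \bm{H}_\mathrm{\Phi}(\bm{Z})$ for every $\bm{T}\succ\bm{0}$ is exactly the operator-Bregman inequality from (b) applied simultaneously to $(\bm{Z},\bm{T})$ and to its mean $(\mathbb{E}\bm{Z},\mathbb{E}\bm{T})$. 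Then (h)$\Rightarrow$(g) because a supremum of maps affine in $\bm{Z}$ is convex in $\bm{Z}$; (g)$\Rightarrow$(f) by applying the convexity of $\bm{H}_\mathrm{\Phi}$ to the one-parameter family $\lambda\bm{Z}+(1-\lambda)\mathbb{E}_1\bm{Z}$; and (i)$\Rightarrow$(a) by specializing subadditivity to $\bm{Z}=g(\bm{X}_1,\bm{X}_2)$ with carefully chosen two-point distributions, whereupon the inequality collapses to the joint convexity of $\mathsf{D}^2\mathrm{\Phi}[\bm{A}](\bm{B},\bm{B})$ after rescaling.

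The main obstacle I expect is the variational step (b)$\Rightarrow$(h): unlike in the scalar or trace setting, the supremum over $\bm{T}\succ\bm{0}$ must be interpreted in the L\"owner order, and verifying $\mathcal{R}(\bm{Z},\bm{T}) \preceq \bm{H}_\mathrm{\Phi}(\bm{Z})$ requires the \emph{operator} Bregman inequality rather than merely its trace version; the ``dual'' term $\mathsf{D}\mathrm{\Phi}[\mathbb{E}\bm{T}](\bm{Z}-\bm{T})$ must be handled delicately since derivatives at $\bm{T}$ and at $\mathbb{E}\bm{T}$ no longer commute with their multiplicands. A secondary obstacle is the (b)$\Leftrightarrow$(e) step: passing between the Bregman form and the convex-combination form via $t\to 1$ limits must be justified by an operator-norm continuity argument for $\mathsf{D}\mathrm{\Phi}$, since unlike in the commutative case we cannot freely appeal to Taylor's theorem with scalar remainders.
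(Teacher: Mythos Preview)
Your overall strategy---mirror the proof of Theorem~\ref{theo:trace} with the trace removed---is exactly what the paper does (it says so explicitly). Several of your specific arrows, however, deviate from the paper's route and, as written, do not go through.

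\textbf{The integral identities.} Your formulas for the operator Bregman divergence and for $\mathsf{D}\Phi[\bm{A}+\bm{B}](\bm{B})-\mathsf{D}\Phi[\bm{A}](\bm{B})$ yield only (d)$\Rightarrow$(b) and (d)$\Rightarrow$(c). The reverse directions require the Taylor expansion at $(\bm{u},\epsilon\bm{v})$ and the $\epsilon\to 0$ limit, exactly as in the trace proof; you should say so rather than claim ``$\Leftrightarrow$'' from the integrals alone.

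\textbf{The step (b)$\Leftrightarrow$(e).} The substitution $t\bm{A}+(1-t)\bm{B}=\bm{A}+(1-t)(\bm{B}-\bm{A})$ does not convert the Bregman form into $t\Phi(\bm{A})+(1-t)\Phi(\bm{B})-\Phi(t\bm{A}+(1-t)\bm{B})$; the two differ by a nontrivial (non-affine) combination of Bregman terms, so joint convexity of one does not immediately transfer to the other. The paper instead links (d) and (e) directly: it computes the second Fr\'echet derivative of $\bm{F}_t(\bm{X},\bm{Y})$ in the pair direction $(\bm{h},\bm{k})$ and shows, after collecting terms, that $\mathsf{D}^2\bm{F}_t\succeq\bm{0}$ is \emph{identical} to the joint convexity of $(\bm{A},\bm{B})\mapsto\mathsf{D}^2\Phi[\bm{A}](\bm{B},\bm{B})$. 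This is a clean two-way equivalence with no limiting argument needed.

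\textbf{The step (e)$\Rightarrow$(f).} ``Integrating against the joint law of $(\bm{Z},\mathbb{E}_1\bm{Z})$'' does not produce the conditional inequality (f), because $\mathbb{E}_1\bm{Z}$ is not independent of $\bm{Z}$ and the quantity in (f) involves a \emph{conditional} $\Phi$-entropy, not $\bm{F}_t$. The paper's route is (e)$\Leftrightarrow$(g) (convexity of $\bm{H}_\Phi$) via a two-point distribution, and then (g)$\Leftrightarrow$(f): one direction is Jensen applied to the convex map $\bm{Z}\mapsto\bm{H}_\Phi(\bm{Z})$ over the randomness in $\bm{X}_1$, and the converse takes $\bm{Z}=s\bm{X}+(1-s)\bm{Y}$ with $s$ Bernoulli$(t)$.

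\textbf{Closing the cycle.} Your direct (i)$\Rightarrow$(a) by ``two-point distributions and rescaling'' is not explained and would in any case have to pass through something like (f) or (e) first. The paper's closure is (i)$\Rightarrow$(f): with $n=2$, subtract $\mathbb{E}_2\bm{H}_\Phi^{(1)}(\bm{Z})$ from both sides of subadditivity and simplify to get $\mathbb{E}_1\bm{H}_\Phi(\bm{Z}|\bm{X}_1)\succeq\bm{H}_\Phi(\mathbb{E}_1\bm{Z})$. Then (f)$\Leftrightarrow$(g)$\Leftrightarrow$(e)$\Leftrightarrow$(d)$=$(a).

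\textbf{One genuine alternative you found.} Your (b)$\Rightarrow$(h) argument is different from the paper's and is correct: writing $\bm{H}_\Phi(\bm{Z})-\mathcal{R}(\bm{Z},\bm{T})=\mathbb{E}\,\mathrm{Breg}(\bm{Z},\bm{T})-\mathrm{Breg}(\mathbb{E}\bm{Z},\mathbb{E}\bm{T})$ and invoking operator Jensen for the jointly (operator-)convex Bregman map gives the supremum representation in one line. The paper instead interpolates along $\bm{T}_s=(1-s)\bm{Z}+s\bm{T}$ and shows $\bm{F}'(s)\preceq\bm{0}$ using (d); your route avoids that computation at the cost of relying on (b) rather than (d), which is harmless since (b)$\Leftrightarrow$(d).
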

%
The proof is omitted since it directly follows from that of Theorem \ref{theo:trace} without taking traces. 

\begin{remark}
	In item (g) of Theorem \ref{theo:operator}, we introduce a supremum representation for the operator-valued $\Phi$-entropies. The supremum is defined as the least upper bound (under L\"owner partial ordering) among the set of operators. In general, the supremum might not exist due to matrix partial ordering; however, the supremum in (g) exists and is attained when $\bm{T} \equiv \bm{Z}$. 
\Endremark
\end{remark}

In the following, we demonstrate a monotone property of operator-valued $\Phi$-entropies when $\Phi\in\textbf{(C3)}$.

\begin{prop} \label{prop:unital}
	[Monotonicity of Operator-Valued {$\Phi$}-Entropies]
	Fix a convex function $\Phi \in \textbf{(C3)}$, then the operator-valued $\Phi$-entropy $\bm{H}_\Phi(\bm{Z})$ is monotone under any unital completely positive map $\mathsf{N}$, i.e.~
	\[
	\bm{H}_\Phi( \mathsf{N}(\bm{Z})  ) \preceq \bm{H}_\Phi( \bm{Z}  )
	\]
	for any random matrix $\bm{Z}$ taking values in $\mathbb{M}^+$.
\end{prop}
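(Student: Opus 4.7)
My plan is to adapt the strategy behind Corollary~\ref{coro:unital} for the trace functional to the Löwner order, leveraging the joint operator-convexity of $\bm{H}_\Phi$ in the matrix-valued outcomes of the ensemble (item (g), equivalently item (e), of Theorem~\ref{theo:operator}). Writing $\bm{Z}$ in ensemble form $\{(p_j,\bm{A}_j)\}_j$ and fixing a Kraus decomposition $\mathsf{N}(\bm{X})=\sum_k A_k^*\bm{X}A_k$ with $\sum_k A_k^*A_k=I$, I would first treat the mixed-unitary case $\mathsf{N}(\bm{X})=\sum_k q_k U_k^*\bm{X} U_k$, where the structure is most transparent, and then promote the argument to general unital CP maps via the Stinespring dilation $\mathsf{N}(\bm{X})=V^*(\bm{X}\otimes I_K)V$ with $V$ an isometry.

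In the mixed-unitary case each $\mathsf{N}(\bm{A}_j)$ is a convex combination of the rotates $U_k^*\bm{A}_j U_k$. Joint Löwner-convexity of $\bm{H}_\Phi$ in its matrix arguments then gives
\[
\bm{H}_\Phi(\mathsf{N}(\bm{Z})) \preceq \sum_k q_k\, \bm{H}_\Phi\bigl(\{(p_j, U_k^*\bm{A}_j U_k)\}_j\bigr).
\]
Because $\Phi$ is a standard matrix function, $\Phi(U_k^*\bm{A}_j U_k)=U_k^*\Phi(\bm{A}_j)U_k$, so the inner entropy reduces to the unitary conjugate $U_k^*\bm{H}_\Phi(\bm{Z})U_k$, and summing against the weights $q_k$ yields the intermediate bound $\bm{H}_\Phi(\mathsf{N}(\bm{Z})) \preceq \mathsf{N}(\bm{H}_\Phi(\bm{Z}))$. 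For general unital CP $\mathsf{N}$ I would replace the unitary rotations by the Stinespring isometry and invoke the Hansen--Pedersen operator Jensen inequality for isometries, obtaining the same intermediate inequality via the identities $\Phi(V^*(\bm{A}\otimes I_K)V)\preceq V^*(\Phi(\bm{A})\otimes I_K)V$ and $V^*(\bm{H}_\Phi(\bm{Z})\otimes I_K)V=\mathsf{N}(\bm{H}_\Phi(\bm{Z}))$.

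The hard part will be to upgrade this intermediate bound to the stated inequality, i.e.\ to establish $\mathsf{N}(\bm{H}_\Phi(\bm{Z}))\preceq \bm{H}_\Phi(\bm{Z})$ in the Löwner order. For the trace variant the corresponding step is free because $\tr\circ\mathsf{N}=\tr$ under trace preservation, but in the operator order a unital CP map is not in general a Löwner contraction on arbitrary positive operators. The most promising route seems to be to extract the needed contraction from the supremum representation in item (h) of Theorem~\ref{theo:operator}, choosing an optimal witness $\bm{T}$ whose action under $\mathsf{N}$ is compatible with the ensemble structure, thereby collapsing the gap between $\mathsf{N}(\bm{H}_\Phi(\bm{Z}))$ and $\bm{H}_\Phi(\bm{Z})$; carrying this out rigorously, and verifying that the maximiser behaves well under the Stinespring dilation, is where I expect the most delicate work and where the hypothesis $\Phi\in\textbf{(C3)}$ (as opposed to mere operator convexity) will need to be used in its full strength.
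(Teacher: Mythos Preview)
Your approach mirrors the paper's: the paper also invokes item (e) of Theorem~\ref{theo:operator} (the joint operator convexity of $\bm{F}_t(\bm{A},\bm{B}) = t\Phi(\bm{A}) + (1-t)\Phi(\bm{B}) - \Phi(t\bm{A}+(1-t)\bm{B})$) together with the operator Jensen inequality, Proposition~\ref{prop:Jensen}, applied through a Kraus decomposition of $\mathsf{N}$, and then simply writes $\bm{F}_t(\mathsf{N}(\bm{X})) \preceq \bm{F}_t(\bm{X})$ without further comment. You are right to isolate that last passage as the difficulty: operator Jensen only yields $\bm{F}_t(\mathsf{N}(\bm{X})) \preceq \mathsf{N}(\bm{F}_t(\bm{X}))$, i.e.\ exactly your intermediate bound $\bm{H}_\Phi(\mathsf{N}(\bm{Z})) \preceq \mathsf{N}(\bm{H}_\Phi(\bm{Z}))$, and a unital CP map is not a L\"owner contraction on arbitrary positive operators.

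In fact the ``hard part'' cannot be completed, because the stated inequality is false. Take $\Phi(u)=u^2\in\textbf{(C3)}$, let $\mathsf{N}$ be the pinching onto the diagonal of $\mathbb{M}_2$ (unital and CP), and let $\bm{Z}$ take the values $\bm{A}=\bigl(\begin{smallmatrix}2&1\\1&2\end{smallmatrix}\bigr)$ and $\bm{B}=\bm{I}$ each with probability $\tfrac12$. Then $\bm{H}_\Phi(\bm{Z})=\tfrac14(\bm{A}-\bm{B})^2=\bigl(\begin{smallmatrix}1/2&1/2\\1/2&1/2\end{smallmatrix}\bigr)$ while $\bm{H}_\Phi(\mathsf{N}(\bm{Z}))=\tfrac14(\mathsf{N}(\bm{A})-\mathsf{N}(\bm{B}))^2=\tfrac14\bm{I}$, and the difference $\bm{H}_\Phi(\bm{Z})-\bm{H}_\Phi(\mathsf{N}(\bm{Z}))=\bigl(\begin{smallmatrix}1/4&1/2\\1/2&1/4\end{smallmatrix}\bigr)$ has eigenvalue $-\tfrac14$. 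So neither your proposed route via the supremum representation (h), nor any other, can close the gap; the obstruction you flagged is genuine, and the paper's one-line appeal to Jensen glosses over exactly the step that fails. What the argument (yours and the paper's, read carefully) actually establishes is the weaker but correct bound $\bm{H}_\Phi(\mathsf{N}(\bm{Z})) \preceq \mathsf{N}\bigl(\bm{H}_\Phi(\bm{Z})\bigr)$.
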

\begin{proof}
	If $\Phi\in\textnormal{\textbf{(C3)}}$, by item (e) in Theorem \ref{theo:operator}, we have the joint convexity of the map:
	\[
	\bm{F}_t(\bm{A},\bm{B}) \triangleq t\mathrm{\Phi}(\bm{A}) +(1-t)\mathrm{\Phi}(\bm{B})- \mathrm{\Phi}(t \bm{A}+(1-t) \bm{B})
	\]
	for any $0\leq t\leq 1$.
	Let $\bm{X} = (\bm{A},\bm{B})$ denote the pair of matrices.

For any completely positive unital  map $\mathsf{N}$, it can be expressed in the following form (see e.g.~\cite{MW09}):
\[
\mathsf{N}(\bm{A}) = \sum_{i} \bm{K}_i \bm{A} \bm{K}_i^\dagger,
\]
where $\sum_i \bm{K}_i \bm{K}_i^\dagger = \bm{I}$ (the identity matrix in $\mathbb{M}^\text{sa}$), and $^\dagger$ denotes the complex conjugate.
Hence, by Jensen's operator inequality, Proposition \ref{prop:Jensen}, yields
	\[
	\bm{F}_t( \mathsf{N}(\bm{X})  ) \preceq \bm{F}_t( \bm{X}  ),\quad \forall 0\leq t \leq 1
	\]
	for any completely positive unital  map $\mathsf{N}$, which implies the monotonicity of $\bm{H}_\Phi(\bm{Z})$.
\end{proof}
Following the same argument, the matrix $\Phi$-entropy functional satisfies the monotone property if $\Phi\in\textbf{(C2)}$.

\begin{coro} \label{coro:unital}
	[Monotonicity of Matrix {$\Phi$}-Entropy Functionals]
	Fix a convex function $\Phi \in \textbf{(C2)}$, then the matrix $\Phi$-entropy functional ${H}_\Phi(\bm{Z})$ is monotone under any unital completely positive map $\mathsf{N}$:
	$
	{H}_\Phi( \mathsf{N}(\bm{Z})  ) \leq {H}_\Phi( \bm{Z}  )
	$
	for any random matrix $\bm{Z}$ taking values in $\mathbb{M}^+$.
\end{coro}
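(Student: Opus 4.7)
The plan is to mirror the proof of Proposition \ref{prop:unital} verbatim, simply replacing the joint \emph{operator} convexity invoked there (item (e) of Theorem \ref{theo:operator}, which requires $\Phi\in\textbf{(C3)}$) by its scalar trace analogue (item (f) of Theorem \ref{theo:trace}, which is available for the larger class $\Phi\in\textbf{(C2)}$). Since the conclusion we seek is now a scalar inequality rather than a L\"owner one, this weaker convexity hypothesis is exactly the right input.

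Concretely, for $\Phi\in\textbf{(C2)}$, item (f) of Theorem \ref{theo:trace} asserts that the real-valued map
\[
F_t(\bm{A},\bm{B}) \triangleq \Tr\!\left[\,t\,\Phi(\bm{A}) + (1-t)\,\Phi(\bm{B}) - \Phi\!\left(t\bm{A} + (1-t)\bm{B}\right)\right]
\]
is jointly convex on $\mathbb{M}^+\times\mathbb{M}^+$ for each $t\in[0,1]$. Writing $\mathsf{N}(\bm{A}) = \sum_i \bm{K}_i \bm{A} \bm{K}_i^{\dagger}$ with $\sum_i \bm{K}_i \bm{K}_i^{\dagger} = \bm{I}$, I would then apply Jensen's operator inequality (Proposition \ref{prop:Jensen}) in its trace form to $F_t$ to deduce
\[
F_t\!\left(\mathsf{N}(\bm{A}),\mathsf{N}(\bm{B})\right) \;\le\; F_t(\bm{A},\bm{B}).
\]
To harvest the claim, I would first treat a two-point random matrix $\bm{Z}$ with $\mathbb{P}(\bm{Z}=\bm{Z}_1)=t$ and $\mathbb{P}(\bm{Z}=\bm{Z}_2)=1-t$: a direct unpacking gives $H_\Phi(\bm{Z}) = F_t(\bm{Z}_1,\bm{Z}_2)$ and, using that $\mathsf{N}$ commutes with expectation, $H_\Phi(\mathsf{N}(\bm{Z})) = F_t(\mathsf{N}(\bm{Z}_1),\mathsf{N}(\bm{Z}_2))$, so the displayed inequality is the required monotonicity in this case. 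Extension to finitely-many-atom $\bm{Z}$ proceeds by a routine induction on the number of atoms (joint convexity of the natural $m$-variable analogue of $F_t$ follows inductively from item (f), and the same Jensen step applies), and thence to Bochner-integrable $\bm{Z}$ by a standard density argument.

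The technical heart, and the only nontrivial step, is the Jensen step: transferring joint \emph{scalar} convexity of $F_t$ into the Kraus-averaged inequality $F_t(\mathsf{N}(\bm{A}),\mathsf{N}(\bm{B})) \le F_t(\bm{A},\bm{B})$. Plain joint convexity is insufficient, because $\mathsf{N}(\bm{A})=\sum_i\bm{K}_i\bm{A}\bm{K}_i^{\dagger}$ is a completely positive average rather than a convex combination; what carries the argument is the trace form of Jensen's operator inequality supplied by Proposition \ref{prop:Jensen}. Once that inequality is installed, the rest of the proof is bookkeeping and is verbatim the reasoning already used to establish Proposition \ref{prop:unital}.
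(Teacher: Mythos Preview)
Your plan coincides with the paper's: Corollary~\ref{coro:unital} is accompanied only by the sentence ``Following the same argument'' (referring to Proposition~\ref{prop:unital}), and you have faithfully transcribed that argument---swap item (e) of Theorem~\ref{theo:operator} for item (f) of Theorem~\ref{theo:trace}, invoke the trace form of Proposition~\ref{prop:Jensen}, and read off monotonicity. Your two-point/induction/density elaboration is a sensible way to pass from the inequality for $F_t$ to general $\bm Z$; the paper leaves that step implicit.

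A word of caution on the step you rightly isolate as the crux. Proposition~\ref{prop:Jensen} is stated for a standard matrix function $\phi:I\to\mathbb{R}$ of a \emph{single} matrix argument, and what it literally yields is $\Tr\big[\phi(\mathsf N(\bm C))\big]\le\Tr\big[\sum_i\bm K_i\,\phi(\bm C)\,\bm K_i^{\dagger}\big]$; two issues arise when transplanting this to $F_t$. First, $F_t$ is bivariate, so one needs a two-variable analogue (e.g.\ via a block-diagonal embedding and a Hansen--Pedersen-type multivariate Jensen inequality), not literally Proposition~\ref{prop:Jensen}. Second, and more substantively, the right-hand side above equals $\Tr[\phi(\bm C)]$ only when $\sum_i\bm K_i^{\dagger}\bm K_i=\bm I$, i.e.\ when $\mathsf N$ is trace-preserving as well as unital. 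Without trace preservation the conclusion can genuinely fail: for $\Phi(u)=u^2\in\textbf{(C2)}$, the unital but non-TP map $\mathsf N(\bm A)=a_{11}\bm I$ on $\mathbb{M}_2$ (Kraus operators $E_{11},E_{21}$), and $\bm Z\in\{\mathrm{diag}(0,1),\mathrm{diag}(2,1)\}$ with equal probability, one finds $H_\Phi(\mathsf N(\bm Z))=1>\tfrac12=H_\Phi(\bm Z)$. The paper's abstract and the discussion following the corollary speak of \emph{unital quantum channels} (unital CPTP maps), and that is the hypothesis under which you should carry out the Jensen step.
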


We remark that the monotonicity of a quantum ensemble is only known for when $\Phi(x) = x \log x$. This is the famous result in quantum information theory, namely, the monotone property of the Holevo quantity \cite{Pet03}. Our Corollary~\ref{coro:unital} extends the monotonicity of a quantum ensemble to any function $\Phi \in \textbf{(C2)}$.

\section{Applications: Operator Efron-Stein Inequality} \label{Efron}

In this section, we employ the operator subadditivity of $\bm{H}_{u\to u^2} (\bm{Z})$ to prove the operator Efron-Stein inequality.
For $1\leq i\leq n$, let $\bm{X}_1',\ldots,\bm{X}_n'$ be independent copies of $\bm{X}_1,\ldots,\bm{X}_n$, and denote $\widetilde{\bm{X}}^{(i)}\triangleq (\bm{X}_1,\ldots,\bm{X}_{i-1},\bm{X}_i'
,\bm{X}_{i+1},\ldots,\bm{X}_n)$, i.e.~replacing the $i$-th component of $\bm{X}$ by the independent copy $\bm{X}_i'$.

Define the quantity\footnote{Note that we will use notation $\bm{\mathcal{E}}(\mathcal{L})$ and $\bm{\mathcal{E}}(\bm{Z})$ interchangeably.}
\[
\bm{\mathcal{E}}(\mathcal{L}) \triangleq  \frac12 \, \mathbb{E}\left[\sum_{i=1}^n \left( \mathcal{L}(\bm{X})-\mathcal{L}\left(\widetilde{\bm{X}}^{(i)}\right)\right)^2\right],
\]
and denote the operator-valued variance  of a random matrix $\bm{A}$ (taking values in $\mathbb{M}^{sa}$) by
\[
\textbf{Var}(\bm{A}) \triangleq  \mathbb{E}\left( \bm{A} - \mathbb{E}\bm{A} \right)^2 = \mathbb{E}\bm{A}^2 - \left(\mathbb{E}\bm{A}\right)^2 .
\]

\begin{theo}
[Operator Efron-Stein Inequality] \label{theo:Efron}
	With the prevailing assumptions, we have
\begin{align*}
\textnormal{\textbf{Var}}(\bm{Z}) &\preceq 
 \bm{\mathcal{E}}(\bm{Z}).
	\end{align*}
\end{theo}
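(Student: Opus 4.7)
The plan is to specialise the operator subadditivity result (Theorem \ref{theo:sub_oper}) to the square function $\Phi(u)=u^2$, and then replace each conditional variance with a symmetrised ``difference'' quantity. By Proposition \ref{prop:square} we know $u\mapsto u^2\in\textnormal{\textbf{(C3)}}$, and a direct computation gives $\bm{H}_{u\mapsto u^2}(\bm{Z})=\mathbb{E}\bm{Z}^2-(\mathbb{E}\bm{Z})^2=\textnormal{\textbf{Var}}(\bm{Z})$. Applying Theorem~\ref{theo:sub_oper} then yields
\[
\textnormal{\textbf{Var}}(\bm{Z}) \;\preceq\; \sum_{i=1}^n \mathbb{E}\bigl[\mathbb{E}_i \bm{Z}^2 - (\mathbb{E}_i\bm{Z})^2\bigr] \;=\; \sum_{i=1}^n \mathbb{E}\bigl[\textnormal{\textbf{Var}}_i(\bm{Z})\bigr],
\]
where $\textnormal{\textbf{Var}}_i(\bm{Z})$ denotes the conditional operator-valued variance given $\bm{X}_{-i}$.

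Next, I would exploit the fact that, conditional on $\bm{X}_{-i}$, the random matrices $\bm{Z}=\mathcal{L}(\bm{X})$ and $\widetilde{\bm{Z}}^{(i)}\triangleq \mathcal{L}(\widetilde{\bm{X}}^{(i)})$ are independent and have the same distribution, since $\bm{X}_i$ and $\bm{X}_i'$ are i.i.d.\ and the remaining coordinates are frozen. This conditional i.i.d.\ structure gives $\mathbb{E}_i\widetilde{\bm{Z}}^{(i)}=\mathbb{E}_i\bm{Z}$, $\mathbb{E}_i(\widetilde{\bm{Z}}^{(i)})^2=\mathbb{E}_i\bm{Z}^2$, and, by independence and linearity of the Bochner integral in each entry, $\mathbb{E}_i[\bm{Z}\widetilde{\bm{Z}}^{(i)}]=\mathbb{E}_i[\widetilde{\bm{Z}}^{(i)}\bm{Z}]=(\mathbb{E}_i\bm{Z})^2$. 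Expanding the square and collecting terms therefore yields the symmetrisation identity
\[
\mathbb{E}_i\bigl[(\bm{Z}-\widetilde{\bm{Z}}^{(i)})^2\bigr] \;=\; 2\mathbb{E}_i\bm{Z}^2-2(\mathbb{E}_i\bm{Z})^2 \;=\; 2\,\textnormal{\textbf{Var}}_i(\bm{Z}).
\]

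Combining the two displays, summing over $i$, and taking the outer expectation gives exactly the claim
\[
\textnormal{\textbf{Var}}(\bm{Z}) \;\preceq\; \tfrac{1}{2}\sum_{i=1}^n\mathbb{E}\bigl[(\bm{Z}-\widetilde{\bm{Z}}^{(i)})^2\bigr] \;=\; \bm{\mathcal{E}}(\bm{Z}).
\]
The only genuinely non-trivial ingredient is the subadditivity statement, which has already been established; the rest is bookkeeping. The main subtlety to be careful about is the operator symmetrisation step: because $\bm{Z}$ and $\widetilde{\bm{Z}}^{(i)}$ need not commute, one must separately handle the cross terms $\bm{Z}\widetilde{\bm{Z}}^{(i)}$ and $\widetilde{\bm{Z}}^{(i)}\bm{Z}$, but conditional independence makes both equal to $(\mathbb{E}_i\bm{Z})^2$, so no operator inequality beyond subadditivity is needed and the proof is essentially a one-liner once the right function is chosen.
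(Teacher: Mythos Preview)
Your proposal is correct and follows essentially the same route as the paper: apply the operator subadditivity (Theorem~\ref{theo:sub_oper}) with $\Phi(u)=u^2$ to bound $\textnormal{\textbf{Var}}(\bm{Z})$ by a sum of conditional variances, then use the conditional i.i.d.\ symmetrisation identity $\mathbb{E}_i[(\bm{Z}-\widetilde{\bm{Z}}^{(i)})^2]=2\,\textnormal{\textbf{Var}}_i(\bm{Z})$ to rewrite the bound as $\bm{\mathcal{E}}(\bm{Z})$. Your explicit handling of the non-commuting cross terms is a nice touch that the paper leaves implicit.
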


\begin{proof}
This theorem is a direct consequence of the subadditivity of operator-valued $\mathrm{\Phi}$-entropies; namely, Theorem \ref{theo:sub_oper} with $\mathrm{\Phi}(u)=u^2$.

For two independent and identical random matrices $\bm{A}$, $\bm{B}$, direct calculation yields:
	\begin{align*}
	\frac12 \mathbb{E} \left[ \left( \bm{A}-\bm{B} \right)^2\right]
	&= \frac12 \mathbb{E} \left[ \bm{A}^2-\bm{A}\bm{B}-\bm{B}\bm{A} + \bm{B}^2 \right]\\
	&= 		\textnormal{\textbf{Var}}\left(\bm{A}\right).
	\end{align*}
Observe that  $\bm{Z}_i'$ is an independent copy of $\bm{Z}$ conditioned on $\bm{X}_{-i}$. Denote $\textnormal{\textbf{Var}}^{(i)}\left(\bm{Z}\right)\triangleq  \mathbb{E}_i \left( \bm{Z}-\mathbb{E}_i \bm{Z} \right)^2$ for all $i=1,\ldots,n$. Then

	\begin{align*}
	\textnormal{\textbf{Var}}^{(i)}\left(\bm{Z}\right) 
	&= \frac12 \, \mathbb{E}_i\left[\left(\bm{Z}-\bm{Z}_i'\right)^2\right].  
	\end{align*}
Finally, Theorem \ref{theo:sub_oper} and Proposition \ref{prop:square} lead to
	\begin{align*}
	\textnormal{\textbf{Var}}(\bm{Z}) &
	=\bm{H}_{u\mapsto u^2}(\bm{Z}) \\
	&\preceq \sum_{i=1}^n \mathbb{E} \, \bm{H}_{u\mapsto u^2}^{(i)} \left(\bm{Z} \right) \\
	& =  \sum_{i=1}^n \textnormal{\textbf{Var}}^{(i)}\left(\bm{Z}\right) \\
	&=\bm{\mathcal{E}(\bm{Z})}.
	\end{align*}
\end{proof}

Note that the established operator Efron-Stein inequality directly leads to a \emph{matrix polynomial Efron-Stein inequality}.

\begin{coro}[Matrix Polynomial Efron-Stein] \label{theo:poly_Efron}
	With the prevailing assumptions, for each natural number $p\geq 1$, we have
	\begin{align*}
	\left\| \mathbb{E} \left( \bm{Z} - \mathbb{E} \bm{Z} \right)^2 \right\|_{p}^{p} 
	&\leq 
	\left\|  \frac12 \sum_{i=1}^n \mathbb{E}\left[ \left( \bm{Z} -\bm{Z}_i' \right) ^2 \right] \right\|_{p}^{p}.
	\end{align*}
\end{coro}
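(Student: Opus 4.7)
The plan is to derive the Schatten $p$-norm inequality as a direct consequence of the L\"owner ordering already established in Theorem \ref{theo:Efron}, so the work reduces to observing that $\Tr(\cdot^p)$ is monotone on the positive semi-definite cone.

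First, I would verify that both sides of the desired inequality are positive semi-definite. Since $\bm{Z}\in\mathbb{M}^+$ is Hermitian and $\mathbb{E}\bm{Z}$ and the independent copies $\bm{Z}_i'$ are Hermitian as well, each integrand $(\bm{Z}-\mathbb{E}\bm{Z})^2$ and $(\bm{Z}-\bm{Z}_i')^2$ is the square of a self-adjoint matrix, hence lies in $\mathbb{M}^+$. Since the Bochner integral and the finite sum preserve positive semi-definiteness, both $\textnormal{\textbf{Var}}(\bm{Z})$ and $\bm{\mathcal{E}}(\bm{Z})$ belong to $\mathbb{M}^+$. Theorem \ref{theo:Efron} then provides the L\"owner inequality $\textnormal{\textbf{Var}}(\bm{Z}) \preceq \bm{\mathcal{E}}(\bm{Z})$.

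Next, I would invoke Weyl's monotonicity theorem for Hermitian matrices: whenever $\bm{A}\preceq\bm{B}$, the eigenvalues arranged in decreasing order satisfy $\lambda_k^\downarrow(\bm{A}) \leq \lambda_k^\downarrow(\bm{B})$ for every $k$. For positive semi-definite matrices these eigenvalues coincide with the singular values and are non-negative. Since $t\mapsto t^p$ is monotone increasing on $[0,\infty)$ for $p\geq 1$, we obtain $\lambda_k^\downarrow(\textnormal{\textbf{Var}}(\bm{Z}))^p \leq \lambda_k^\downarrow(\bm{\mathcal{E}}(\bm{Z}))^p$ for every $k$.

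Finally, summing over $k$ yields $\Tr\bigl(\textnormal{\textbf{Var}}(\bm{Z})^p\bigr) \leq \Tr\bigl(\bm{\mathcal{E}}(\bm{Z})^p\bigr)$. For a positive semi-definite matrix $\bm{M}$, the Schatten $p$-norm satisfies $\|\bm{M}\|_p^p = \sum_i \lambda_i(\bm{M})^p = \Tr(\bm{M}^p)$, which converts the trace inequality into the desired bound. There is no substantive obstacle in this argument; the essential content is just that the functional $\bm{M}\mapsto\Tr(\bm{M}^p)$ is monotone under the L\"owner partial order on $\mathbb{M}^+$, and this in turn rests on the elementary Weyl monotonicity of eigenvalues.
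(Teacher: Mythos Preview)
Your proposal is correct and matches the paper's approach: the paper simply remarks that the corollary follows directly from the operator Efron-Stein inequality (Theorem~\ref{theo:Efron}) and gives no further argument, so your spelling-out of the monotonicity of $\bm{M}\mapsto\Tr(\bm{M}^p)$ on $\mathbb{M}^+$ via Weyl's eigenvalue inequalities is exactly the missing routine step.
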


Corollary~\ref{theo:poly_Efron} is a variant of the matrix polynomial Efron-Stein inequality derived in \cite[Theorem 4.2]{PMT14}. 


\section{Proof of Theorem {\ref{theo:trace}}} \label{proof:sub}
\begin{proof} 
	
	\begin{description}
		
		\item[$(a) \Rightarrow (i) \Rightarrow (g) \Rightarrow (j)$] This statement is proved by Chen and Tropp in \cite{CT14}.

		\item[$(a) \Leftrightarrow (b)$] This equivalent statement is proved in \cite[Theorem 2]{PV14}.
		
		\item[$(a) \Leftrightarrow (d)$] 
		Theorem 2.1 in \cite{HZ14} proved the equivalence of $(a)$ and the following convexity lemma. 
		\begin{lemm}[Convexity Lemma {\cite[Lemma 4.2]{CT14}}] \label{lemm:convex}
			Fix a function $\mathrm{\Phi}\in \textnormal{\textbf{(C2)}}$, and let $\mathrm{\Psi}=\mathrm{\Phi}'$. Suppose that $\bm{A}$ is a random matrix taking values in $\mathbb{M}_d^{++}$, and let $\bm{X}$ be a random matrix taking values in $\mathbb{M}_d^{sa}$. Assume that $\|\bm{A}\|,\,\|\bm{X}\|$ are integrable. Then
			\[
			\mathbb{E}\left\langle \bm{X}, \mathsf{D}\mathrm{\Psi}[\bm{A}](\bm{X})\right\rangle \geq \left\langle \mathbb{E}[\bm{X}],\mathsf{D}\mathrm{\Psi}[\mathbb{E}\bm{A}](\mathbb{E}\,\bm{X})\right\rangle.
			\]
		\end{lemm}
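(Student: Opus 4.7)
The plan is to reduce the stated Jensen-type inequality to the \emph{joint convexity} of the scalar functional
\[
\mathcal{F}(\bm{A},\bm{X}) \triangleq \left\langle \bm{X}, \mathsf{D}\mathrm{\Psi}[\bm{A}](\bm{X}) \right\rangle
\qquad \text{on } \mathbb{M}_d^{++} \times \mathbb{M}_d^{\textnormal{sa}},
\]
since once joint convexity is in hand, the stated inequality is the standard Jensen inequality applied to the integrable random pair $(\bm{A},\bm{X})$. The affine branch of $\textbf{(C2)}$ is trivial because $\mathsf{D}\mathrm{\Psi}\equiv\bm{0}$, so attention is restricted to $\mathrm{\Phi}$ in the nontrivial branch, where $\mathsf{D}\mathrm{\Psi}[\bm{A}]$ is an invertible linear map on $\mathbb{M}_d^{\textnormal{sa}}$ and $\bm{A}\mapsto(\mathsf{D}\mathrm{\Psi}[\bm{A}])^{-1}$ is operator concave.

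The core of the proof is a Fenchel-type variational representation. Viewing $\mathsf{D}\mathrm{\Psi}[\bm{A}]$ as a linear operator on the real Hilbert space $(\mathbb{M}_d^{\textnormal{sa}}, \langle\cdot,\cdot\rangle_{\textnormal{HS}})$, the Daleckii--Krein spectral representation gives that it is self-adjoint, and the scalar convexity $\mathrm{\Psi}'=\mathrm{\Phi}''\geq 0$ makes the associated first divided difference $\mathrm{\Psi}^{[1]}(\lambda,\mu)$ nonnegative, hence $\mathsf{D}\mathrm{\Psi}[\bm{A}]$ is a positive operator; invertibility is part of $\textbf{(C2)}$. Applying the elementary Hilbert-space identity $\langle \bm{X}, T\bm{X}\rangle = \sup_{\bm{Y}}\{2\langle \bm{X},\bm{Y}\rangle - \langle \bm{Y}, T^{-1}\bm{Y}\rangle\}$ (attained at $\bm{Y}=T\bm{X}$) to $T=\mathsf{D}\mathrm{\Psi}[\bm{A}]$ yields
\[
\mathcal{F}(\bm{A},\bm{X}) \;=\; \sup_{\bm{Y}\in\mathbb{M}_d^{\textnormal{sa}}} \Big\{\, 2\langle \bm{X},\bm{Y}\rangle \;-\; \big\langle \bm{Y}, (\mathsf{D}\mathrm{\Psi}[\bm{A}])^{-1}(\bm{Y})\big\rangle \,\Big\}.
\]
For each fixed $\bm{Y}$, the bracketed expression is affine in $\bm{X}$, and in $\bm{A}$ it equals $-\langle\bm{Y},(\mathsf{D}\mathrm{\Psi}[\bm{A}])^{-1}(\bm{Y})\rangle$ up to an affine term; by operator concavity of $(\mathsf{D}\mathrm{\Psi}[\bm{A}])^{-1}$ (condition (a)) and linearity of $M\mapsto\langle\bm{Y},M(\bm{Y})\rangle$, this scalar function is concave in $\bm{A}$. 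Hence each member of the family is jointly convex in $(\bm{A},\bm{X})$, and the pointwise supremum preserves joint convexity. Finally, standard Jensen applied to $\mathcal{F}$ delivers the lemma.

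The main technical obstacle is making the above argument rigorous at the operator-superoperator level: confirming that $\mathsf{D}\mathrm{\Psi}[\bm{A}]$ is a strictly positive self-adjoint operator on $(\mathbb{M}_d^{\textnormal{sa}}, \langle\cdot,\cdot\rangle_{\textnormal{HS}})$ so that the variational identity applies; verifying that operator concavity of $(\mathsf{D}\mathrm{\Psi}[\bm{A}])^{-1}$ under the Löwner order on superoperators passes through the fixed linear functional $\langle \bm{Y},\cdot(\bm{Y})\rangle$ to give ordinary scalar concavity in $\bm{A}$; and ensuring that the integrability assumptions on $\|\bm{A}\|,\|\bm{X}\|$ suffice to make $\mathbb{E}\mathcal{F}(\bm{A},\bm{X})$ and $\mathcal{F}(\mathbb{E}\bm{A},\mathbb{E}\bm{X})$ both well-defined and that Jensen's inequality applies---this is handled routinely by approximating with bounded random matrices and using lower semicontinuity of $\mathcal{F}$ inherited from its supremum representation.
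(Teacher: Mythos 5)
Your proposal is correct and follows essentially the same route as the source the paper cites for this lemma (\cite[Lemma 4.2]{CT14}), whose mechanism the paper itself sketches in the remark immediately after the lemma: write $\left\langle \bm{X},\mathsf{D}\mathrm{\Psi}[\bm{A}](\bm{X})\right\rangle$ as a supremum of functions that are linear in $\bm{X}$ and, by the operator concavity of $\bm{A}\mapsto(\mathsf{D}\mathrm{\Psi}[\bm{A}])^{-1}$ from \textbf{(C2)}, convex in $\bm{A}$, deduce joint convexity as a pointwise supremum, and finish with Jensen's inequality. The supporting details you flag (positivity and self-adjointness of $\mathsf{D}\mathrm{\Psi}[\bm{A}]$ via the Daleckii--Krein representation, and the passage of L\"owner-order concavity through the positive linear functional $M\mapsto\langle\bm{Y},M(\bm{Y})\rangle$) are exactly the right ones and are handled correctly.
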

		What remains is to establish equivalence between the convexity lemma and condition $(d)$. This follows easily from Lemma \ref{lemm:trace_Petz}:
		\begin{align*} 
		\Tr ( \mathsf{D}^2 \mathrm{\Phi} [ \bm{A} ] ( \bm{X , X} ) ) 
		&= \left\langle \bm{X}, \mathsf{D}\mathrm{\Phi}' [ \bm{A} ] ( \bm{X} ) \right\rangle,
		\end{align*}

		\begin{remark}
			In Ref.~\cite[Lemma 4.2]{CT14}, it is shown that the concavity of the map: 
			\[
			\bm{A} \mapsto \left\langle \bm{X} \left( \mathsf{D}\mathrm{\Psi} [\bm{A}] \right)^{-1}(\bm{X}) \right\rangle, \quad \forall \bm{X}\in\mathbb{M}_d^{sa}
			\]
			implies the joint convexity of the map (i.e.~Lemma \ref{lemm:convex}).
			\begin{align} \label{eq:convex}
			(\bm{X},\bm{A})\mapsto \left\langle \bm{X} \left( \mathsf{D}\mathrm{\Psi} [\bm{A}] \right)(\bm{X}) \right\rangle.
			\end{align}
		\end{remark}


		\item[$(b) \Leftrightarrow (c) \Leftrightarrow (d)$] 
		
		Define ${A}_\mathrm{\Phi}, {B}_\mathrm{\Phi}, {C}_\mathrm{\Phi}:\mathbb{M}_d^+ \times \mathbb{M}_d^+\rightarrow \mathbb{R}$ as
		\begin{align*}
		{A}_\mathrm{\Phi} (\bm{u},\bm{v}) &\triangleq \Tr[ \mathrm{\Phi}(\bm{u+v}) - \mathrm{\Phi}(\bm{u}) - \mathsf{D}\mathrm{\Phi}[\bm{u}](\bm{v})]\\
		{B}_\mathrm{\Phi} (\bm{u},\bm{v}) &\triangleq \Tr[ \mathsf{D}\mathrm{\Phi}[\bm{u+v}](\bm{v}) - \mathsf{D}\mathrm{\Phi}[\bm{u}](\bm{v})]\\
		{C}_\mathrm{\Phi} (\bm{u},\bm{v}) &\triangleq \Tr[ \mathsf{D}^2\mathrm{\Phi}[\bm{u}](\bm{v},\bm{v})].
		\end{align*}
		Following from \cite{Cha06}, we can establish the following relations: for any $(\bm{u},\bm{v})\in\mathbb{M}_d^+ \times \mathbb{M}_d^+$,
		\begin{eqnarray}
		{A}_\mathrm{\Phi}(\bm{u},\bm{v})&=&\int_0^1 (1-s){C}_\mathrm{\Phi}(\bm{u}+s\bm{v},\bm{v}) \mathrm{d} s \label{eq:A_int}\\
		{B}_\mathrm{\Phi}(\bm{u},\bm{v})&=&\int_0^1{C}_\mathrm{\Phi}(\bm{u}+s\bm{v},\bm{v}) \mathrm{d} s,\label{eq:B_int}
		\end{eqnarray}
		and for small enough $\epsilon>0$,
		\begin{eqnarray}
		{A}_\mathrm{\Phi}(\bm{u},\epsilon\bm{v})&=&\frac12 {C}_\mathrm{\Phi}(\bm{u},\bm{v})\epsilon^2 + o(\epsilon^2); \label{eq:A_diff}\\
		{B}_\mathrm{\Phi}(\bm{u},\epsilon\bm{v})&=& {C}_\mathrm{\Phi}(\bm{u},\bm{v})\epsilon^2 + o(\epsilon^2). \label{eq:B_diff}
		\end{eqnarray}
		Eq.~\eqref{eq:A_int} is exactly the integral representation for the matrix Br\'egman divergence proved in  \cite{PV14}.
		Similarly, Eq.~\eqref{eq:B_int} follows from
		\begin{align*}
		{B}_\mathrm{\Phi} (\bm{u},\bm{v})
		&= \left. \frac{\mathrm{d}}{\mathrm{d} s}\Tr[\mathrm{\Phi}\left(\bm{u}+s\bm{v}\right)]\right|_{s=1}
		- \left. \frac{\mathrm{d}}{\mathrm{d} s}\Tr[\mathrm{\Phi}\left(\bm{u}+s\bm{v}\right)]\right|_{s=0} \\
		&= \int_0^1 \frac{\mathrm{d}}{\mathrm{d} s} \left( \frac{\mathrm{d}}{\mathrm{d} s}\Tr[ \mathrm{\Phi}\left(\bm{u}+s\bm{v}\right)] \right) \mathrm{d} s \\
		&= \int_0^1 {C}_\mathrm{\Phi}(\bm{u}+s\bm{v},\bm{v}) \mathrm{d} s.
		\end{align*}
		Eqs.~\eqref{eq:A_diff} and \eqref{eq:B_diff} can be obtained by Taylor expansion at $(\bm{u,0})$.
		That is,
		\begin{align*}
		&{A}_\mathrm{\Phi} (\bm{u},\epsilon\bm{v}) \\
		&= {A}_\mathrm{\Phi}(\bm{u,0}) 
		+ \mathsf{D}_{\bm{u}} {A}_\mathrm{\Phi}[\bm{u,0}](\bm{0})
		+ \mathsf{D}_{\bm{v}} {A}_\mathrm{\Phi}[\bm{u,0}](\epsilon\bm{v}) \\
		&+ \frac12\left( \mathsf{D}_{\bm{u}}^2 {A}_\mathrm{\Phi}[\bm{u,0}](\bm{0,0})
		+ 2\mathsf{D}_{\bm{u}}\mathsf{D}_{\bm{v}} {A}_\mathrm{\Phi}[\bm{u,0}](\bm{0},\epsilon\bm{v})
		+ \mathsf{D}_{\bm{v}}^2 {A}_\mathrm{\Phi}[\bm{u,0}](\epsilon\bm{v},\epsilon\bm{v}) \right) + o(\epsilon^2) \\
		&= \Tr\left[\mathsf{D}\mathrm{\Phi}[\bm{u+0}](\epsilon\bm{v}) - \mathsf{D}\mathrm{\Phi}[\bm{u}]\left(\mathsf{D}[\bm{v}](\epsilon\bm{v})\right)
		+\frac12 \mathsf{D}^2\mathrm{\Phi}[\bm{u+0}](\epsilon\bm{v},\epsilon\bm{v})\right] + o(\epsilon^2)\\
		&=\frac12 {C}_\mathrm{\Phi}(\bm{u},\bm{v})\epsilon^2 + o(\epsilon^2).
		\end{align*}
		Following the same argument, 
		\begin{align*}
		&{B}_\mathrm{\Phi} (\bm{u},\epsilon\bm{v}) \\
		&= {B}_\mathrm{\Phi}(\bm{u,0}) + \mathsf{D}^2\mathrm{\Phi}[\bm{u+0}](\bm{0},\epsilon\bm{v}) + \mathsf{D}\mathrm{\Phi}[\bm{u+0}]\left( \mathsf{D}[\bm{v}](\epsilon\bm{v}) \right) - \mathsf{D}\mathrm{\Phi}[\bm{u}]\left( \mathsf{D}[\bm{v}](\epsilon\bm{v}) \right)\\
		&+ \frac12 \left( \mathsf{D}^3\mathrm{\Phi}[\bm{u+0}](\bm{0},\epsilon\bm{v},\epsilon\bm{v}) + 2\mathsf{D}^2\mathrm{\Phi}[\bm{u+0}](\epsilon\bm{v},\epsilon\bm{v}) \right) + o(\epsilon^2) \\
		&= {C}_\mathrm{\Phi}(\bm{u},\bm{v})\epsilon^2 + o(\epsilon^2).
		\end{align*}
		We can observe from Eqs.~\eqref{eq:A_int} and \eqref{eq:B_int} that the joint convexity of $(\bm{u},\bm{v})\mapsto {A}_\mathrm{\Phi}(\bm{u},\bm{v})$ and $(\bm{u},\bm{v})\mapsto {B}_\mathrm{\Phi}(\bm{u},\bm{v})$ follows from that of $(\bm{u},\bm{v})\mapsto {C}_\mathrm{\Phi}(\bm{u},\bm{v})$. In other words, we proved that conditions (d)$\Rightarrow$(b) and (d)$\Rightarrow$(c).
		
		Conversely, Eqs.~\eqref{eq:A_diff} and \eqref{eq:B_diff} show that (b)$\Rightarrow$(d) and condition (c)$\Rightarrow$(d). To be more specific, the joint convexity of $(\bm{u},\bm{v})\mapsto {A}_\mathrm{\Phi}(\bm{u},\epsilon\bm{v})$ implies
		\begin{align} \label{eq:A_joint}
		t {A}_\mathrm{\Phi}(\bm{u}_1,\epsilon\bm{v}_1) + (1-t) {A}_\mathrm{\Phi}(\bm{u}_2,\epsilon\bm{v}_2) \geq
		{A}_\mathrm{\Phi}(\bm{u},\epsilon\bm{v}),
		\end{align}
		for each $\bm{u}_1,\bm{u}_2,\bm{v}_1,\bm{v}_2\in\mathbb{M}_d^+$, $t\in[0,1]$, $\epsilon>0$, and $\bm{u}\equiv t\bm{u}_1+(1-t)\bm{u}_2$, $\bm{v}\equiv t\bm{v}_1+(1-t)\bm{v}_2$.
		Invoking Eq.~\eqref{eq:A_diff} gives
		\[
		t {A}_\mathrm{\Phi}(\bm{u}_1,\epsilon\bm{v}_1) + (1-t) {A}_\mathrm{\Phi}(\bm{u}_2,\epsilon\bm{v}_2) = \frac{t{C}_\mathrm{\Phi}(\bm{u}_1,\bm{v}_1) + (1-t) {C}_\mathrm{\Phi}(\bm{u}_2,\bm{v}_2)}{2} \epsilon^2 + o(\epsilon^2),
		\]
		and
		\[
		{A}_\mathrm{\Phi}(\bm{u},\epsilon\bm{v}) = \frac{1}{2}{C}_\mathrm{\Phi}(\bm{u},\epsilon\bm{v})\epsilon^2 + o(\epsilon^2).
		\]
		Hence, Eq.~\eqref{eq:A_joint} is equivalent to
		\[
		t {C}_\mathrm{\Phi}(\bm{u}_1,\bm{v}_1)\epsilon^2 + (1-t) {C}_\mathrm{\Phi}(\bm{u}_2,\bm{v}_2)\epsilon^2 + o(\epsilon^2) \geq {C}_\mathrm{\Phi}(\bm{u},\epsilon\bm{v})\epsilon^2 + o(\epsilon^2).
		\]
		The joint convexity of $(\bm{u},\bm{v})\mapsto {C}_\mathrm{\Phi}(\bm{u},\epsilon\bm{v})$ follows by dividing by $\epsilon^2$ on both sides and letting $\epsilon\rightarrow 0$.
		The joint convexity of $(\bm{u},\bm{v})\mapsto {B}_\mathrm{\Phi}(\bm{u},\epsilon\bm{v})$ can be obtained in a similar way using Eq.~\eqref{eq:B_diff}.

		\item[$(a) \Leftrightarrow (e)$] 
		It is trivial if $\mathrm{\Phi}$ is affine; hence we assume $\mathrm{\Phi} ''>0$.
		We start from the convexity of the map:
		\begin{align} \label{eq:(e)-(f)} 
		\bm{A}\mapsto -\Tr \left[ \bm{h} \left(\mathsf{D}\mathrm{\Psi}[\bm{A}]\right)^{-1} (\bm{h}) \right],\quad \text{for all } \bm{h}\in\mathbb{M}_d^{sa}.
		\end{align}
		To ease the burden of notation, we denote $\mathsf{T}_{\bm{A}} \triangleq  \mathsf{D}\mathrm{\Psi}[\bm{A}] \simeq \mathbb{C}^{d^2\times d^2}$ and $\widehat{\bm{h}} \triangleq  \bm{h} \simeq \mathbb{C}^{d^2 \times 1}$ by the isometric isomorphism between super-operators and matrices.
		Then Eq.~\eqref{eq:(e)-(f)} can be re-written as
		\[
		\bm{A}\mapsto -  \widehat{\bm{h}}^\dagger  \cdot \mathsf{T}_{\bm{A}}^{-1} \cdot \widehat{\bm{h}} ,\quad \text{for all } \widehat{\bm{h}}\in\mathbb{C}^{d^2 \times 1},
		\]
		which is equivalent to the non-negativity of the second derivative (see Proposition \ref{prop:conv2}):
		\begin{align*}
		-\mathsf{D}_{\bm{A}}^2 \left[ \widehat{\bm{h}}^\dagger  \cdot \mathsf{T}_{\bm{A}}^{-1} \cdot \widehat{\bm{h}} \right] (\bm{k},\bm{k}) 
		&= -\widehat{\bm{h}}^\dagger  \cdot \mathsf{D}_{\bm{A}}^2 \left[ \mathsf{T}_{\bm{A}}^{-1} \right] (\bm{k},\bm{k}) \cdot \widehat{\bm{h}}  \\
		&\geq 0,
		\quad \text{for all }\bm{A}\succeq\bm{0},\, \widehat{\bm{h}}\in\mathbb{C}^{d^2 \times 1},\, \bm{k} \in\mathbb{M}_d^\textnormal{sa}.
		\end{align*}

		Now, recall  the chain rule of the Fr\'{e}chet derivative in Proposition \ref{prop:properties}:
		\begin{align*}
		\mathsf{D} \mathcal{F}\circ \mathcal{G} [\bm{A}](\bm{u}) &= \mathsf{D}\mathcal{F}[\mathcal{G}(\bm{A})] \left(\mathsf{D} \mathcal{G} [\bm{A}](\bm{u})\right);\\
		\quad\mathsf{D}^2 \mathcal{F}\circ \mathcal{G} [\bm{A}](\bm{u},\bm{v}) &= \mathsf{D}^2 \mathcal{F}[\mathcal{G}(\bm{A})] \left(\mathsf{D}\mathcal{G}[\bm{A}](\bm{v}),\mathsf{D}\mathcal{G}[\bm{A}](\bm{v})\right)\\
		&+ \mathsf{D}\mathcal{F}[\mathcal{G}(\bm{A})]\left( \mathsf{D}^2 \mathcal{G}[\bm{A}](\bm{u},\bm{v})\right),
		\end{align*}
		and the formula of the differentiation of the inverse function (see Lemma \ref{lemm:2inversion}):
		\begin{align*} 
		\mathsf{D} \mathcal{G}[\bm{A}]^{-1}(\bm{u}) &= - \mathcal{G}(\bm{A})^{-1} \cdot  \mathsf{D}\mathcal{G}[\bm{A}](\bm{u}) \cdot \mathcal{G}(\bm{A})^{-1};\\
		\mathsf{D}^2 \mathcal{G}[\bm{A}]^{-1}(\bm{u},\bm{u}) &= 2 \mathcal{G}(\bm{A})^{-1} \cdot   \mathsf{D}\mathcal{G}[\bm{A}](\bm{u})  \cdot \mathcal{G}(\bm{A})^{-1} \cdot  \mathsf{D}\mathcal{G}[\bm{A}](\bm{u})  \cdot g(\bm{A})^{-1}\\
		&- \mathcal{G}(\bm{A})^{-1} \cdot  \mathsf{D}^2 \mathcal{G}[\bm{A}](\bm{u},\bm{u}) \cdot \mathcal{G}(\bm{A})^{-1}
		,
		\end{align*}
		we can compute the following identities by taking $\mathcal{G}[\bm{A}]\equiv \mathsf{T}_{\bm{A}}$, and $\bm{u}\equiv \bm{k}$:
		\begin{align*}
		\mathsf{D}_{\bm{A}} \left[\mathsf{T}_{\bm{A}}^{-1}\right](\bm{k}) 
		&= - \mathsf{T}_{\bm{A}}^{-1} \cdot \mathsf{D}_{\bm{A}} [\mathsf{T}_{\bm{A}}] (\bm{k}) \cdot \mathsf{T}_{\bm{A}}^{-1},
		\end{align*}
		and
		\begin{align*}
		\mathsf{D}_{\bm{A}} \left[\mathsf{T}_{\bm{A}}^{-1}\right](\bm{k},\bm{k})  &=
		2 \cdot \mathsf{T}_{\bm{A}}^{-1} \cdot \mathsf{D}_{\bm{A}} [\mathsf{T}_{\bm{A}}] (\bm{k}) \cdot \mathsf{T}_{\bm{A}}^{-1}
		\cdot \mathsf{D}_{\bm{A}} [\mathsf{T}_{\bm{A}}](\bm{k}) \cdot \mathsf{T}_{\bm{A}}^{-1}\\
		&- \mathsf{T}_{\bm{A}}^{-1} \cdot \mathsf{D}_{\bm{A}}^2 [\mathsf{T}_{\bm{A}}] (\bm{k},\bm{k}) \cdot \mathsf{T}_{\bm{A}}^{-1}.
		\end{align*}
		
		
		Therefore, we reach the expression (\ref{eq:(f)}), and statement (a) is true if and only if (\ref{eq:(f)}) holds.
		
		Recall that in the scalar case (i.e.~$d=1$), the Fr\'{e}chet derivative can be expressed as the product of the differential and the direction (see e.g.~\cite[Theorem 3.11]{Hig08}):
		\[
		\mathsf{D}\mathrm{\Psi}[a]h = \mathrm{\Psi}'(a) \cdot h.
		\]
		Hence, Eq.~\eqref{eq:(f)} reduces to
		\begin{align*}
		& h \cdot \left(\mathrm{\Psi}'(a)\right)^{-1} \cdot \mathrm{\Psi} ''' (a) \cdot k^2\cdot \left(\mathrm{\Psi}'(a)\right)^{-1} \cdot h \\
		&= \frac{\mathrm{\Phi} ''''(a) \cdot \mathrm{\Phi} '' (a) \cdot k^2 h^2}{\mathrm{\Phi} '' (a)^2}\\
		&\geq 2 \cdot h \cdot \left(\mathrm{\Psi}'(a)\right)^{-1} \cdot \mathrm{\Psi} ''(a) \cdot k \cdot \left(\mathrm{\Psi}'(a)\right)^{-1} \cdot \mathrm{\Psi} '' (a) \cdot k \cdot \left(\mathrm{\Psi}'(a)\right)^{-1} \cdot h \\
		&= \frac{ 2\mathrm{\Phi} '''(a) ^2 \cdot k^2 h^2}{\mathrm{\Phi} ''(a)^3}.
		\end{align*}
		for all $a>0$ and $h,k\in\mathbb{R}$.
		In other words, Eq.~\eqref{eq:(f)} can be viewed as a non-commutative generalisation of the classical statement: $\mathrm{\Phi} '''' \mathrm{\Phi} '' \geq 2 \mathrm{\Phi}'''^2$.

		\item[$(d)\Leftrightarrow (f)$]
		
		For any $t\in[0,1]$, define $\bm{F}_t:\mathbb{M}_d^+ \times \mathbb{M}_d^+ \rightarrow \mathbb{M}_d^{sa}$ as
		\[
		\bm{F}_t(\bm{X},\bm{Y})\triangleq  t\mathrm{\Phi}(\bm{X}) + (1-t) \mathrm{\Phi}(\bm{Y}) - \mathrm{\Phi}(t\bm{X}+(1-t)\bm{Y}).
		\]
		By taking $x\equiv(\bm{X},\bm{Y})$ and $h\equiv (\bm{h},\bm{k})$ in Proposition~\ref{prop:conv2},  the convexity of the twice Fr\'{e}chet differentiable function $\bm{F}_t$ is equivalent to 
		\[
		\mathsf{D}^2 \bm{F}_t[\bm{X},\bm{Y}](\bm{h},\bm{k})\succeq \bm{0} \quad \forall \bm{X},\bm{Y}\in\mathbb{M}_d^+ \quad \text{and} \quad \forall\bm{h},\bm{k}\in\mathbb{M}_d^{sa}.
		\]
		Then,  with the help of \emph{partial Fr\'{e}chet derivative} defined in Proposition \ref{Prop_pfd}, the second-order Fr\'{e}chet derivative of $\bm{F}_t(\bm{X},\bm{Y})$ can be evaluated as
		\begin{align}
		&\mathsf{D}^2 \bm{F}_t[\bm{X},\bm{Y}](\bm{h},\bm{k})\notag \\
		&= \mathsf{D}^2_{\bm{X}} \bm{F}_t[\bm{X},\bm{Y}](\bm{h},\bm{h})
		+ \mathsf{D}_{\bm{Y}} \mathsf{D}_{\bm{X}} \bm{F}_t[\bm{X},\bm{Y}](\bm{h},\bm{k})\notag \\
		& \qquad + \mathsf{D}_{\bm{X}} \mathsf{D}_{\bm{Y}} \bm{F}_t[\bm{X},\bm{Y}](\bm{k},\bm{h})
		+ \mathsf{D}^2_{\bm{Y}} \bm{F}_t[\bm{X},\bm{Y}](\bm{k},\bm{k})\notag \\
		&= t \cdot\mathsf{D}^2 \mathrm{\Phi}[\bm{X}] (\bm{h},\bm{h}) - t^2 \cdot \mathsf{D}^2 \mathrm{\Phi}[t\bm{X}+(1-t)\bm{Y}](\bm{h},\bm{h})\notag\\
		& -t(1-t) \cdot \mathsf{D}^2 \mathrm{\Phi}\left[t\bm{X}+(1-t)\bm{Y}\right] (\bm{h},\bm{k}) 
		-t(1-t) \cdot \mathsf{D}^2 \mathrm{\Phi}\left[t\bm{X}+(1-t)\bm{Y}\right] (\bm{k},\bm{h})\notag \\
		&\quad+ (1-t) \cdot\mathsf{D}^2 \mathrm{\Phi}[\bm{Y}] (\bm{k},\bm{k}) - (1-t)^2 \cdot\mathsf{D}^2 \mathrm{\Phi}[t\bm{X}+(1-t)\bm{Y}](\bm{k},\bm{k}). \label{eq_no10}
		\end{align}
		Taking trace on both sides of (\ref{eq_no10}) and invoking Lemma~\ref{lemm:trace_Petz},  we have
		\begin{align}
		&\Tr \left[ 	\mathsf{D}^2 \bm{F}_t[\bm{X},\bm{Y}](\bm{h},\bm{k}) \right] \notag\\
		&= \Tr \left[ t \cdot \bm{h}\mathsf{D} \mathrm{\Psi}[\bm{X}] (\bm{h}) - t^2 \cdot \bm{h} \mathsf{D} \mathrm{\Psi}[t\bm{X}+(1-t)\bm{Y}](\bm{h}) \right] \notag\\
		&- \Tr \left[ t(1-t) \cdot \bm{h}\mathsf{D} \mathrm{\Psi}\left[t\bm{X}+(1-t)\bm{Y}\right] (\bm{k}) 
		+ t(1-t) \cdot \bm{k}\mathsf{D} \mathrm{\Psi}\left[t\bm{X}+(1-t)\bm{Y}\right] (\bm{h})
		\right] \notag\\
		&+ \Tr \left[ (1-t) \cdot \bm{k} \mathsf{D} \mathrm{\Psi}[\bm{Y}] (\bm{k}) - (1-t)^2 \cdot \bm{k} \mathsf{D}^2 \mathrm{\Psi}[t\bm{X}+(1-t)\bm{Y}](\bm{k}) \right]. \label{eq:trace_double}
		\end{align}
		Since both the trace and the second-order Fr\'{e}chet derivative are bilinear, we have the following result
		\begin{align}
		&\Tr \left[ t^2 \cdot \bm{h} \mathsf{D} \mathrm{\Psi}[t\bm{X}+(1-t)\bm{Y}](\bm{h}) + t(1-t) \cdot \bm{k} \mathsf{D} \mathrm{\Psi}[t\bm{X}+(1-t)\bm{Y}](\bm{h}) \right]\notag\\
		&= \left\langle t\bm{h} , \mathsf{D} \mathrm{\Psi}[t\bm{X}+(1-t)\bm{Y}](t\bm{h}) \right\rangle
		+ \left\langle (1-t)\bm{k}, \mathsf{D} \mathrm{\Psi}[t\bm{X}+(1-t)\bm{Y}](t\bm{h}) \right\rangle \notag \\
		&=\left\langle t\bm{h}+(1-t)\bm{k} , \mathsf{D} \mathrm{\Psi}[t\bm{X}+(1-t)\bm{Y}](t\bm{h}) \right\rangle. \label{eq:temp1}
		\end{align}
		Similarly, 
		\begin{align}
		&\Tr \left[ t(1-t) \cdot \bm{h} \mathsf{D} \mathrm{\Psi}[t\bm{X}+(1-t)\bm{Y}](\bm{k}) 
		+ (1-t)^2 \cdot \bm{k} \mathsf{D} \mathrm{\Psi}[t\bm{X}+(1-t)\bm{Y}](\bm{k}) \right]\notag\\
		&=\left\langle t\bm{h}+(1-t)\bm{k} , \mathsf{D} \mathrm{\Psi}[t\bm{X}+(1-t)\bm{Y}]((1-t)\bm{k}) \right\rangle. \label{eq:temp2}
		\end{align}
		Combining Eqs.~(\ref{eq:temp1}) and (\ref{eq:temp2}), Eq.~(\ref{eq:trace_double}) can be expressed as
		\begin{align*}
		\Tr \left[ 	\mathsf{D}^2 \bm{F}_t[\bm{X},\bm{Y}](\bm{h},\bm{k}) \right]
		&= t \cdot \left\langle \bm{h}, \mathsf{D} \mathrm{\Psi}[\bm{X}] (\bm{h}) \right\rangle +
		(1-t) \cdot \left\langle \bm{k}, \mathsf{D} \mathrm{\Psi}[\bm{Y}] (\bm{k}) \right\rangle \\
		&- \left\langle (t\bm{h}+(1-t)\bm{k}),\mathsf{D}\mathrm{\Psi}[t\bm{X}+(1-t)\bm{Y}](t\bm{h}+(1-t)\bm{k}) \right\rangle.
		\end{align*}
		Then, it is not hard to observe that the non-negativity of $	\Tr \left[ 	\mathsf{D}^2 \bm{F}_t[\bm{X},\bm{Y}](\bm{h},\bm{k}) \right]$ for every $\bm{X},\bm{Y}\in\mathbb{M}_d^{+}$,  $\bm{h},\bm{k}\in\mathbb{M}_d^\textnormal{sa}$, and $t\in[0,1]$ is equivalent to the joint convexity of the map
		\[
		(\bm{X},\bm{A})\mapsto \left\langle \bm{X},  \mathsf{D}\mathrm{\Psi} [\bm{A}] (\bm{X}) \right\rangle = \Tr\left[\mathsf{D}^2 \mathrm{\Phi}[\bm{A}](\bm{X},\bm{X}) \right].
		\]

		\item[$(j) \Rightarrow (g)$]
		
		Considering $n=2$, the sub-additivity means that
		\[
		H_\mathrm{\Phi}(\bm{Z}) \leq \mathbb{E}_1 H_\mathrm{\Phi}^{(2)}(\bm{Z}) + \mathbb{E}_2 H_\mathrm{\Phi}^{(1)}(\bm{Z}).
		\]
		Then, we have
		\begin{align*}
		\mathbb{E}_1 H_\mathrm{\Phi}^{(2)}(\bm{Z})
		&\geq H_\mathrm{\Phi}(\bm{Z}) - \mathbb{E}_2 H_\mathrm{\Phi}^{(1)} (\bm{Z}) \\
		&= 	\mathbb{E} \mathrm{\Phi}(\bm{Z}) - \mathrm{\Phi}(\mathbb{E}\bm{Z}) - \mathbb{E}_2\mathbb{E}_1 \mathrm{\Phi}(\bm{Z}) + \mathbb{E}_2 \mathrm{\Phi}(\mathbb{E}_1\bm{Z}) \\
		&=\mathbb{E}_2 \mathrm{\Phi}(\mathbb{E}_1 \bm{Z}) - \mathrm{\Phi}(\mathbb{E}_2 \mathbb{E}_1 \bm{Z}) \\
		&= H_\mathrm{\Phi}(\mathbb{E}_1 \bm{Z}).
		\end{align*}
		
		\item[$(f)\Leftrightarrow (h)$]
		
		Let $s\in[0,1]$, define a pair of positive semi-definite random matrices $(\bm{X},\bm{Y})$ taking values $(\bm{x,y})$ with probability $s$ and $(\bm{x',y'})$ with probability $(1-s)$.
		Then the convexity of $H_\mathrm{\Phi}$ implies that
		\begin{align} \label{eq:H_convex}
		H_\mathrm{\Phi}(t\bm{X}+(1-t)\bm{Y}) \leq t H_\mathrm{\Phi}(\bm{X}) + (1-t) H_\mathrm{\Phi}(\bm{Y})
		\end{align}
		for every $t\in[0,1]$.
		Now define $F_t(\bm{u},\bm{v}):\mathbb{M}_d^+\times\mathbb{M}_d^+\rightarrow \mathbb{R}$ as
		\[
		F_t(\bm{u},\bm{v})\triangleq \Tr\left[ t\mathrm{\Phi}(\bm{u}) + (1-t) \mathrm{\Phi}(\bm{v}) - \mathrm{\Phi}(t\bm{u}+(1-t)\bm{v})\right].
		\]
		Then, it follows that
		\begin{align*}
		&  s F_t(\bm{x,y}) + (1-s) F_t(\bm{x',y'}) - F_t( s(\bm{x,y}) + (1-s) \bm{x',y'} )\\
		&= t \,\mathbb{E}\mathrm{\Phi}(\bm{X}) - t\mathrm{\Phi}(\mathbb{E}\bm{X}) + (1-t) \mathbb{E} \mathrm{\Phi}(\bm{Y}) - (1-t) \mathrm{\Phi}(\mathbb{E}\bm{Y}) \\
		&- \mathbb{E}\mathrm{\Phi}\left(t\bm{X}+(1-t)\bm{Y}\right) + \mathrm{\Phi}\left(t\mathbb{E}\bm{X} + (1-t)\mathbb{E} \bm{Y}\right)\\
		&= t H_\mathrm{\Phi}(\bm{X}) + (1-t) H_\mathrm{\Phi}(\bm{Y}) - H_\mathrm{\Phi}(t \bm{X} + (1-t)\bm{Y}),
		\end{align*}
		which means that the convexity of the pair $(\bm{u},\bm{v})\mapsto F_t(\bm{u},\bm{v})$ is equivalent to the convexity of $H_\mathrm{\Phi}$, i.e.~Eq.~\eqref{eq:H_convex}.
		
		\item[$(g)\Leftrightarrow (h)$]
		
		
		Define a positive semi-definite random matrix $\bm{Z}\triangleq f(\bm{X}_1,\bm{X}_2)$, which depends on two random variables $\bm{X}_1,\bm{X}_2$ on a Polish space.
		Denote by $\bm{Z}_{\bm{X}_1}$ the random matrix $\bm{Z}$ conditioned on $\bm{X}_1$.
		According to the convexity of ${H}_\mathrm{\Phi}$, it follows that
		\begin{align*}
		\mathbb{E}_1{H}_\mathrm{\Phi}(\bm{Z}|\bm{X}_1) 
		&=  \mathbb{E}_1{H}_\mathrm{\Phi}(\bm{Z}_{\bm{X}_1}) \\
		&= \mathbb{E}_1 \Big[ \tr \big( \mathbb{E}_2 \mathrm{\Phi}(\bm{Z}_{\bm{X}_1}) - \mathrm{\Phi}(\mathbb{E}_2 \bm{Z}_{\bm{X}_1}) \big) \Big] \\
		&\geq   \tr \mathbb{E}_2 \mathrm{\Phi}(\mathbb{E}_1 \bm{Z}_{\bm{X}_1}) - \tr \left[ \mathrm{\Phi}(\mathbb{E}_1 \mathbb{E}_2 \bm{Z}_{\bm{X}_1}) \right] \\
		&= {H}_\mathrm{\Phi}(\mathbb{E}_1 \bm{Z}).
		\end{align*}

		Conversely, define a positive semi-definite random matrix $\bm{Z}(s,\bm{X},\bm{Y}) \triangleq  s\bm{X} + (1-s)\bm{Y}$ where $s$ is a random variable.
		Now let $s$ be Bernoulli distributed with parameter $t \in [0,1]$.
		Then for all $t\in[0,1]$, the inequality $\mathbb{E}_1 H_\mathrm{\Phi}(\bm{Z}|s)\geq H_\mathrm{\Phi}(\mathbb{E}_1 \bm{Z})$ coincides
		\[
		H_\mathrm{\Phi}(t\bm{X}+(1-t)\bm{Y}) \leq t H_\mathrm{\Phi}(\bm{X}) + (1-t) H_\mathrm{\Phi}(\bm{Y}).
		\]
		
	\end{description}	
\end{proof}

\section{Proof of Theorem {\ref{theo:sub_oper}}} \label{proof:sub_oper}
Our approach of proving operator subadditivity (Theorem \ref{theo:sub_oper}) parallels \cite[Theorem 2.5]{CT14} and \cite[Section 3.1]{BBM+05}. 
The strategy is as the following.  First, we prove the supremum representation for the operator-valued $\Phi$-entropies in Section \ref{Rep}.
Second, we establish a conditional operator Jensen's inequality in Section \ref{Jensen}. Finally, we arrive at the proof of Theorem \ref{theo:sub_oper} in Section \ref{proof2}.

\subsection{Representation of operator-valued $\mathrm{\Phi}$-entropy} \label{Rep}

\begin{theo}[Supremum Representation for Operator-Valued $\mathrm{\Phi}$-Entropies] \label{theo:representation}
	Fix a function $\mathrm{\Phi}\in \textnormal{\textbf{(C3)}}$. Assume $\bm{Z}\in\mathbb{M}_d^{++}$ is a random positive definite matrix for which $|\bm{Z}|$, $|\mathrm{\Phi}(\bm{Z})\|$ are Bochner integrable.
	Then the operator-valued $\mathrm{\Phi}$-entropy can be represented as
	\begin{eqnarray} \label{eq:oper_represent}
	\bm{H}_\mathrm{\Phi}(\bm{Z})=\sup_{\bm{T}\succ \bm{0}}\mathbb{E}\left[\mathsf{D}\mathrm{\Phi}[\bm{T}](\bm{Z-T})-\mathsf{D}\mathrm{\Phi}[\,\mathbb{E}\bm{T}](\bm{Z-T})+\mathrm{\Phi}(\bm{T})-\mathrm{\Phi}(\mathbb{E}\bm{T})\right].
	\end{eqnarray}
	The range of the supremum contains each random positive definite matrix $\bm{T}$ for which $|\bm{T}|$ and $|\Phi(\bm{T})|$ are Bochner integrable.
	In particular, the normalised matrix $\mathrm{\Phi}$-entropy can be written in the dual form
	\begin{eqnarray} \label{eq:dual2}
	\bm{H}_{\mathrm{\Phi}} (\bm{Z})=\sup_{\bm{T}\succ \bm{0}} \mathbb{E} \ \left[ \bm{\Upsilon}_1(\bm{T}, \bm{Z})+\bm{\Upsilon}_2(\bm{T})\right],
	\end{eqnarray}
	where $\bm{\Upsilon}_1(\bm{T},\bm{Z})=\mathsf{D}\mathrm{\Phi}[\bm{T}](\bm{Z})-\mathsf{D}\mathrm{\Phi}[\mathbb{E}\,\bm{T}](\bm{Z})$ is a linear map of $\bm{Z}$ and $\bm{\Upsilon}_2(\bm{T})=-\mathsf{D}\mathrm{\Phi}[\bm{T}](\bm{T})+\mathsf{D}\mathrm{\Phi}[\mathbb{E}\bm{T}](\bm{T})+ \left(\mathrm{\Phi}(\bm{T})-\mathrm{\Phi}(\mathbb{E}\,\bm{T})\right)$.
\end{theo}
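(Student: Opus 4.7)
The plan is to establish (\ref{eq:oper_represent}) in two steps: first, show that the value at $\bm{T}\equiv \bm{Z}$ equals $\bm{H}_\Phi(\bm{Z})$, so the supremum is attained; second, show that the expression inside the supremum is $\preceq \bm{H}_\Phi(\bm{Z})$ for every admissible $\bm{T}$.

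\textbf{Attainment at $\bm{T} = \bm{Z}$.} Substituting $\bm{T} = \bm{Z}$ kills the first two terms on the right-hand side since $\bm{Z}-\bm{T} = \bm{0}$ and $\mathsf{D}\Phi[\,\cdot\,]$ is linear in its direction. What remains is $\mathbb{E}\Phi(\bm{Z}) - \Phi(\mathbb{E}\bm{Z}) = \bm{H}_\Phi(\bm{Z})$, so the supremum is achieved and is at least $\bm{H}_\Phi(\bm{Z})$.

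\textbf{Upper bound via Bregman divergence.} For an arbitrary $\bm{T}\succ \bm{0}$, I rearrange the target inequality. Using linearity of expectation and the identity $\mathbb{E}[\mathsf{D}\Phi[\mathbb{E}\bm{T}](\bm{Z}-\bm{T})] = \mathsf{D}\Phi[\mathbb{E}\bm{T}](\mathbb{E}\bm{Z}-\mathbb{E}\bm{T})$, the claim
$$\mathbb{E}\left[\mathsf{D}\Phi[\bm{T}](\bm{Z}-\bm{T}) - \mathsf{D}\Phi[\mathbb{E}\bm{T}](\bm{Z}-\bm{T}) + \Phi(\bm{T}) - \Phi(\mathbb{E}\bm{T})\right] \preceq \mathbb{E}\Phi(\bm{Z}) - \Phi(\mathbb{E}\bm{Z})$$
becomes equivalent to
$$\mathbb{E}\left[\Phi(\bm{Z}) - \Phi(\bm{T}) - \mathsf{D}\Phi[\bm{T}](\bm{Z}-\bm{T})\right] \succeq \Phi(\mathbb{E}\bm{Z}) - \Phi(\mathbb{E}\bm{T}) - \mathsf{D}\Phi[\mathbb{E}\bm{T}](\mathbb{E}\bm{Z}-\mathbb{E}\bm{T}).$$
Both sides are values of the operator-valued Bregman divergence $D_\Phi(\bm{A},\bm{B}) \triangleq \Phi(\bm{A}) - \Phi(\bm{B}) - \mathsf{D}\Phi[\bm{B}](\bm{A}-\bm{B})$, so the inequality reads $\mathbb{E}\, D_\Phi(\bm{Z},\bm{T}) \succeq D_\Phi(\mathbb{E}\bm{Z},\mathbb{E}\bm{T})$.

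\textbf{Invoking joint convexity and operator Jensen.} Since $\Phi\in\textbf{(C3)}$, item (b) of Theorem~\ref{theo:operator} gives joint convexity of the map $(\bm{A},\bm{B})\mapsto \Phi(\bm{A}+\bm{B}) - \Phi(\bm{A}) - \mathsf{D}\Phi[\bm{A}](\bm{B})$ under L\"owner partial ordering. Composing with the invertible affine change of variables $(\bm{A},\bm{B}) = (\bm{T},\bm{Z}-\bm{T})$ preserves joint convexity, so $(\bm{Z},\bm{T})\mapsto D_\Phi(\bm{Z},\bm{T})$ is jointly convex. Applying the operator-valued Jensen inequality to the jointly integrable random pair $(\bm{Z},\bm{T})$ yields exactly the rearranged bound, completing the proof of the supremum representation. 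The dual form (\ref{eq:dual2}) is then immediate by grouping the $\bm{Z}$-linear terms into $\bm{\Upsilon}_1(\bm{T},\bm{Z})$ and the remaining $\bm{T}$-dependent terms into $\bm{\Upsilon}_2(\bm{T})$.

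\textbf{Main obstacle.} The only genuine content is the step $\mathbb{E} D_\Phi(\bm{Z},\bm{T}) \succeq D_\Phi(\mathbb{E}\bm{Z},\mathbb{E}\bm{T})$, which rests on (i) joint convexity of the Bregman divergence (established by Theorem~\ref{theo:operator}(b)) and (ii) the validity of the operator Jensen inequality for Bochner-integrable random matrices (ensured by the integrability hypotheses on $|\bm{Z}|,|\Phi(\bm{Z})|,|\bm{T}|,|\Phi(\bm{T})|$). The algebraic rearrangements and the attainment check at $\bm{T} = \bm{Z}$ are routine.
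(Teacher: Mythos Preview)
Your proof is correct but follows a different route from the paper's. The paper proves the inequality \eqref{eq:dual3} by an interpolation argument: it sets $\bm{T}_s=(1-s)\bm{Z}+s\bm{T}$, defines $\bm{F}(s)$ as the right-hand side evaluated at $\bm{T}_s$, and shows $\bm{F}'(s)\preceq\bm{0}$ by computing the derivative and invoking directly the defining \textbf{(C3)} condition---joint convexity of $(\bm{A},\bm{X})\mapsto\mathsf{D}^2\Phi[\bm{A}](\bm{X},\bm{X})$. You instead rearrange the inequality into the Bregman form $\mathbb{E}\,D_\Phi(\bm{Z},\bm{T})\succeq D_\Phi(\mathbb{E}\bm{Z},\mathbb{E}\bm{T})$ and deduce it from L\"owner joint convexity of $D_\Phi$ (Theorem~\ref{theo:operator}(b)) together with the elementary Jensen inequality for expectations of L\"owner-convex maps (which follows by testing against $\langle v,\cdot\,v\rangle$ for each unit vector $v$---note this is \emph{not} the Kraus-operator Jensen inequality of Proposition~\ref{prop:Jensen}, which would require operator convexity). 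Your route is shorter once (b) is available, and it makes transparent that the supremum representation is really a disguised Bregman--Jensen inequality; the paper's route is more self-contained, using only the raw \textbf{(C3)} hypothesis without appealing to the equivalence theorem. There is no circularity in your appeal to Theorem~\ref{theo:operator}(b): the implication (a)$\Rightarrow$(b) is established via the integral representation $\Phi(\bm{u}+\bm{v})-\Phi(\bm{u})-\mathsf{D}\Phi[\bm{u}](\bm{v})=\int_0^1(1-s)\,\mathsf{D}^2\Phi[\bm{u}+s\bm{v}](\bm{v},\bm{v})\,\mathrm{d}s$, independently of the supremum representation itself.
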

\begin{proof}
	Observe that when $\bm{T}=\bm{Z}$, the right-hand side of Eq.~\eqref{eq:oper_represent} equals $\bm{H}_{\mathrm{\Phi}} (\bm{Z})$. Then it remains to confirm the inequality
	\begin{eqnarray} \label{eq:dual3}
	\bm{H}_{\mathrm{\Phi}} (\bm{Z})\succeq \mathbb{E}\left[\mathsf{D}\mathrm{\Phi}[\bm{T}](\bm{Z-T})-\mathsf{D}\mathrm{\Phi}[\,\mathbb{E}\bm{T}](\bm{Z-T})+\mathrm{\Phi}(\bm{T})-\mathrm{\Phi}(\mathbb{E}\bm{T})\right]
	\end{eqnarray}
	for each random positive definite matrix $\bm{T}$ that satisfies the integrability conditions. We follow the interpolation argument as in \cite[Lemma 4.1]{CT14}.
	For $s\in[0,1]$, define the matrix-valued function
	\[
	\bm{F}(s)=\mathbb{E}\left[\mathsf{D}\mathrm{\Phi}[\bm{T}_s](\bm{Z}-\bm{T}_s)-\mathsf{D}\mathrm{\Phi}[\mathbb{E}\,\bm{T}_s](\bm{Z}-\bm{T}_s)\right]+\bm{H}_\mathrm{\Phi}(\bm{T}_s).
	\]
	where
	\[
	\bm{T}_s\triangleq(1-s)\cdot \bm{Z}+s\cdot\bm{T}\quad \text{for } s\in[0,1].
	\]
	Note that $\bm{F}(0)=H_{\mathrm{\Phi}}(\bm{Z})$, and $\bm{F}(1)$ matches the right-hand side of Eq.~\eqref{eq:dual3}.
	As a result, it suffices to show that $\bm{F}'(s)\leq0$ for $s\in[0,1]$ in order to verify Eq.~\eqref{eq:dual3}.
	By the replacement $\bm{Z}-\bm{T}_s=-s\cdot(\bm{T}-\bm{Z})$, the function $\bm{F}(s)$ can be rephrased as
	\[ \label{eq:dual4}
	\bm{F}(s)=-s\cdot\mathbb{E}\left[\mathsf{D}\mathrm{\Phi}[\bm{T}_s](\bm{T-Z})-\mathsf{D}\mathrm{\Phi}[\mathbb{E}\bm{T}_s](\bm{T-Z})\right]+\mathbb{E}\left[\mathrm{\Phi}(\bm{T}_s)-\mathrm{\Phi}(\mathbb{E}\,\bm{T}_s)\right].
	\]

	Differentiate the above function to arrive at
	\begin{subequations}
		\begin{align} 
		\bm{F}'(s) = &-s\,\mathbb{E}\left[\mathsf{D}^2\mathrm{\Phi}[\bm{T}_s](\bm{T-Z},\bm{T-Z})\right]+s\,\mathbb{E}\left[\mathsf{D}^2\mathrm{\Phi}[\mathbb{E}\bm{T}_s](\bm{T-Z},\mathbb{E}(\bm{T-Z}))\right] \notag \\ 
		&-\mathbb{E}\left[\mathsf{D}\mathrm{\Phi}[\bm{T}_s](\bm{T-Z})-\mathsf{D}\mathrm{\Phi}[\mathbb{E}\bm{T}_s](\bm{T-Z})\right]+\mathbb{E}\left[\mathsf{D}\mathrm{\Phi}[\bm{T}_s](\bm{T-Z})-\mathsf{D}\mathrm{\Phi}[\mathbb{E}\bm{T}_s](\bm{T-Z})\right] \label{eq:dual5} \\ 
		= &-s\,\mathbb{E}\left[\mathsf{D}^2\mathrm{\Phi}[
		{T}_s](\bm{T-Z},\bm{T-Z})+s\mathsf{D}^2\mathrm{\Phi}[\mathbb{E}\bm{T}_s](\mathbb{E}(\bm{T-Z}),\mathbb{E}(\bm{T-Z}))\right], \label{eq:dual6}
		\end{align}
	\end{subequations}	
	where we cancel the last two terms in Eq.~\eqref{eq:dual5} and the second equation \eqref{eq:dual6} follows from the bilinearity of the second order Fr\'{e}chet differentiation.
	
	Invoke the joint convexity condition of the function $\mathsf{D}^2\mathrm{\Phi}[\bm{T}_s](\bm{T-Z},\bm{T-Z})$ (see Eq.~\eqref{eq:entropy_class}), we establish the above derivative to be negative semi-definite, i.e.~$\bm{F}'(s)\preceq\bm{0}$ for $s\in[0,1]$ and thus complete the proof.
\end{proof}

\subsection{A conditional operator Jensen's inequality} \label{Jensen}

\begin{lemm}[Conditional Operator Jensen's Inequality for Operator-Valued $\mathrm{\Phi}$-Entropy]\label{lemm:Jensen}
	Suppose that $(\bm{X}_1,\bm{X}_2)$ is a pair of independent random matrices taking values in a Polish space, and let $\bm{Z}=\bm{Z}(\bm{X}_1,\bm{X}_2)$ be a positive definite random matrix for which $|\bm{Z}|$ and $|\Phi(\bm{Z})|$ are Bochner integrable. Then
	\[
	\bm{H}_{\mathrm{\Phi}} (\mathbb{E}_1 \bm{Z})\preceq \mathbb{E} \bm{H}_{\mathrm{\Phi}}(\bm{Z}|\bm{X}_1),
	\]
	where $\mathbb{E}_1$ is the expectation with respect to the first matrix $\bm{X}_1$.
\end{lemm}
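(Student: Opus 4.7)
The plan is to derive the lemma directly from the supremum representation of the operator-valued $\mathrm{\Phi}$-entropy established in Theorem~\ref{theo:representation}, applied conditionally on $\bm{X}_1$. Treating $\bm{X}_1$ as a fixed parameter, $\bm{Z}=\bm{Z}(\bm{X}_1,\bm{X}_2)$ becomes a positive definite random matrix in $\bm{X}_2$ alone, and a conditional version of Theorem~\ref{theo:representation} (obtained by re-running the interpolation argument inside the $\bm{X}_2$-expectation) yields, for every admissible random positive definite test matrix $\bm{T}$,
\[
\bm{H}_\mathrm{\Phi}(\bm{Z}\,|\,\bm{X}_1)\;\succeq\;\mathbb{E}_2\bigl[\mathsf{D}\mathrm{\Phi}[\bm{T}](\bm{Z}-\bm{T})-\mathsf{D}\mathrm{\Phi}[\mathbb{E}_2\bm{T}](\bm{Z}-\bm{T})+\mathrm{\Phi}(\bm{T})-\mathrm{\Phi}(\mathbb{E}_2\bm{T})\bigr].
\]

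The crucial step is the choice $\bm{T}:=\mathbb{E}_1\bm{Z}$, which depends only on $\bm{X}_2$ and therefore satisfies $\mathbb{E}_2\bm{T}=\mathbb{E}\bm{Z}$. Taking $\mathbb{E}_1$ of both sides of the resulting inequality and invoking Fubini to interchange the two expectations, the two Fréchet-derivative terms collapse. Indeed, since $\mathsf{D}\mathrm{\Phi}[\bm{A}](\cdot)$ is linear in its direction and neither base point $\mathbb{E}_1\bm{Z}$ nor $\mathbb{E}\bm{Z}$ depends on $\bm{X}_1$,
\[
\mathbb{E}_1\,\mathsf{D}\mathrm{\Phi}[\mathbb{E}_1\bm{Z}]\bigl(\bm{Z}-\mathbb{E}_1\bm{Z}\bigr)\;=\;\mathsf{D}\mathrm{\Phi}[\mathbb{E}_1\bm{Z}]\bigl(\mathbb{E}_1\bm{Z}-\mathbb{E}_1\bm{Z}\bigr)\;=\;\bm{0},
\]
and the analogous identity holds for the term involving $\mathsf{D}\mathrm{\Phi}[\mathbb{E}\bm{Z}]$. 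What survives on the right-hand side is $\mathbb{E}_2\bigl[\mathrm{\Phi}(\mathbb{E}_1\bm{Z})-\mathrm{\Phi}(\mathbb{E}\bm{Z})\bigr]$, which by definition equals $\bm{H}_\mathrm{\Phi}(\mathbb{E}_1\bm{Z})$. Combining these observations yields the claim $\bm{H}_\mathrm{\Phi}(\mathbb{E}_1\bm{Z})\preceq\mathbb{E}\,\bm{H}_\mathrm{\Phi}(\bm{Z}\,|\,\bm{X}_1)$.

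I expect the main technical obstacle to be the rigorous formulation of the conditional version of Theorem~\ref{theo:representation}. This should however be routine: the interpolation path $\bm{T}_s=(1-s)\bm{Z}+s\bm{T}$, with the outer expectation replaced by $\mathbb{E}_2$, produces a matrix-valued function of $s$ whose derivative is negative semi-definite thanks to the joint convexity condition defining the class $\textnormal{\textbf{(C3)}}$, exactly as in the unconditional case. One also needs to verify that the choice $\bm{T}=\mathbb{E}_1\bm{Z}$ meets the integrability hypotheses required in Theorem~\ref{theo:representation}; this follows from the Bochner-integrability assumptions on $|\bm{Z}|$ and $|\mathrm{\Phi}(\bm{Z})|$ combined with standard Jensen-type bounds for conditional expectations of Bochner integrals.
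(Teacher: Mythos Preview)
Your argument is correct and uses the same core ingredient as the paper, namely the supremum (variational) representation of Theorem~\ref{theo:representation}. The paper's proof proceeds in the reverse direction: it writes $\bm{H}_{\mathrm{\Phi}}(\mathbb{E}_1\bm{Z})$ as a supremum over test matrices $\bm{T}=\bm{T}(\bm{X}_2)$, uses Fubini and the linearity of $\bm{\Upsilon}_1$ in $\bm{Z}$ to pull $\mathbb{E}_1$ inside, and then interchanges $\mathbb{E}_1$ with the supremum to reach $\mathbb{E}_1\bm{H}_{\mathrm{\Phi}}(\bm{Z}\,|\,\bm{X}_1)$. Your route instead applies the (conditional) variational inequality to $\bm{H}_{\mathrm{\Phi}}(\bm{Z}\,|\,\bm{X}_1)$ with the \emph{specific} choice $\bm{T}=\mathbb{E}_1\bm{Z}$ and then averages over $\bm{X}_1$. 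The two arguments are dual to each other; the advantage of yours is that it sidesteps the exchange of $\mathbb{E}_1$ with the operator-valued supremum, a step the paper itself flags as delicate under the L\"owner ordering (it is justified there only because the supremum is attained). In effect, you locate the optimiser $\bm{T}=\mathbb{E}_1\bm{Z}$ directly rather than arriving at it through the sup-manipulation.
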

\begin{proof}
	Let $\mathbb{E}_2$ refer to the expectation with respect to the second matrix $\bm{X}_2$.
	In the following, we use $\bm{T}(\bm{X}_2)$ to emphasise the matrix $\bm{T}$ depends only on the randomness in $\bm{X}_2$. Recall the supremum representation, Eq.~\eqref{eq:dual2}, we have:
	\begin{align*}
	\bm{H}_{\mathrm{\Phi}}(\mathbb{E}_1 \bm{Z}) &= \sup_{\bm{T}} \mathbb{E}_2 \left[ \bm{\Upsilon}_1\left(\bm{T}(\bm{X}_2),\mathbb{E}_1 \bm{Z}\right)+\bm{\Upsilon}_2\left(\bm{T}(\bm{X}_2)\right)\right]\\
	&= \sup_{\bm{T}} \mathbb{E}_1\mathbb{E}_2 \left[ \bm{\Upsilon}_1\left(\bm{T}(\bm{X}_2), \bm{Z}\right)+\bm{\Upsilon}_2\left(\bm{T}(\bm{X}_2)\right)\right]\\
	&\preceq \mathbb{E}_1 \sup_{\bm{T}} \mathbb{E}_2 \left[ \bm{\Upsilon}_1\left(\bm{T}(\bm{X}_2),\bm{Z}\right) +\bm{\Upsilon}_2\left(\bm{T}\bm{X}_2)\right)\right]\\
	&= \mathbb{E}_1 \sup_{\bm{T}} \mathbb{E}\left[  \bm{\Upsilon}_1\left(\bm{T}(\bm{X}_2),\bm{Z}\right)+\bm{\Upsilon}_2\left(\bm{T}(\bm{X}_2)\right)\big|\bm{X}_1\right]\\
	&= \mathbb{E}_1 \bm{H}_{\mathrm{\Phi}}(\bm{Z}|\bm{X}_1).
	\end{align*}
	The second relation follows from the Fubini's theorem to interchange the order of $\mathbb{E}_1$ and $\mathbb{E}_2$.
	In the third line we use the convexity of the supremum. (Note that it is not always true under partial ordering.
	However, it holds in our case because the supremum is attained when $\bm{T}\equiv\mathbb{E}_1 \bm{Z}$ in the second line.)
	The last identity is exactly the supremum representation Eq.~\eqref{eq:dual2} in the conditional form.
\end{proof}

It is worth emphasising that the conditional Jensen inequality can also be achieved by item (d) in Theorem \ref{theo:operator} (cf.~(f)$\Leftrightarrow$(g)$\Leftrightarrow$(h) in Theorem \ref{theo:trace}).

\subsection{Subadditivity of operator-valued $\Phi$-entropies} \label{proof2}

Now we are at the position to prove the subadditivity of the operator-valued $\mathrm{\Phi}$-entropies.

\begin{proof}
	By adding and subtracting the term $\mathrm{\Phi}(\mathbb{E}_1 \bm{Z})$, the operator-valued $\mathrm{\Phi}$-entropy can be expressed as
	\begin{align} \label{eq:mod_entropy2}
	\bm{H}_{\mathrm{\Phi}}(\bm{Z}) &= \mathbb{E}  \left[\mathrm{\Phi}(\bm{Z})-\mathrm{\Phi}(\mathbb{E}_1 \bm{Z})+\mathrm{\Phi}(\mathbb{E}_1 \bm{Z})-\mathrm{\Phi}(\mathbb{E} \bm{Z})\right]\notag\\
	&= \mathbb{E}  \left[\mathbb{E}_1\mathrm{\Phi}(\bm{Z})-\mathrm{\Phi}(\mathbb{E}_1 \bm{Z})\right]
	+\left[\mathbb{E}\mathrm{\Phi}(\mathbb{E}_1\bm{Z})-\mathrm{\Phi}(\mathbb{E}\,\mathbb{E}_1 \bm{Z})\right]\notag\\
	&= \mathbb{E} \bm{H}_{\mathrm{\Phi}}(\bm{Z}|\bm{X}_{-1})+\bm{H}_{\mathrm{\Phi}}(\mathbb{E}_1 \bm{Z})\notag\\
	&\preceq \mathbb{E}\bm{H}_{\mathrm{\Phi}}(\bm{Z}|\bm{X}_{-1})+ \mathbb{E}_1 \bm{H}_{\mathrm{\Phi}}(\bm{Z}|\bm{X}_1),
	\end{align}
	where the last inequality results from Lemma \ref{lemm:Jensen} since $\bm{X}_1$ is independent from $\bm{X}_{-1}$.
	
	Following the same reasoning we obtain the operator-valued $\Phi$-entropy conditioned on $\bm{X}_1$:
	\[ \label{eq:mod_entropy3}
	\bm{H}_{\mathrm{\Phi}}(\bm{Z}|\bm{X}_1)\preceq \mathbb{E}\left[\bm{H}_{\mathrm{\Phi}}(\bm{Z}|\bm{X}_{-2})\big|\bm{X}_1\right]+ \mathbb{E}_2 \bm{H}_{\mathrm{\Phi}}(\bm{Z}|\bm{X}_1,\bm{X}_2).
	\]
	By plugging the expression into Eq.~\eqref{eq:mod_entropy2} we get
	\[ \label{eq:mod_entropy4}
	\bm{H}_{\mathrm{\Phi}}(\bm{Z})\preceq \sum_{i=1}^2\mathbb{E} \bm{H}_{\mathrm{\Phi}}(\bm{Z}|\bm{X}_{-i})+\mathbb{E}_1\mathbb{E}_2 \bm{H}_{\mathrm{\Phi}} (\bm{Z}|\bm{X}_1,\bm{X}_2).
	\]
	Finally, by repeating this procedure we achieve the subadditivity of the operator-valued $\mathrm{\Phi}$-entropy
	\[
	\bm{H}_{\mathrm{\Phi}}(\bm{Z}) \preceq \sum_{i=1}^n \mathbb{E} \left[ \bm{H}_{\mathrm{\Phi}}(\bm{Z}|\bm{X}_{-i})\right],
	\]
	which completes our claim.
\end{proof}

\section{Conclusion}
In this paper, we extend results of Chen and Tropp \cite{CT14}, Pitrik and Virosztek \cite{PV14}, and Hansen and Zhang \cite{HZ14} to complete the characterisations of the matrix $\Phi$-entropy functionals.
Moreover, we generalise the matrix $\Phi$-entropy functionals to the operator-valued $\Phi$-entropies, and show that this generalisation preserves the subadditivity property. Additionally, we prove that the set of operator-valued $\Phi$-entropies is not empty and contains at least the square function. Equivalent characterisations of the operator-valued $\Phi$-entropies are also derived. 
This result demonstrates that the subadditivity of $\bm{H}_\Phi(\bm{Z})$ is equivalent to the operator convexity of $\bm{H}_\Phi(\bm{Z})$ on the convex cone of $\bm{Z}$. 
Finally, we exploit the subadditivity to prove the operator Efron-Stein inequality. It is promising that the proposed result can also derive the \emph{matrix exponential Efron-Stein} (cf.~ \cite[Theorem 4.3]{PMT14}) and the moment inequalities for random matrices; see \cite{BBM+05} and \cite[Chapter 15]{BLM13}.

The subadditivity of matrix $\Phi$-entropies leads to a series of important inequalities: matrix Poincar\'e inequalities with respect to binomial and Gaussian distributions, and the related matrix logarithmic Sobolev inequalities \cite{CH1}. 
In Ref.~\cite{CH2}, the subadditivity and the operator Efron-Stein inequality can be exploited to estimate the mixing time of a quantum random graph. It enables us to better understand the dynamics and long-term behaviours of a quantum system undergoing Markovian processes. We believe the proposed results will lead to more matrix functional inequalities, and have substantial impact in operator algebra and quantum information science. 

Finally, we remark that the results of operator-valued $\Phi$-entropies and the operator Efron-Stein inequalities hold in the infinite-dimensional setting. This is not hard to verify because the tools (such as Fr\'echet derivatives) employed in the proofs hold in the infinite dimension.  










\section*{Acknowledgement}
The authors thank Marco Tomamichel for helpful discussion about the operator-valued $\Phi$-entropies. MH is supported by an ARC Future Fellowship under Grant FT140100574.

\appendix

\section{Miscellaneous Lemmas} \label{lemmas}

\begin{prop}[Properties of Fr\'{e}chet Derivatives {\cite[Theorem 3.4]{Hig08}}] \label{prop:properties}
	Let $\mathcal{U},\mathcal{V}$ and $\mathcal{W}$ be real Banach spaces. Let $\mathcal{L}_1:\mathcal{U}\rightarrow\mathcal{V}$ and $\mathcal{L}_2:\mathcal{V}\rightarrow\mathcal{W}$ be Fr\'{e}chet differentiable at $\bm{A}\in\mathcal{U}$ and $\mathcal{L}_1(\bm{A})$ respectively, and let $\mathcal{L} = \mathcal{L}_2 \circ \mathcal{L}_1$ (i.e.~$\mathcal{L}(\bm{A}) = \mathcal{L}_2\left( \mathcal{L}_1 (\bm{A}) \right)$. Then $\mathcal{L}$ is Fr\'{e}chet differentiable at $\bm{A}$ and $\mathsf{D}\mathcal{L}[\bm{A}](\bm{E}) = \mathsf{D}\mathcal{L}_2 [\mathcal{L}_1(\bm{A})] \left( \mathsf{D}\mathcal{L}_1[\bm{A}](\bm{E}) \right)$.
\end{prop}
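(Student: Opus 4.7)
The plan is to verify the chain rule directly from the definition of the Fr\'echet derivative by comparing the increment of the composite map with the candidate linear operator $\mathsf{D}\mathcal{L}_2[\mathcal{L}_1(\bm{A})] \circ \mathsf{D}\mathcal{L}_1[\bm{A}]$. Concretely, I would set $\bm{F} \triangleq \mathcal{L}_1(\bm{A}+\bm{E}) - \mathcal{L}_1(\bm{A})$ and write the identity
\[
\mathcal{L}_2(\mathcal{L}_1(\bm{A}+\bm{E})) - \mathcal{L}_2(\mathcal{L}_1(\bm{A})) - \mathsf{D}\mathcal{L}_2[\mathcal{L}_1(\bm{A})]\bigl(\mathsf{D}\mathcal{L}_1[\bm{A}](\bm{E})\bigr) = \bm{R}_2 + \mathsf{D}\mathcal{L}_2[\mathcal{L}_1(\bm{A})](\bm{R}_1),
\]
where $\bm{R}_1 \triangleq \mathcal{L}_1(\bm{A}+\bm{E}) - \mathcal{L}_1(\bm{A}) - \mathsf{D}\mathcal{L}_1[\bm{A}](\bm{E})$ is the remainder from $\mathcal{L}_1$ and $\bm{R}_2 \triangleq \mathcal{L}_2(\mathcal{L}_1(\bm{A})+\bm{F}) - \mathcal{L}_2(\mathcal{L}_1(\bm{A})) - \mathsf{D}\mathcal{L}_2[\mathcal{L}_1(\bm{A})](\bm{F})$ is the remainder from $\mathcal{L}_2$.

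Next, I would bound each piece. Since $\mathsf{D}\mathcal{L}_2[\mathcal{L}_1(\bm{A})]$ is a bounded linear map, $\|\mathsf{D}\mathcal{L}_2[\mathcal{L}_1(\bm{A})](\bm{R}_1)\|_{\mathcal{W}} \le \|\mathsf{D}\mathcal{L}_2[\mathcal{L}_1(\bm{A})]\| \cdot \|\bm{R}_1\|_{\mathcal{V}} = o(\|\bm{E}\|_\mathcal{U})$ by the Fr\'echet differentiability of $\mathcal{L}_1$ at $\bm{A}$. For $\bm{R}_2$, I would first observe that the same differentiability of $\mathcal{L}_1$ implies $\|\bm{F}\|_\mathcal{V} \le \|\mathsf{D}\mathcal{L}_1[\bm{A}](\bm{E})\|_\mathcal{V} + \|\bm{R}_1\|_\mathcal{V} \le (\|\mathsf{D}\mathcal{L}_1[\bm{A}]\| + \eta(\bm{E})) \cdot \|\bm{E}\|_\mathcal{U}$, where $\eta(\bm{E}) \to 0$ as $\bm{E}\to \bm{0}$; in particular $\bm{F} \to \bm{0}$ and $\|\bm{F}\|_\mathcal{V} = O(\|\bm{E}\|_\mathcal{U})$. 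Then the differentiability of $\mathcal{L}_2$ at $\mathcal{L}_1(\bm{A})$ gives $\|\bm{R}_2\|_\mathcal{W} = o(\|\bm{F}\|_\mathcal{V}) = o(\|\bm{E}\|_\mathcal{U})$, the last equality because $O(\|\bm{E}\|_\mathcal{U})$ factors absorbed into a little-$o$ estimate remain little-$o$.

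Combining these two bounds shows the total error is $o(\|\bm{E}\|_\mathcal{U})$, which is precisely the defining condition for the Fr\'echet derivative of $\mathcal{L}$ at $\bm{A}$; uniqueness of the Fr\'echet derivative (a standard consequence of its definition) then forces $\mathsf{D}\mathcal{L}[\bm{A}] = \mathsf{D}\mathcal{L}_2[\mathcal{L}_1(\bm{A})] \circ \mathsf{D}\mathcal{L}_1[\bm{A}]$. The only delicate point, and the step I expect to require the most care, is the conversion $o(\|\bm{F}\|_\mathcal{V}) = o(\|\bm{E}\|_\mathcal{U})$: one must ensure that the linear term $\mathsf{D}\mathcal{L}_1[\bm{A}](\bm{E})$ genuinely controls $\|\bm{F}\|_\mathcal{V}$ by $O(\|\bm{E}\|_\mathcal{U})$ so that the composition of little-$o$ and big-$O$ terms collapses cleanly; everything else is bookkeeping with bounded linear operators.
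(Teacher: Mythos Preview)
Your proof is correct and is the standard direct verification of the chain rule from the definition of the Fr\'echet derivative. The paper itself does not prove this proposition; it is stated in the appendix as a known fact with a citation to \cite[Theorem 3.4]{Hig08}, so there is no ``paper's own proof'' to compare against. Your argument supplies exactly the kind of proof one would find in the cited reference.
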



\begin{prop}[Convexity of twice Fr\'echet differentiable matrix functions {\cite[Proposition 2.2]{Han97}}] \label{prop:conv2}
	Let $U$ be an open convex subset of a real Banach space $\mathcal{U}$, and $\mathcal{W}$ is also a real Banach space. Then a twice Fr\'echet differentiable function $\mathcal{L}:U \to \mathcal{W}$ is convex if and only if $\mathsf{D}^2 \mathcal{L}(\bm{X})(\bm{h},\bm{h})\succeq \bm{0}$ for each $\bm{X}\in U$ and $\bm{h}\in \mathcal{U}$.
\end{prop}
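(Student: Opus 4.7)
The plan is to reduce the multivariate statement to the one-dimensional fact that a twice-differentiable function $g : I \to \mathcal{W}$ into an ordered Banach space is convex if and only if $g'' \succeq \bm{0}$ on $I$, by restricting $\mathcal{L}$ to line segments in the open convex set $U$.

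For the forward implication, I would fix $\bm{X} \in U$ and $\bm{h} \in \mathcal{U}$, pick $\epsilon > 0$ so that $\bm{X} + t\bm{h} \in U$ for $|t| < \epsilon$, and define $g(t) = \mathcal{L}(\bm{X} + t\bm{h})$. Two applications of the chain rule in Proposition \ref{prop:properties} give $g''(t) = \mathsf{D}^2 \mathcal{L}[\bm{X}+t\bm{h}](\bm{h},\bm{h})$. Convexity of $\mathcal{L}$ forces convexity of $g$, from which the second-derivative symmetric-difference identity
\[
g''(0) \;=\; \lim_{s \to 0}\frac{g(s) + g(-s) - 2g(0)}{s^2}
\]
exhibits $\mathsf{D}^2 \mathcal{L}[\bm{X}](\bm{h},\bm{h})$ as a limit of elements of the positive cone, which is closed, hence $\mathsf{D}^2 \mathcal{L}[\bm{X}](\bm{h},\bm{h}) \succeq \bm{0}$.

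For the reverse implication, I would fix $\bm{X}, \bm{Y} \in U$ and set $g(t) = \mathcal{L}((1-t)\bm{X} + t\bm{Y})$ on $[0,1]$. The chain rule again yields $g''(t) = \mathsf{D}^2 \mathcal{L}[(1-t)\bm{X}+t\bm{Y}](\bm{Y}-\bm{X},\bm{Y}-\bm{X}) \succeq \bm{0}$ by assumption, and convexity of $\mathcal{L}$ amounts to $g(t) \preceq (1-t)g(0) + t g(1)$ for all $t \in [0,1]$. To deduce this I would use Taylor's theorem with integral remainder,
\[
g(t) \;=\; g(0) + t g'(0) + \int_0^t (t-s)\,g''(s)\,\mathrm{d}s,
\]
applied symmetrically about $0$ and $1$, and then take the convex combination $(1-t)\cdot[\text{expansion at }0] + t\cdot[\text{expansion at }1]$; the linear terms in $g'(0)$ and $g'(1)$ cancel and what remains is $(1-t)g(0) + t g(1) - g(t)$ expressed as an integral of $g''(s) \succeq \bm{0}$ against a nonnegative scalar kernel, so the difference lies in the positive cone.

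The main technical obstacle is that the one-variable implication \emph{$g'' \succeq \bm{0} \Rightarrow g$ convex} requires the cone in $\mathcal{W}$ to behave well under integration and limits. The cleanest way around this, which I would use as a backup if the direct Bochner-integral argument above is delicate, is to compose with every positive continuous linear functional $\phi \in \mathcal{W}^*$: then $\phi \circ g$ is a scalar twice-differentiable function with $(\phi \circ g)'' = \phi(g'') \ge 0$, so $\phi \circ g$ is convex in the classical sense, and since positive functionals separate points of the cone (this holds for the L\"owner order on $\mathbb{M}^{\mathrm{sa}}$), convexity of $g$ in the cone order follows. Modulo this one-variable fact, both directions are immediate from the chain rule and the reduction to a segment.
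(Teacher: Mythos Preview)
The paper does not prove this proposition; it is quoted without proof from \cite{Han97} in the appendix of auxiliary facts. There is therefore no in-paper argument to compare against. Your reduction to line segments and the chain rule is exactly the standard route (and is essentially how Hansen argues), and both directions are sound.

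One small imprecision: in the reverse direction, the linear terms $t(1-t)g'(0)$ and $-t(1-t)g'(1)$ do \emph{not} cancel on their own. What actually happens is that the fundamental theorem gives $g'(0)-g'(1) = -\int_0^1 g''(s)\,\mathrm{d}s$, and combining this with your two Taylor remainders yields
\[
(1-t)g(0) + t g(1) - g(t) \;=\; \int_0^t (1-t)s\, g''(s)\,\mathrm{d}s \;+\; \int_t^1 t(1-s)\, g''(s)\,\mathrm{d}s,
\]
which is a Bochner integral of $g''\succeq\bm{0}$ against the nonnegative Green kernel $G(t,s)$ for $-\partial_s^2$ with Dirichlet boundary, so the result lies in the closed positive cone. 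Your backup via positive linear functionals $\phi\in\mathcal{W}^*$ is the cleanest way to avoid any subtlety about Bochner integrals preserving the cone, and it works verbatim for $\mathcal{W}=\mathbb{M}^{\mathrm{sa}}$ with the L\"owner order since states separate $\mathbb{M}^+$.
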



\begin{prop}[Partial Fr\'{e}chet derivative {\cite[Proposition 5.3.15]{AH09}}] \label{Prop_pfd}
	If $\mathcal{L}:\mathcal{U}\times \mathcal{V} \rightarrow \mathcal{W}$ is Fr\'echet differentiable at $(\bm{X},\bm{Y})\in \mathcal{U}\times \mathcal{V}$, then the partial Fr\'{e}chet derivatives $\mathsf{D}_{\bm{X}} \mathcal{L}[\bm{X},\bm{Y}]$ and $\mathsf{D}_{\bm{Y}} \mathcal{L}[\bm{X},\bm{Y}]$ exist, and
	\[
	\mathsf{D}\mathcal{L}[\bm{X},\bm{Y}](\bm{h},\bm{k}) = \mathsf{D}_{\bm{X}} \mathcal{L}[\bm{X},\bm{Y}] (\bm{h}) + \mathsf{D}_{\bm{Y}} \mathcal{L}[\bm{X},\bm{Y}] (\bm{k}).
	\]
\end{prop}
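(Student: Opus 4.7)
The plan is to unpack the definition of Fr\'echet differentiability on the product space and restrict the direction to the two coordinate subspaces. By hypothesis there is a (unique) bounded linear map $\mathsf{D}\mathcal{L}[\bm{X},\bm{Y}] : \mathcal{U}\times\mathcal{V}\to\mathcal{W}$ satisfying
\[
\|\mathcal{L}(\bm{X}+\bm{h},\bm{Y}+\bm{k}) - \mathcal{L}(\bm{X},\bm{Y}) - \mathsf{D}\mathcal{L}[\bm{X},\bm{Y}](\bm{h},\bm{k})\|_{\mathcal{W}} = o(\|(\bm{h},\bm{k})\|_{\mathcal{U}\times \mathcal{V}}),
\]
where I equip $\mathcal{U}\times\mathcal{V}$ with, say, the product norm $\|(\bm{h},\bm{k})\| = \|\bm{h}\|_{\mathcal{U}} + \|\bm{k}\|_{\mathcal{V}}$ (any equivalent norm works, since the $o(\cdot)$ condition is norm-invariant under this equivalence).

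First, I would specialize the above estimate to $\bm{k}=\bm{0}$, which yields
\[
\|\mathcal{L}(\bm{X}+\bm{h},\bm{Y}) - \mathcal{L}(\bm{X},\bm{Y}) - \mathsf{D}\mathcal{L}[\bm{X},\bm{Y}](\bm{h},\bm{0})\|_{\mathcal{W}} = o(\|\bm{h}\|_{\mathcal{U}}),
\]
because $\|(\bm{h},\bm{0})\|_{\mathcal{U}\times\mathcal{V}} = \|\bm{h}\|_{\mathcal{U}}$ under the chosen norm. This is precisely the statement that the slice $\bm{h}\mapsto \mathcal{L}(\bm{X}+\bm{h},\bm{Y})$ is Fr\'echet differentiable at $\bm{0}$ with derivative $\bm{h}\mapsto \mathsf{D}\mathcal{L}[\bm{X},\bm{Y}](\bm{h},\bm{0})$; hence the partial Fr\'echet derivative $\mathsf{D}_{\bm{X}}\mathcal{L}[\bm{X},\bm{Y}]$ exists and agrees with $\mathsf{D}\mathcal{L}[\bm{X},\bm{Y}](\cdot,\bm{0})$. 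Symmetrically, setting $\bm{h}=\bm{0}$ yields existence of $\mathsf{D}_{\bm{Y}}\mathcal{L}[\bm{X},\bm{Y}]$ together with $\mathsf{D}_{\bm{Y}}\mathcal{L}[\bm{X},\bm{Y}](\bm{k}) = \mathsf{D}\mathcal{L}[\bm{X},\bm{Y}](\bm{0},\bm{k})$.

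To finish, I would invoke the linearity of $\mathsf{D}\mathcal{L}[\bm{X},\bm{Y}]$ and decompose each direction as $(\bm{h},\bm{k}) = (\bm{h},\bm{0}) + (\bm{0},\bm{k})$, so that
\[
\mathsf{D}\mathcal{L}[\bm{X},\bm{Y}](\bm{h},\bm{k}) = \mathsf{D}\mathcal{L}[\bm{X},\bm{Y}](\bm{h},\bm{0}) + \mathsf{D}\mathcal{L}[\bm{X},\bm{Y}](\bm{0},\bm{k}) = \mathsf{D}_{\bm{X}}\mathcal{L}[\bm{X},\bm{Y}](\bm{h}) + \mathsf{D}_{\bm{Y}}\mathcal{L}[\bm{X},\bm{Y}](\bm{k}),
\]
which is the claimed identity. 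The argument is essentially a restriction-and-linearity exercise, so there is no serious obstacle; the only minor technical point to confirm is the passage from the $o(\|(\bm{h},\bm{k})\|)$ bound on the product norm to the $o(\|\bm{h}\|_{\mathcal{U}})$ bound on each coordinate slice, which is immediate from the chosen product norm and handled in the first step above.
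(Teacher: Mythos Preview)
Your proof is correct and is the standard argument for this fact. Note that the paper does not actually prove this proposition; it is stated in the appendix as a cited result from \cite[Proposition 5.3.15]{AH09}, so there is no paper proof to compare against beyond the reference.
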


\begin{prop}[{\cite[Theorem 2.2]{HP95}}] \label{Prop_trace_Petz}
	Let $ \bm{A}, \bm{X}\in\mathbb{M}^{sa}$ and $t\in\mathbb{R}$. Assume $f:I\to \mathbb{R}$ is a continuously differentiable function defined on interval $I$ and assume that the eigenvalues of $ \bm{A}+t\bm{X} \subset I$. Then
	\[
	\left.\frac{\mathrm{d} }{ \mathrm{d}t} \Tr f(\bm{A}+t \bm{X})\right|_{t=t_0} = \Tr [  \bm{X} f' ( \bm{A} + t_0 \bm{X}) ].
	\]
\end{prop}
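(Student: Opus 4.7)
The plan is to prove the identity first for polynomials by a direct computation using the noncommutative product rule together with cyclicity of the trace, and then extend to continuously differentiable $f$ by uniform polynomial approximation. As a preliminary reduction, replacing $\bm{A}$ by $\bm{A}+t_0\bm{X}$ reduces the task to the case $t_0 = 0$, so I need only establish
\[
\left.\frac{\mathrm{d}}{\mathrm{d}t}\Tr f(\bm{A}+t\bm{X})\right|_{t=0} = \Tr[\bm{X}\,f'(\bm{A})].
\]

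First I would handle monomials $f(x) = x^n$ for $n \geq 1$. Expanding $(\bm{A}+t\bm{X})^n$ in powers of $t$ yields a first-order contribution $t\sum_{j=0}^{n-1}\bm{A}^{j}\bm{X}\bm{A}^{n-1-j}$ (higher-order terms are $O(t^2)$). By the cyclic property of the trace, each of the $n$ summands satisfies $\Tr[\bm{A}^{j}\bm{X}\bm{A}^{n-1-j}] = \Tr[\bm{X}\bm{A}^{n-1}]$, so the derivative at $t=0$ equals $n\,\Tr[\bm{X}\bm{A}^{n-1}] = \Tr[\bm{X}\,f'(\bm{A})]$. Linearity extends this identity to all polynomials.

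For general $C^{1}$ data I would choose a compact interval $J \subset I$ containing an open neighborhood of the spectrum of $\bm{A}+t\bm{X}$ for every $t$ in some neighborhood of $0$; this is possible because the eigenvalues depend continuously on $t$. The Weierstrass approximation theorem, applied to $f'$ on $J$ (with a primitive giving polynomial approximations to $f$), produces polynomials $p_k$ with $p_k \to f$ and $p_k' \to f'$ uniformly on $J$. The H\"older-type estimate for the trace gives
\[
\bigl|\Tr[\bm{X}\,(p_k'-f')(\bm{A}+t\bm{X})]\bigr| \le \|\bm{X}\|_{1}\,\bigl\|p_k'-f'\bigr\|_{C(J)},
\]
and an analogous bound without $\bm{X}$ controls $\bigl|\Tr (p_k-f)(\bm{A}+t\bm{X})\bigr|$. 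Both bounds are uniform in $t$ on the chosen neighborhood of $0$, so the polynomial identity $\frac{\mathrm{d}}{\mathrm{d}t}\Tr p_k(\bm{A}+t\bm{X}) = \Tr[\bm{X}\,p_k'(\bm{A}+t\bm{X})]$ passes to the limit.

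The main obstacle, as usual in such arguments, is justifying the interchange of derivative and uniform limit in this last step. It is handled by the standard theorem which asserts that if a sequence of differentiable functions converges pointwise and the derivatives converge uniformly, then the limit is differentiable and its derivative is the limit of the derivatives. Applying this with the bounds above yields $\frac{\mathrm{d}}{\mathrm{d}t}\Tr f(\bm{A}+t\bm{X})\big|_{t=0} = \lim_k \Tr[\bm{X}\,p_k'(\bm{A})] = \Tr[\bm{X}\,f'(\bm{A})]$, completing the argument.
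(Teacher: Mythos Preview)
Your argument is correct and is essentially the standard proof of this result: establish the identity for monomials via the noncommutative Leibniz rule and cyclicity of the trace, extend by linearity to polynomials, and then pass to general $C^1$ functions by uniform approximation of $f'$ together with the classical theorem on differentiating uniform limits. The paper itself does not supply a proof of this proposition; it is simply quoted from \cite[Theorem 2.2]{HP95}, so there is nothing to compare against beyond noting that your approach matches the usual textbook proof.
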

Proposition~\ref{Prop_trace_Petz} directly leads to the following lemma.
\begin{lemm} \label{lemm:trace_Petz}
	Let $ \bm{A}, \bm{X}, \bm{Y}\in\mathbb{M}^{sa}$ and $t\in\mathbb{R}$. Assume $f:I\to \mathbb{R}$ is a continuously differentiable function defined on interval $I$, and assume that the eigenvalues of $ \bm{A}+t\bm{X} \subset I$. Then
	\begin{align*}
	\Tr ( \mathsf{D}^2 f [ \bm{A} ] ( \bm{X} , \bm{Y} ) )  
	&= 	\left\langle \bm{X}, \mathsf{D} f' [ \bm{A} ] ( \bm{Y} ) \right\rangle 
	= \left\langle \bm{Y}, \mathsf{D} f' [ \bm{A} ] ( \bm{X} ) \right\rangle.
	\end{align*}
\end{lemm}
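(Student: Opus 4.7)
The plan is to apply Proposition~\ref{Prop_trace_Petz} twice, treating the second-order Fr\'{e}chet derivative as an iterated directional derivative of the scalar function $g(s,t) \triangleq \Tr f(\bm{A} + s\bm{Y} + t\bm{X})$, which is well-defined for $(s,t)$ in a sufficiently small neighbourhood of $(0,0)$ by the spectral containment hypothesis. Applying Proposition~\ref{Prop_trace_Petz} in the direction $\bm{X}$ at the base point $\bm{A} + s\bm{Y}$ yields
\[
\frac{\partial g}{\partial t}\bigg|_{t=0} = \Tr\bigl[\bm{X}\, f'(\bm{A} + s\bm{Y})\bigr].
\]

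Differentiating once more, this time in $s$ at $s=0$, and using the linearity of $\Tr$ together with the definition of the Fr\'{e}chet derivative of the standard matrix function $f'$, I obtain
\[
\frac{\partial^2 g}{\partial s\,\partial t}\bigg|_{(0,0)} = \Tr\bigl[\bm{X}\, \mathsf{D} f'[\bm{A}](\bm{Y})\bigr] = \langle \bm{X}, \mathsf{D} f'[\bm{A}](\bm{Y})\rangle,
\]
where $\bm{X}^\dagger = \bm{X}$ is used in the last step. On the other hand, because $\Tr$ is a bounded linear functional, the chain rule (Proposition~\ref{prop:properties}) lets it commute with Fr\'{e}chet differentiation, so the same mixed partial derivative must equal $\Tr(\mathsf{D}^2 f[\bm{A}](\bm{X}, \bm{Y}))$---either by applying the chain rule to the composition of the affine map $(s,t)\mapsto \bm{A}+s\bm{Y}+t\bm{X}$ with $\Tr\circ f$, or by reading off the coefficient of $st$ in the second-order Taylor expansion of $g$ at $(0,0)$. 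This yields the first equality.

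For the second equality, I would repeat the computation with the roles of $\bm{X}$ and $\bm{Y}$ interchanged, or equivalently invoke Schwarz's theorem on the symmetry of the mixed partial derivatives of $g$; either route produces $\langle \bm{Y}, \mathsf{D} f'[\bm{A}](\bm{X})\rangle$. This symmetry can also be interpreted as the self-adjointness of the super-operator $\mathsf{D} f'[\bm{A}]$ with respect to the Hilbert-Schmidt inner product, which is an immediate consequence of the symmetry of the bilinear form $\mathsf{D}^2 f[\bm{A}]$. The only mildly technical point is the tacit $C^2$-regularity of $f$ needed for $\mathsf{D}^2 f[\bm{A}]$ to exist as a continuous bilinear map---Proposition~\ref{Prop_trace_Petz} by itself only delivers $C^1$---but since $\mathsf{D}^2 f[\bm{A}]$ appears in the statement this hypothesis is implicit and no deeper obstacle is anticipated.
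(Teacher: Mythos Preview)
Your proof is correct and is precisely the argument the paper intends: the paper offers no explicit proof beyond the remark that Proposition~\ref{Prop_trace_Petz} ``directly leads to'' the lemma, and your two-step differentiation of $g(s,t)=\Tr f(\bm{A}+s\bm{Y}+t\bm{X})$ is the natural way to unpack that remark. Your observation about the implicit $C^2$ hypothesis is also apt.
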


\begin{lemm}[Second-Order Fr\'{e}chet Derivative of Inversion Function] \label{lemm:2inversion}
	Let $\mathcal{G}:\mathbb{M}\rightarrow \mathbb{M}$ be second-order Fr\'{e}chet differentiable at $\bm{A}\in\mathbb{M}$, and $\mathcal{G}(\bm{A})$ be invertible.
	Then, for each $\bm{h}, \bm{k} \in\mathbb{M}$, we have
	\begin{align*}
	\mathsf{D} \mathcal{G}[\bm{A}]^{-1}(\bm{h}) &= - \mathcal{G} (\bm{A})^{-1} \cdot  \mathsf{D} \mathcal{G} [\bm{A}](\bm{h}) \cdot \mathcal{G}(\bm{A})^{-1};\\
	\mathsf{D}^2 \mathcal{G} [\bm{A}]^{-1}(\bm{h},\bm{k}) 
	&= 2 \cdot\mathcal{G} (\bm{A})^{-1} \cdot   \mathsf{D} \mathcal{G} [\bm{A}](\bm{k})  \cdot \mathcal{G} (\bm{A})^{-1} \cdot  \mathsf{D} \mathcal{G} [\bm{A}](\bm{k})  \cdot \mathcal{G} (\bm{A})^{-1}\\
	&-\mathcal{G}(\bm{A})^{-1} \cdot \mathsf{D}^2 \mathcal{G}[\bm{A}](\bm{h},\bm{k}) \cdot \mathcal{G}(\bm{A})^{-1}.
	\end{align*}
\end{lemm}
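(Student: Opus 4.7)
The plan is to proceed by direct differentiation, first establishing the single-derivative formula by implicit differentiation of the identity $\mathcal{G}(\bm{A})\cdot \mathcal{G}(\bm{A})^{-1}=\bm{I}$, and then obtaining the second-derivative formula by differentiating the resulting expression once more and invoking the Leibniz (product) rule for Fr\'echet derivatives together with the chain rule.

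For the first statement, I would apply $\mathsf{D}[\,\cdot\,](\bm{h})$ at $\bm{A}$ to both sides of $\mathcal{G}(\bm{A})\cdot\mathcal{G}(\bm{A})^{-1}=\bm{I}$. The right-hand side vanishes, and the left-hand side splits into two terms by the product rule, giving
\[
\mathsf{D}\mathcal{G}[\bm{A}](\bm{h})\cdot\mathcal{G}(\bm{A})^{-1}+\mathcal{G}(\bm{A})\cdot\mathsf{D}(\mathcal{G}^{-1})[\bm{A}](\bm{h})=\bm{0}.
\]
Left-multiplying by $\mathcal{G}(\bm{A})^{-1}$ yields the claimed formula for $\mathsf{D}(\mathcal{G}^{-1})[\bm{A}](\bm{h})$.

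For the second statement, I would regard the map
\[
\bm{A}\longmapsto \mathsf{D}(\mathcal{G}^{-1})[\bm{A}](\bm{h})=-\mathcal{G}(\bm{A})^{-1}\cdot\mathsf{D}\mathcal{G}[\bm{A}](\bm{h})\cdot\mathcal{G}(\bm{A})^{-1}
\]
(with $\bm{h}$ fixed) as a product of three $\bm{A}$-dependent factors, and differentiate in the direction $\bm{k}$. By the product rule, this produces three contributions: differentiating each of the two outer factors $\mathcal{G}(\bm{A})^{-1}$ via the first-derivative formula already established (applied with direction $\bm{k}$), and differentiating the middle factor $\mathsf{D}\mathcal{G}[\bm{A}](\bm{h})$ to obtain $\mathsf{D}^2\mathcal{G}[\bm{A}](\bm{h},\bm{k})$ (with $\bm{h}$ held fixed in the first slot). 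Each of the two outer differentiations introduces an extra $\mathcal{G}(\bm{A})^{-1}\cdot\mathsf{D}\mathcal{G}[\bm{A}](\bm{k})\cdot\mathcal{G}(\bm{A})^{-1}$ factor with an overall minus sign that combines with the leading minus sign to give a plus. Setting $\bm{h}=\bm{k}$ (as in the stated quadratic-form version of the formula) the two outer contributions coincide and yield the factor $2$, while the middle contribution produces the term $-\mathcal{G}(\bm{A})^{-1}\cdot\mathsf{D}^2\mathcal{G}[\bm{A}](\bm{h},\bm{k})\cdot\mathcal{G}(\bm{A})^{-1}$.

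There is no genuine obstacle here; the calculation is a bookkeeping exercise. The only point that requires a little care is to verify that the Leibniz rule applies to the three-factor product $\mathcal{L}_1(\bm{A})\mathcal{L}_2(\bm{A})\mathcal{L}_3(\bm{A})$ of Fr\'echet-differentiable matrix-valued maps, which follows by writing the product as a bilinear composition and invoking Proposition~\ref{prop:properties} (chain rule) twice. One should also record that the invertibility of $\mathcal{G}(\bm{A})$ extends to a neighbourhood of $\bm{A}$ by continuity of $\mathcal{G}$, so that $\mathcal{G}(\cdot)^{-1}$ is well-defined and twice Fr\'echet differentiable there, which is exactly what is needed to justify differentiating a second time.
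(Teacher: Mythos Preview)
Your proposal is correct. The route you take differs slightly from the paper's. The paper writes $\mathcal{G}(\cdot)^{-1}$ as the composition $\mathcal{F}\circ\mathcal{G}$ with $\mathcal{F}(\bm{X})=\bm{X}^{-1}$, then applies the first- and second-order chain rules (Proposition~\ref{prop:properties}) and plugs in the known Fr\'echet derivatives of the inversion map $\mathcal{F}$, namely $\mathsf{D}\mathcal{F}[\bm{X}](\bm{Y})=-\bm{X}^{-1}\bm{Y}\bm{X}^{-1}$ and $\mathsf{D}^2\mathcal{F}[\bm{X}](\bm{Y}_1,\bm{Y}_2)=\bm{X}^{-1}\bm{Y}_1\bm{X}^{-1}\bm{Y}_2\bm{X}^{-1}+\bm{X}^{-1}\bm{Y}_2\bm{X}^{-1}\bm{Y}_1\bm{X}^{-1}$, which it cites from Bhatia and Hiai--Petz. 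Your argument instead derives everything from scratch by implicitly differentiating $\mathcal{G}(\bm{A})\mathcal{G}(\bm{A})^{-1}=\bm{I}$ and then applying the product rule to the resulting first-order formula. Your approach is more self-contained (no external citations needed) and in fact reproduces the cited inversion formulas along the way; the paper's approach is more modular. You also correctly flag that the factor $2$ in the stated second-order formula arises only after specialising to $\bm{h}=\bm{k}$, which is indeed the only form in which the lemma is used later in the paper.
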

\begin{proof}
	Denote $\mathcal{F}:\bm{A} \mapsto \bm{A}^{-1}$ as the inversion function. Recall the chain rule of the Fr\'{e}chet derivative:
	\begin{align*}
	\mathsf{D} \mathcal{F}\circ \mathcal{G} [\bm{A}](\bm{h}) &= \mathsf{D}\mathcal{F}[\mathcal{G}(\bm{A})] \left(\mathsf{D} \mathcal{G} [\bm{A}](\bm{h})\right);\\
	\quad\mathsf{D}^2 \mathcal{F} \circ \mathcal{G} [\bm{A}](\bm{h},\bm{k}) &= \mathsf{D}^2 \mathcal{F} [\mathcal{G}(\bm{A})] \left(\mathsf{D} \mathcal{G} [\bm{A}](\bm{k}),\mathsf{D} \mathcal{G} [\bm{A}](\bm{k})\right)+ \mathsf{D} \mathcal{F}[ \mathcal{G} (\bm{A})]\left( \mathsf{D}^2 \mathcal{G} [\bm{A}](\bm{h},\bm{k})\right).
	\end{align*}
	By applying the formulas of the Fr\'echet derivative of the inversion function (see e.g.~\cite[Example X.4.2]{Bha97}, and \cite[Exercise 3.27]{HP14}):
	\begin{align*}
	\mathsf{D}[\bm{X}]^{-1} (\bm{Y}) &= - \bm{X}^{-1} \bm{Y} \bm{X}^{-1}\\
	\mathsf{D}^2[\bm{X}]^{-1} (\bm{Y}_1, \bm{Y}_2) &=  \bm{X}^{-1} \bm{Y}_1 \bm{X}^{-1} \bm{Y}_2 \bm{X}^{-1} +
	\bm{X}^{-1} \bm{Y}_2 \bm{X}^{-1} \bm{Y}_1 \bm{X}^{-1}.
	\end{align*}
	concludes the desired results.
	
\end{proof}

\begin{prop}
	[Operator Jensen's Inequality \cite{Dav57, Cho74,HP03,FZ07}] \label{prop:Jensen}
	Let $(\mathrm{\Omega},\mathrm{\Sigma})$ be a measurable space and suppose that $I\subseteq \mathbb{R}$ is an open interval. 
	Assume for every $x \in \Omega$, $\bm{K}(x)$ is a (finite or infinite dimensional) square matrix and satisfies 
	\[ \int_{x\in\Omega} \bm{K}(\mathrm{d}x) \bm{K}(\mathrm{d}x)^\dagger  = \bm{I} \] (identity matrix in $\mathbb{M}^\text{sa}$).
	If $\bm{f}:\mathrm{\Omega}\rightarrow \mathbb{M}^\text{sa}$ is a measurable function for which $\sigma(\bm{f}(x)) \subset I$, for every $x\in \Omega$, then
	\[
	\phi\left( \int_{x\in\Omega} \bm{K}(\mathrm{d}x) \bm{f}(x) \bm{K}(\mathrm{d}x)^\dagger  \right) \preceq  \int_{x\in\Omega} \bm{K}(\mathrm{d}x) \phi\left(\bm{f}(x)\right) \bm{K}(\mathrm{d}x)^\dagger \, \mu(\mathrm{d}x)
	\]
	for every operator convex function $\phi:I\rightarrow \mathbb{R}$.
	Moreover, 
	\[
	\Tr \left[ \phi\left( \int_{x\in\Omega}  \bm{K}(\mathrm{d}x) \bm{f}(x) \bm{K}(\mathrm{d}x)^\dagger \, \mu(\mathrm{d}x) \right) \right] \leq \Tr \left[ \int_{x\in\Omega} \bm{K}(\mathrm{d}x) \phi\left(\bm{f}(x)\right) \bm{K}(\mathrm{d}x)^\dagger \, \mu(\mathrm{d}x) \right]
	\]
	for every convex function $\phi:I\rightarrow \mathbb{R}$.
\end{prop}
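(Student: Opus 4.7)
My plan is to reduce the integral inequality to the classical Hansen--Pedersen--Jensen finite-sum inequality, and then lift the result to the integral setting by a Bochner approximation. The only substantive algebraic step sits in the finite case; the rest is standard measure-theoretic bookkeeping.

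First I would treat the special case in which $\bm{K}$ and $\bm{f}$ are simple functions supported on a common partition $\{E_{i}\}_{i=1}^{n}$, say $\bm{K}(x) = \sum_{i} \bm{K}_{i} \mathbf{1}_{E_{i}}(x)$ and $\bm{f}(x) = \sum_{i} \bm{A}_{i} \mathbf{1}_{E_{i}}(x)$ with $\sum_{i} \bm{K}_{i} \bm{K}_{i}^{\dagger} = \bm{I}$. The central trick is to stack the adjoints into an isometry: set $\bm{V} = ( \bm{K}_{1}^{\dagger}, \ldots, \bm{K}_{n}^{\dagger} )^{\top}$, so that $\bm{V}^{\dagger}\bm{V} = \bm{I}$, and let $\bm{B} = \bigoplus_{i} \bm{A}_{i}$ act on the dilated Hilbert space. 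Then $\sum_{i} \bm{K}_{i} \bm{A}_{i} \bm{K}_{i}^{\dagger} = \bm{V}^{\dagger} \bm{B} \bm{V}$, and the desired inequality collapses to the Hansen--Pedersen isometric statement $\phi(\bm{V}^{\dagger}\bm{B}\bm{V}) \preceq \bm{V}^{\dagger} \phi(\bm{B}) \bm{V}$, which is one of the standard characterisations of operator convexity of $\phi$ on $I$ (it also implies, and is implied by, the double block--matrix form of Choi).

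Next I would pass from finite sums to integrals. By Bochner measurability I can choose simple--function approximants $\bm{K}_{n}(x) \to \bm{K}(x)$ and $\bm{f}_{n}(x) \to \bm{f}(x)$ almost everywhere and in the appropriate norm, normalised so that $\int \bm{K}_{n}(x) \bm{K}_{n}(x)^{\dagger}\, \mu(\mathrm{d}x) = \bm{I}$ (the normalising scalars tend to $1$). The finite--sum inequality then applies for each $n$; taking $n \to \infty$ using continuity of $\phi$ on $I$ and dominated convergence (justified by the Bochner integrability of $|\bm{f}|$ and $|\phi(\bm{f})|$) yields the integral form. The trace version with merely convex (not necessarily operator convex) $\phi$ proceeds by the same stacking reduction, but in the final algebraic step I would appeal instead to the weaker inequality $\Tr \phi(\bm{V}^{\dagger}\bm{B}\bm{V}) \leq \Tr \bm{V}^{\dagger} \phi(\bm{B}) \bm{V}$ for an isometry $\bm{V}$. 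This in turn follows from the Cauchy interlacing theorem, applied to any unitary completion of $\bm{V}$, combined with the Schur convexity of $\bm{x} \mapsto \sum_{i} \phi(x_{i})$ for scalar convex $\phi$; equivalently, one can use the Legendre-transform representation $\Tr \phi(\bm{A}) = \sup_{\bm{Y}} \{\Tr \bm{A}\bm{Y} - \Tr \phi^{\ast}(\bm{Y})\}$ and interchange the supremum with the linear map $\bm{A} \mapsto \bm{V}^{\dagger}\bm{A}\bm{V}$.

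The hard part will be the measure-theoretic bookkeeping, particularly when $\Omega$ is uncountable and the target Hilbert space is infinite-dimensional: one has to build the isometric dilation on the direct--integral Hilbert space $\int^{\oplus} \mathcal{H}\, \mu(\mathrm{d}x)$ and verify that $\phi \circ \bm{f}$ remains Bochner integrable along the approximating sequence, and that the normalisation of the simple approximants of $\bm{K}$ does not distort the limit. All the genuinely operator--theoretic content, however, is already present in the finite case, and this strategy follows in spirit the approach of Hansen--Pedersen \cite{HP03} and the matrix--valued extension of Farenick--Zhou \cite{FZ07}.
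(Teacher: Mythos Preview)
The paper does not supply its own proof of this proposition: it is stated in Appendix~A as a known result, with citations to Davis, Choi, Hansen--Pedersen, and Farenick--Zhou, and no argument is given. So there is no ``paper's proof'' against which to compare.

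That said, your sketch is sound and is precisely in the spirit of the cited references. The isometric dilation $\bm{V} = (\bm{K}_1^\dagger,\ldots,\bm{K}_n^\dagger)^\top$ reducing the finite sum to $\phi(\bm{V}^\dagger \bm{B}\bm{V}) \preceq \bm{V}^\dagger \phi(\bm{B})\bm{V}$ is exactly the Hansen--Pedersen argument \cite{HP03}, and the trace version via interlacing/Schur majorisation (or the Legendre dual) is the standard route when $\phi$ is merely convex. The passage from finite sums to integrals via simple-function approximation and Bochner dominated convergence is the content of \cite{FZ07}; you have correctly flagged the genuine technicalities there (renormalising the approximants of $\bm{K}$, and integrability of $\phi\circ \bm{f}$ along the sequence). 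Nothing is missing at the level of ideas.
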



\printbibliography

\end{document}